\newtheorem{theorem}{Theorem}[section]
\newtheorem{lemma}[theorem]{Lemma}
\newtheorem{corollary}[theorem]{Corollary}
\theoremstyle{definition}
\newtheorem{definition}[theorem]{Definition}
\newtheorem{proposition}[theorem]{Proposition}
\theoremstyle{remark}
\newtheorem{remark}[theorem]{Remark}
\numberwithin{equation}{section}
\newcommand{\B}{\mathcal{B}}
\newcommand{\quat}{\mathbb H}
\newcommand{\mc}{\mathcal}
\newcommand{\be}{\begin{equation}}
\newcommand{\en}{\end{equation}}
\newcommand{\D}{{\mc D}}
\newcommand{\bedefin}{\begin{defi}}
	\newcommand{\findefi}{\end{defi} \medskip}
\newcommand{\betheo}{\begin{theorem}$\!\!${\bf \,\,\,}}
	\newcommand{\entheo}{\end{theorem}}
\newcommand{\enth}{\end{theorem}}
\newcommand{\becor}{\begin{cor}$\!\!${\bf .}}
	\newcommand{\encor}{\end{cor}}
\newcommand{\belem}{\begin{lem}$\!\!${\bf .}}
	\newcommand{\enlem}{\end{lem}}
\newcommand{\bea}{\begin{eqnarray}}
\newcommand{\ena}{\end{eqnarray}}
\newcommand{\beano}{\begin{eqnarray*}}
	\newcommand{\enano}{\end{eqnarray*}}
\newcommand{\bee}{\begin{enumerate}}
	\newcommand{\ene}{\end{enumerate}}
\newcommand{\bei}{\begin{itemize}}
	\newcommand{\eni}{\end{itemize}}
\newcommand{\betab}{\begin{tabular}}
	\newcommand{\entab}{\end{tabular}}
\newcommand{\bd}{\begin{displaymath}}
\newcommand{\Iop}{{\mathbb{I}_{V_{\mathbb{H}}^{R}}}}
\newcommand{\blam}{\mbox{\boldmath $\lambda$}}
\newcommand{\bmu}{\mbox{\boldmath $\mu$}}
\newcommand{\btau}{\mbox{\boldmath $\tau$}}
\newcommand{\bfrakq}{\mbox{\boldmath $\mathfrak q$}}
\newcommand{\bfrakx}{\mbox{\boldmath $\mathfrak x$}}
\newcommand{\bfraky}{\mbox{\boldmath $\mathfrak y$}}
\newcommand{\bfraka}{\mbox{\boldmath $\mathfrak a$}}
\newcommand{\bfrakb}{\mbox{\boldmath $\mathfrak b$}}
\newcommand{\bfrakp}{\mbox{\boldmath $\mathfrak p$}}
\newcommand{\bk}{\mathbf k}
\newcommand{\bi}{\mathbf i}
\newcommand{\bj}{\mathbf j}
\begin{document}
\title[Cayley Transform]{S-Spectrum and the quaternionic Cayley transform of an operator}
\author{B. Muraleetharan$^{\dagger}$, I. Sabadini$^*$, K. Thirulogasanthar$^{\ddagger}$}
\address{$^{\dagger}$ Department of mathematics and Statistics, University of Jaffna, Thirunelveli, Sri Lanka.}
\address{$^*$Dipartimento di Matematica, Politecnico di Milano, Via E. Bonardi, 9, 20133, Milano, Italy.}
\address{$^{\ddagger}$ Department of Computer Science and Software Engineering, Concordia University, 1455 De Maisonneuve Blvd. West, Montreal, Quebec, H3G 1M8, Canada.}
\email{bbmuraleetharan@jfn.ac.lk,  irene.sabadini@polimi.it and santhar@gmail.com. }
\subjclass{Primary 81R30, 46E22}
\date{\today}
\thanks{K. Thirulogasanthar would like to thank the FQRNT, Fonds de la Recherche  Nature et  Technologies (Quebec, Canada) for partial financial support under the grant number 2017-CO-201915. Part of this work was done while he was visiting the Politecnico di Milano to which he expresses his thanks for the hospitality. He also thanks the program Professori Visitatori GNSAGA, INDAM for the support during the period in which this paper was partially written.}
\date{\today}
\begin{abstract}
In this paper we define the quaternionic Cayley transformation of a densely defined, symmetric, quaternionic right linear operator and formulate a general theory of defect number in a right quaternionic Hilbert space. This study investigates the relation between the defect number and S-spectrum, and the properties of the Cayley transform in the quaternionic setting.
\end{abstract}
\keywords{Quaternions, Quaternionic Hilbert spaces, symmetric operator, Cayley transform, S-spectrum}
\maketitle
\pagestyle{myheadings}
%%%%%%%%%%%%%%%%%%%%%%%%%%%%%%%%%%%%%%%%%%%%%%%%%%%%%%%%%%%%%%%%%%%%%%%%
\section{Introduction}
Self-adjoint operators play an important role in the Dirac-von Neumann formulation of quantum mechanics.
In complex and in quaternionic quantum mechanics states are described by vectors of a separable complex (resp. quaternionic) Hilbert space and the observables are represented by self-adjoint operators on the respective Hilbert space. By Stone's theorem on one parameter unitary groups, self-adjoint operators are the infinitesimal generators of unitary groups of time evolution.\\

The self-adjointness in a Hilbert space is stronger than being symmetric. Even though the difference is a technical issue, it is very important. For example, the spectral theorem only applies to self-adjoint operators but not to symmetric operators. In this regard, the following question arises in several contexts: if an operator $A$ on a Hilbert space is symmetric, when does it have self-adjoint extensions? In the complex case, an answer is provided by the Cayley transform of a self-adjoint operator and the deficiency indices.\\

Due to the non-commutativity, in the quaternionic case  there are three types of  Hilbert spaces: left, right, and two-sided, depending on how vectors are multiplied by scalars. This fact can entail several problems. For example, when a Hilbert space $\mathcal H$ is one-sided (either left or right) the set of linear operators acting on it does not have a linear structure. Moreover, in a one sided quaternionic Hilbert space, given a linear operator $T$ and a quaternion $\mathfrak{q}\in\quat$, in general we have that $(\mathfrak{q} T)^{\dagger}\not=\overline{\mathfrak{q}} T^{\dagger}$ (see \cite{Mu} for details). These restrictions can severely prevent the generalization  to the quaternionic case of results valid in the complex setting. Even though most of the linear spaces are one-sided, it is possible to introduce a notion of multiplication on both sides by fixing an arbitrary Hilbert basis of $\mathcal H$.  This fact allows to have a linear structure on the set of linear operators, which is a minimal requirement to develop a full theory. Thus, the framework of this paper is a right quaternionic Hilbert space equipped with a left multiplication, introduced by fixing a Hilbert basis. As in the complex case, one may introduce a suitable notion of Cayley type transform of symmetric linear operators. The idea of considering Cayley transform of linear operators is due to von Neumann \cite{VonN} who formally replaced the variable in a Cayley transform by a symmetric operator. The idea was further extended to other types of linear operators but always with the purpose of getting information of the given operator by studying the properties of its Cayley transform. A quaternionic Cayley transform of linear operators appeared in \cite{Vas0, Vas}; however, the type of transform and the underlying notion of spectrum differ from the one treated in this paper.

In this paper, we define the Cayley transform of densely defined symmetric operators  satisfying suitable assumptions. We will prove that this notion of Cayley transform possesses several properties, in particular it is an isometry and allows to prove a characterization of  self-adjointness.

The plan of the paper is the following. The paper consists of four sections, besides the Introduction. In Section 2 we collect some preliminary notations and results on quaternions, quaternionic Hilbert spaces and Hilbert  bases. In Section 3 we introduce right linear operators and some of their properties, the left multiplication, we introduce the notion of deficiency subspace and defect number of an operator at a point also proving some new results in this framework. In Section 4  we study the deficiency indices of isometric operators and we define the notion of quaternionic Cayley transform for  a linear symmetric operator (satisfying suitable hypotheses) and study its main properties. In particular, we show that a linear operator is self-adjoint if and only if its Cayley transform is unitary. In the fifth and last Section, we show that the Cayley transform that we have defined based on the choice of a Hilbert basis, in order to have a left multiplication and thus a two-sided Hilbert space, in fact does not depend on this choice.
%%%%%%%%%%%%%%%%%%%%%%%%%%%%%%%%%%%%%%%%%%%%%%%%%%%%%%%%%%%%%%%%%%%%%%
\section{Mathematical preliminaries}
In order to make the paper self-contained, we recall some facts about quaternions which may not be well-known.  For details we refer the reader to \cite{Ad,ghimorper,Vis}.
\subsection{Quaternions}
Let $\quat$ denote the field of all quaternions and $\quat^*$ the group (under quaternionic multiplication) of all invertible quaternions. A general quaternion can be written as
$$\bfrakq = q_0 + q_1 \bi + q_2 \bj + q_3 \bk, \qquad q_0 , q_1, q_2, q_3 \in \mathbb R, $$
where $\bi,\bj,\bk$ are the three quaternionic imaginary units, satisfying
$\bi^2 = \bj^2 = \bk^2 = -1$ and $\bi\bj = \bk = -\bj\bi,  \; \bj\bk = \bi = -\bk\bj,
\; \bk\bi = \bj = - \bi\bk$. The quaternionic conjugate of $\bfrakq$ is
$$ \overline{\bfrakq} = q_0 - \bi q_1 - \bj q_2 - \bk q_3 , $$
while $\vert \bfrakq \vert=(\bfrakq \overline{\bfrakq})^{1/2} $ denotes the usual norm of the quaternion $\bfrakq$.
If $\bfrakq$ is non-zero element, it has inverse
$
\bfrakq^{-1} =  \frac {\overline{\bfrakq}}{\vert \bfrakq \vert^2 }.$
Finally, the set
\begin{eqnarray*}
\mathbb{S}&=&\{I=x_1 \bi+x_2\bj+x_3\bk~\vert
~x_1,x_2,x_3\in\mathbb{R},~x_1^2+x_2^2+x_3^2=1\},
\end{eqnarray*}
contains all the elements whose square is $-1$. It is a $2$-dimensional sphere in $\mathbb H$ identified with $\mathbb R^4$.
%%%%%%%%%%%%%%%%%%%%%%%%%%%%%%%%%%%%%%%%%%%%%%%%%%%%%%%%%
\subsection{Quaternionic Hilbert spaces}
In this subsection we  discuss right quaternionic Hilbert spaces. For more details we refer the reader to \cite{Ad,ghimorper,Vis}.
\subsubsection{Right quaternionic Hilbert Space}
Let $V_{\quat}^{R}$ be a vector space under right multiplication by quaternions.  For $\phi,\psi,\omega\in V_{\quat}^{R}$ and $\bfrakq\in \quat$, the inner product
$$\langle\cdot\mid\cdot\rangle:V_{\quat}^{R}\times V_{\quat}^{R}\longrightarrow \quat$$
satisfies the following properties
\begin{enumerate}
	\item[(i)]
	$\overline{\langle \phi\mid \psi\rangle}=\langle \psi\mid \phi\rangle$
	\item[(ii)]
	$\|\phi\|^{2}=\langle \phi\mid \phi\rangle>0$ unless $\phi=0$, a real norm
	\item[(iii)]
	$\langle \phi\mid \psi+\omega\rangle=\langle \phi\mid \psi\rangle+\langle \phi\mid \omega\rangle$
	\item[(iv)]
	$\langle \phi\mid \psi\bfrakq\rangle=\langle \phi\mid \psi\rangle\bfrakq$
	\item[(v)]
	$\langle \phi\bfrakq\mid \psi\rangle=\overline{\bfrakq}\langle \phi\mid \psi\rangle$
\end{enumerate}
where $\overline{\bfrakq}$ stands for the quaternionic conjugate. It is always assumed that the
space $V_{\quat}^{R}$ is complete under the norm given above and separable. Then,  together with $\langle\cdot\mid\cdot\rangle$ this defines a right quaternionic Hilbert space. Quaternionic Hilbert spaces share many of the standard properties of complex Hilbert spaces.

The next two Propositions can be established following the proof of their complex counterparts, see e.g. \cite{ghimorper,Vis}.
\begin{proposition}\label{P1}
Let $\mathcal{O}=\{\varphi_{k}\,\mid\,k\in N\}$
be an orthonormal subset of $V_{\quat}^{R}$, where $N$ is a countable index set. Then following conditions are pairwise equivalent:
\begin{itemize}
\item [(a)] The closure of the linear combinations of elements in $\mathcal O$ with coefficients on the right is $V_{\quat}^{R}$.
\item [(b)] For every $\phi,\psi\in V_{\quat}^{R}$, the series $\sum_{k\in N}\langle\phi\mid\varphi_{k}\rangle\langle\varphi_{k}\mid\psi\rangle$ converges absolutely and it holds:
$$\langle\phi\mid\psi\rangle=\sum_{k\in N}\langle\phi\mid\varphi_{k}\rangle\langle\varphi_{k}\mid\psi\rangle.$$
\item [(c)] For every  $\phi\in V_{\quat}^{R}$, it holds:
$$\|\phi\|^{2}=\sum_{k\in N}\mid\langle\varphi_{k}\mid\phi\rangle\mid^{2}.$$
\item [(d)] $\mathcal{O}^{\bot}=\{0\}$.
\end{itemize}
\end{proposition}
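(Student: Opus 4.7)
The plan is to establish the circular chain of implications $(a)\Rightarrow(b)\Rightarrow(c)\Rightarrow(d)\Rightarrow(a)$, following the familiar complex Hilbert space argument but being careful that all scalar coefficients multiply vectors from the right, as prescribed by the inner product properties (iv) and (v).

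For $(a)\Rightarrow(b)$, I would first set up a Bessel-type inequality. For any finite $F\subset N$, expanding the nonnegative quantity $\|\phi-\sum_{k\in F}\varphi_{k}\langle\varphi_{k}\mid\phi\rangle\|^{2}$ using (i)--(v) and orthonormality collapses all cross terms and yields $\|\phi\|^{2}-\sum_{k\in F}|\langle\varphi_{k}\mid\phi\rangle|^{2}\ge 0$; letting $F$ exhaust $N$ gives $\sum_{k}|\langle\varphi_{k}\mid\phi\rangle|^{2}\le\|\phi\|^{2}$. Next, assumption (a) implies that the partial sums $\phi_{n}=\sum_{k=1}^{n}\varphi_{k}\langle\varphi_{k}\mid\phi\rangle$, which are the orthogonal projections of $\phi$ onto the finite-dimensional right-subspace spanned by $\varphi_{1},\dots,\varphi_{n}$ (hence the best approximations there), converge to $\phi$. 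Taking the corresponding $\psi_{n}$ and using orthonormality, $\langle\phi_{n}\mid\psi_{n}\rangle=\sum_{k=1}^{n}\overline{\langle\varphi_{k}\mid\phi\rangle}\langle\varphi_{k}\mid\psi\rangle=\sum_{k=1}^{n}\langle\phi\mid\varphi_{k}\rangle\langle\varphi_{k}\mid\psi\rangle$; passing to the limit (using continuity of $\langle\cdot\mid\cdot\rangle$) yields the Parseval identity. Absolute convergence follows from the scalar Cauchy--Schwarz inequality applied to the nonnegative sequences $(|\langle\phi\mid\varphi_{k}\rangle|)$ and $(|\langle\varphi_{k}\mid\psi\rangle|)$ together with Bessel.

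The remaining implications are short. For $(b)\Rightarrow(c)$, set $\psi=\phi$ and observe $\langle\phi\mid\varphi_{k}\rangle\langle\varphi_{k}\mid\phi\rangle=\overline{\langle\varphi_{k}\mid\phi\rangle}\langle\varphi_{k}\mid\phi\rangle=|\langle\varphi_{k}\mid\phi\rangle|^{2}$. For $(c)\Rightarrow(d)$, if $\phi\in\mathcal{O}^{\bot}$ then every summand in (c) vanishes, forcing $\|\phi\|=0$. For $(d)\Rightarrow(a)$, let $M$ be the closure of the set of finite right-linear combinations of elements of $\mathcal{O}$. Invoking the orthogonal decomposition $V_{\quat}^{R}=M\oplus M^{\bot}$ for closed right-linear subspaces (the quaternionic projection theorem, available from e.g.\ \cite{ghimorper,Vis}), any $\phi$ splits as $\phi_{1}+\phi_{2}$ with $\phi_{2}\in M^{\bot}\subseteq\mathcal{O}^{\bot}=\{0\}$; hence $\phi=\phi_{1}\in M$, proving $M=V_{\quat}^{R}$.

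The main technical care point, rather than an obstacle, is the bookkeeping with right coefficients: property (v) introduces a conjugate when a scalar moves out of the left slot of $\langle\cdot\mid\cdot\rangle$, so in expanding $\langle\phi_{n}\mid\psi_{n}\rangle$ the factors must stay in the correct order to collapse to $\sum_{k}\langle\phi\mid\varphi_{k}\rangle\langle\varphi_{k}\mid\psi\rangle$ rather than the reversed product. Once this is handled, the argument is essentially identical to its complex analogue, which is exactly why Proposition \ref{P1} can be stated as in the commutative case.
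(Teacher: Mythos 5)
Your argument is correct, and it is precisely what the paper intends: the paper gives no proof of Proposition \ref{P1}, stating only that it ``can be established following the proof of the complex counterparts'' with a reference to \cite{ghimorper,Vis}, and your chain $(a)\Rightarrow(b)\Rightarrow(c)\Rightarrow(d)\Rightarrow(a)$ is exactly that standard adaptation, with the right-coefficient bookkeeping and the quaternionic projection theorem correctly invoked where needed. No discrepancy to report.
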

\begin{definition}
The set $\mathcal{O}$ as in Proposition \ref{P1} is called a {\em Hilbert basis} of $V_{\quat}^{R}$.
\end{definition}
\begin{proposition}\label{P2}
Every quaternionic Hilbert space $V_{\quat}^{R}$ has a Hilbert basis. All the Hilbert bases of $V_{\quat}^{R}$ have the same cardinality.

Furthermore, if $\mathcal{O}$ is a Hilbert basis of $V_{\quat}^{R}$, then every  $\phi\in V_{\quat}^{R}$ can be uniquely decomposed as follows:
$$\phi=\sum_{k\in N}\varphi_{k}\langle\varphi_{k}\mid\phi\rangle,$$
where the series $\sum_{k\in N}\varphi_k\langle\varphi_{k}\mid\phi\rangle$ converges absolutely in $V_{\quat}^{R}$.
\end{proposition}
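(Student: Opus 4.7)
The proof splits into three claims: (i) existence of a Hilbert basis, (ii) invariance of cardinality, and (iii) the Fourier-type expansion. My plan is to reduce each claim to either Zorn's lemma, separability, or Proposition \ref{P1}, mimicking the complex-case arguments while being careful with right scalar multiplication and with property (v) of the inner product.

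For existence, I would apply Zorn's lemma to the poset of orthonormal subsets of $V_{\quat}^{R}$ ordered by inclusion. The union of a totally ordered chain is again orthonormal (orthonormality is a property of pairs), providing an upper bound; Zorn yields a maximal orthonormal set $\mathcal{O}$. Maximality forces $\mathcal{O}^{\bot}=\{0\}$: if there existed a nonzero $\phi\in\mathcal{O}^{\bot}$, then $\phi/\|\phi\|$ could be adjoined to $\mathcal{O}$ contradicting maximality. By clause (d) of Proposition \ref{P1}, $\mathcal{O}$ is a Hilbert basis (the trivial case $V_{\quat}^{R}=\{0\}$ being handled by the empty set). For the cardinality statement, I would use the assumed separability of $V_{\quat}^{R}$: distinct elements of any orthonormal set satisfy $\|\varphi_i-\varphi_j\|=\sqrt{2}$, so such a set is at most countable (an uncountable $\sqrt{2}$-separated set contradicts the existence of a countable dense subset). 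If one basis is finite of size $n$, linear independence of orthonormal sets over $\quat$ together with clause (a) of Proposition \ref{P1} forces every other basis to have the same size; otherwise all bases are countably infinite.

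For the decomposition, set $c_k:=\langle\varphi_{k}\mid\phi\rangle\in\quat$. By Proposition \ref{P1}(c), $\sum_{k\in N}|c_k|^{2}=\|\phi\|^{2}<\infty$. Form the partial sums $S_n=\sum_{k=1}^{n}\varphi_{k}c_{k}$; by orthonormality and properties (iv)–(v),
\[
\|S_m-S_n\|^{2}=\Big\langle \sum_{k=n+1}^{m}\varphi_{k}c_{k}\,\Big|\,\sum_{j=n+1}^{m}\varphi_{j}c_{j}\Big\rangle=\sum_{k=n+1}^{m}\overline{c_{k}}c_{k}=\sum_{k=n+1}^{m}|c_{k}|^{2}\to 0,
\]
so $(S_n)$ is Cauchy and converges by completeness to some $\psi\in V_{\quat}^{R}$. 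Continuity of the inner product plus (iv)–(v) gives $\langle\varphi_{j}\mid\psi\rangle=c_{j}=\langle\varphi_{j}\mid\phi\rangle$ for all $j$, whence $\phi-\psi\in\mathcal{O}^{\bot}=\{0\}$ by Proposition \ref{P1}(d), proving $\phi=\psi$. Uniqueness follows by applying $\langle\varphi_{j}\mid\cdot\rangle$ term-by-term to any expansion $\phi=\sum_{k}\varphi_{k}d_{k}$.

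The main subtlety, and the point I would stress, is the handling of non-commutativity: coefficients must always sit to the right of the basis vectors so that property (iv) produces the expected quaternions after pairing with a $\varphi_j$; had one written $c_k\varphi_k$, property (v) would introduce a conjugate and the identification $\langle\varphi_j\mid S_n\rangle=c_j$ would fail. A secondary point is the word ``absolutely'' in the statement: $\sum\|\varphi_k c_k\|=\sum|c_k|$ need not converge, only $\sum|c_k|^{2}$ does, so I would interpret ``absolutely'' as ``unconditionally'', which is automatic because the Cauchy estimate above is invariant under rearrangement of $N$.
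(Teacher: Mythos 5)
Your proof is correct and is exactly the route the paper intends: the paper gives no argument for Proposition \ref{P2}, saying only that it ``can be established following the proof of their complex counterparts'' (citing \cite{ghimorper,Vis}), and your Zorn's-lemma / separability / Cauchy-partial-sum argument is precisely that standard adaptation, with the scalars correctly kept on the right so that properties (iv)--(v) give $\langle\varphi_j\mid S_n\rangle=c_j$. Your reading of ``absolutely'' as unconditional convergence is also the right one (and matches the convention of the cited reference), since $\sum_k|\langle\varphi_k\mid\phi\rangle|$ need not converge while the Cauchy estimate is rearrangement-invariant.
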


It should be noted that once a Hilbert basis is fixed, every left (resp. right) quaternionic Hilbert space also becomes a right (resp. left) quaternionic Hilbert space \cite{ghimorper,Vis}. See next section 3.2 for more details.

The field of quaternions $\quat$ itself can be turned into a left quaternionic Hilbert space by defining the inner product $\langle \bfrakq \mid \bfrakq^\prime \rangle = \bfrakq \overline{\bfrakq^{\prime}}$ or into a right quaternionic Hilbert space with  $\langle \bfrakq \mid \bfrakq^\prime \rangle = \overline{\bfrakq}\bfrakq^\prime$.
%%%%%%%%%%%%%%%%%%%%%%%%%%%%%%%%%%%%%%%%%%%%%%%%%%%%%%%%%%%%%%%%
%%%%%%%%%%%%%%%%%%%%%%%%%%%%%%%%%%%%%%%%%%%%%%%%%%%%%%%%%%%%%%%%%%%%%%%%%%%%%%%%%%
\section{Right quaternionic linear  operators and some basic properties}
In this section we shall define right  $\quat$-linear operators and recall some basis properties. Most of them are very well known. In this manuscript, we follow the notations in \cite{AC} and \cite{ghimorper}. We shall also prove some results pertinent to the development of the paper. To the best of our knowledge the results we prove in sections 3.3 and 3.4 do not appear in the literature.
\begin{definition}
A mapping $A:\D(A)\subseteq V_{\quat}^R \longrightarrow V_{\quat}^R$, where $\D(A)$ stands for the domain of $A$, is said to be right $\quat$-linear operator or, for simplicity, right linear operator, if
$$A(\phi\bfraka+\psi\bfrakb)=(A\phi)\bfraka+(A\psi)\bfrakb,~~\mbox{~if~}~~\phi,\,\psi\in \D(A)~~\mbox{~and~}~~\bfraka,\bfrakb\in\quat.$$
\end{definition}
The set of all right linear operators will be denoted by $\mathcal{L}(V_{\quat}^{R})$ and the identity linear operator on $V_{\quat}^{R}$ will be denoted by $\Iop$. For a given $A\in \mathcal{L}(V_{\quat}^{R})$, the range and the kernel will be
\begin{eqnarray*}
\text{ran}(A)&=&\{\psi \in V_{\quat}^{R}~|~A\phi =\psi \quad\text{for}~~\phi \in\D(A)\}\\
\ker(A)&=&\{\phi \in\D(A)~|~A\phi =0\}.
\end{eqnarray*}
We call an operator $A\in \mathcal{L}(V_{\quat}^{R})$ bounded if
\begin{equation*}
\|A\|=\sup_{\|\phi \|=1}\|A\phi \|<\infty,
\end{equation*}
or equivalently, there exist $K\geq 0$ such that $\|A\phi \|\leq K\|\phi \|$ for all $\phi \in\D(A)$. The set of all bounded right linear operators will be denoted by $\B(V_{\quat}^{R})$.
\\
Assume that $V_{\quat}^{R}$ is a right quaternionic Hilbert space, $A$ is a right linear operator acting on it.
Then, there exists a unique linear operator $A^{\dagger}$ such that
\begin{equation}\label{Ad1}
\langle \psi \mid A\phi \rangle=\langle A^{\dagger} \psi \mid\phi \rangle;\quad\text{for all}~~~\phi \in \D (A), \psi\in\D(A^\dagger),
\end{equation}
where the domain $\D(A^\dagger)$ of $A^\dagger$ is defined by
$$
\D(A^\dagger)=\{\psi\in V_{\quat}^{R}\ |\ \exists \varphi\ {\rm such\ that\ } \langle \psi \mid A\phi \rangle=\langle \varphi \mid\phi \rangle\}.
$$
\subsection{Symmetry and Self-adjointness}
Let $A:\D(A)\longrightarrow V_{\quat}^R$ and $B:\D(B)\longrightarrow V_{\quat}^R$ be quaternionic linear operators. As usual, we write $A\subset B$ if $D(A)\subset\D(B)$ and $B\vert_{\D(A)}=A$. In this case, $B$ is said to be an extension of $A$.
\begin{definition}
A right linear operator $A:\D(A)\longrightarrow V_{\quat}^R$ with dense domain is said to be
\begin{itemize}
\item[(a)] \textit{symmetric}, if $A\subset A^{\dagger}$.
\item[(b)] \textit{anti-symmetric}, if $A\subset -A^{\dagger}$.
\item[(c)] \textit{self-adjoint}, if $A= A^{\dagger}$.
\item[(d)] \textit{unitary}, if $\D(A)=V_{\quat}^{R}$ and $A\,A^{\dagger}=A^{\dagger}A=\Iop$.
\item[(e)] \textit{closed}, if the graph $\mathcal{G}:=\D(A)\oplus\text{ran}(A)$ of $A$ is closed in $V_{\quat}^R\times V_{\quat}^R$, equipped with the product topology.
\item[(f)] \textit{closable}, if it admits closed operator extensions. In this case, the \textit{closure} $\overline{A}$ of $A$ is the smallest closed extension and its domain and action are
\begin{itemize}
\item [\textbullet] $\D(\overline{A}):=\{\phi\in V_{\quat}^{R}\,|\, \exists\,\psi\in V_{\quat}^{R}\mbox{~s.t.~} \forall\,\{\phi_n\}\subset\D(A)~\mbox{~with~}\phi_n\rightarrow\phi,\,A\phi_n\rightarrow\psi \}$
\item [\textbullet] $\overline{A}\phi=\psi$.
\end{itemize}
\end{itemize}
\end{definition}
\begin{proposition}\label{copcr}
A right linear operator $A:\D(A)\longrightarrow V_{\quat}^R$ is closed if and only if for any sequence $\{\phi_{n}\}$ in $\D(A)$ such that $\phi_{n}\longrightarrow\phi$ with $A\phi_{n}=\psi_{n}\longrightarrow\psi$ in $V_{\quat}^{R}$, then $\psi=A\phi$.
\end{proposition}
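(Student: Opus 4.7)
The plan is to translate the topological definition of closedness (graph is closed in $V_{\quat}^R\times V_{\quat}^R$ under the product topology) into the sequential criterion stated in the proposition. Since $V_{\quat}^R$ is a Hilbert space, hence a metric space, the product $V_{\quat}^R\times V_{\quat}^R$ is metrizable (e.g.\ with the norm $\|(\phi,\psi)\|_{\oplus}^2=\|\phi\|^2+\|\psi\|^2$), so closed subsets coincide with sequentially closed subsets. This is the pivot on which the equivalence turns.

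For the forward direction, I would assume $A$ is closed, and pick a sequence $\{\phi_n\}\subset\D(A)$ with $\phi_n\to\phi$ and $A\phi_n=\psi_n\to\psi$ in $V_{\quat}^R$. Then the pairs $(\phi_n,A\phi_n)$ lie in the graph $\mathcal{G}$ and converge to $(\phi,\psi)$ in the product norm, because convergence in the product topology of a metric product is equivalent to coordinatewise convergence. Since $\mathcal{G}$ is closed, $(\phi,\psi)\in\mathcal{G}$; this yields $\phi\in\D(A)$ and $\psi=A\phi$, exactly as required.

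For the reverse direction, I would assume the sequential criterion and show $\mathcal{G}$ is closed. Take any $(\phi,\psi)$ in the closure of $\mathcal{G}$ inside $V_{\quat}^R\times V_{\quat}^R$. By metrizability there exists a sequence $\{(\phi_n,A\phi_n)\}\subset\mathcal{G}$ with $(\phi_n,A\phi_n)\to(\phi,\psi)$, which is equivalent to $\phi_n\to\phi$ and $A\phi_n\to\psi$ separately. The hypothesis gives $\phi\in\D(A)$ and $\psi=A\phi$, so $(\phi,\psi)\in\mathcal{G}$, proving $\mathcal{G}$ is closed.

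The only subtle point, which I do not anticipate to be a true obstacle, is simply checking that the standard metric topology arguments (coordinatewise convergence on the product, equivalence of closedness and sequential closedness in a metric space) transfer unchanged to the right quaternionic setting. This is immediate because the quaternionic norm $\|\cdot\|$ defines a genuine real-valued metric and the right-linear structure of $A$ plays no role in the argument, only the graph-theoretic one. Hence no nontrivial quaternionic feature is invoked and the proof is essentially the verbatim analogue of the complex case.
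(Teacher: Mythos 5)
Your proof is correct and is exactly the standard argument the paper has in mind: the paper's own ``proof'' consists of the single sentence that the claim is straightforward from the definition, and your elaboration (metrizability of the product, equivalence of closedness and sequential closedness, coordinatewise convergence) supplies precisely the omitted details, including the implicit conclusion $\phi\in\D(A)$ needed for $A\phi$ to make sense. No further comment is needed.
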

\begin{proof}
It is straightforward from the definition of closed operators.
\end{proof}
\begin{proposition}\label{cldop}
Let $A:\D(A)\longrightarrow V_{\quat}^R$ be densely defined right linear operator. Then
\begin{itemize}
\item[(a)] $A^{\dagger}$ is closed.
\item[(b)] $A$ is closable if and only if $\D(A^{\dagger})$ is dense in $V_{\quat}^R$, and $\overline{A}=A^{\dagger\dagger}$.
\item[(c)] $\text{ran}(A)^{\bot}=\ker(A^{\dagger})$ and $\ker(A)\subset\text{ran}(A^{\dagger})^{\bot}$.\\
Furthermore, if $\D(A^{\dagger})$ is dense in $V_{\quat}^R$ and $A$ is closed, then $\ker(A)=\text{ran}(A^{\dagger})^{\bot}$.
\end{itemize}
\end{proposition}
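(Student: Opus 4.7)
My plan is to port the classical complex Hilbert-space arguments to the right quaternionic setting via the graph of $A$, checking at each step that the conjugation rule (v) of the inner product does not spoil the construction. I would equip $V_{\quat}^R\times V_{\quat}^R$ with the componentwise inner product (making it a right quaternionic Hilbert space) and introduce the ``symplectic'' map $V(\phi,\psi)=(-\psi,\phi)$. A direct computation using (iv)--(v) shows that $V$ is a right-linear unitary with $V^2=-\Iop$, and unitarity forces $V(M^\perp)=V(M)^\perp$ for every subspace $M$. The central identity is
\begin{equation*}
\mathcal{G}(A^\dagger)=V(\mathcal{G}(A))^\perp,
\end{equation*}
obtained by unwinding the defining relation $\langle\psi\mid A\phi\rangle=\langle\varphi\mid\phi\rangle$ of $\D(A^\dagger)$. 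Part (a) is then immediate: the right-hand side is an orthogonal complement, hence closed, so by Proposition \ref{copcr} the operator $A^\dagger$ is closed.

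For (b), assume first that $\D(A^\dagger)$ is dense so that $A^{\dagger\dagger}$ is defined. Applying the central identity twice and using $V^2=-\Iop$ together with $M^{\perp\perp}=\overline{M}$ for linear subspaces, I would compute
\begin{equation*}
\mathcal{G}(A^{\dagger\dagger})=V(\mathcal{G}(A^\dagger))^\perp=\bigl(V^{2}\mathcal{G}(A)\bigr)^{\perp\perp}=\overline{\mathcal{G}(A)}.
\end{equation*}
This simultaneously shows that $A$ is closable and that $\overline{A}=A^{\dagger\dagger}$. Conversely, if $\D(A^\dagger)^\perp$ contains a nonzero $\zeta$, then $(\zeta,0)\in\mathcal{G}(A^\dagger)^\perp=V(\overline{\mathcal{G}(A)})$, so $V^{-1}(\zeta,0)=(0,-\zeta)\in\overline{\mathcal{G}(A)}$ with $\zeta\neq 0$, showing that $\overline{\mathcal{G}(A)}$ is not a graph and hence $A$ is not closable.

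For (c), the identity $\text{ran}(A)^\perp=\ker(A^\dagger)$ is a direct unwinding: $\psi\in\text{ran}(A)^\perp$ iff $\langle\psi\mid A\phi\rangle=0=\langle 0\mid\phi\rangle$ for all $\phi\in\D(A)$, which by the very definition of $\D(A^\dagger)$ is equivalent to $\psi\in\D(A^\dagger)$ with $A^\dagger\psi=0$. The inclusion $\ker(A)\subset\text{ran}(A^\dagger)^\perp$ follows from $\langle A^\dagger\psi\mid\phi\rangle=\langle\psi\mid A\phi\rangle=0$ combined with (i). Under the extra hypotheses of the last claim, part (b) yields $A=\overline{A}=A^{\dagger\dagger}$, so applying the already-proved identity with $A^\dagger$ in place of $A$ gives $\text{ran}(A^\dagger)^\perp=\ker(A^{\dagger\dagger})=\ker(A)$.

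The main delicacy will be in (b): I need to be sure that the product-space tools (orthogonal complements, unitarity of $V$, the identities $V(M^\perp)=V(M)^\perp$ and $M^{\perp\perp}=\overline{M}$) all transfer to the right quaternionic setting in spite of the left-conjugation rule (v), since non-commutativity easily breaks arguments that pass scalars through the inner product. The other parts are routine bookkeeping once the central graph identity is established.
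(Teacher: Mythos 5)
Your proposal is correct, but there is nothing in the paper to compare it against: Proposition \ref{cldop} is stated without proof, as one of the ``very well known'' facts the authors merely recall at the start of Section 3. Your argument is the standard von Neumann graph construction, and every step you flag as potentially delicate does go through in the right quaternionic setting: the map $V(\phi,\psi)=(-\psi,\phi)$ is right-linear because the minus sign is a real (hence central) scalar, it is a surjective isometry for the componentwise inner product, the identity $\mathcal{G}(A^\dagger)=\bigl(V\mathcal{G}(A)\bigr)^\perp$ follows from exactly the computation you describe (additivity in each slot plus property (i)), and $M^{\perp\perp}=\overline{M}$ holds for right subspaces since the projection theorem is valid in right quaternionic Hilbert spaces. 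Two small points to make explicit when writing this up: in the chain for (b) you should note $V^{2}\mathcal{G}(A)=-\mathcal{G}(A)=\mathcal{G}(A)$ before invoking the double orthocomplement, and in the converse direction you should spell out why $(0,-\zeta)\in\overline{\mathcal{G}(A)}$ with $\zeta\neq 0$ rules out any closed right-linear extension (any such extension $B$ would satisfy $B0=-\zeta\neq 0$). The final claim of (c) correctly reduces to the first identity applied to $A^\dagger$, using (a) and (b) to get $A=A^{\dagger\dagger}$; the density of $\D(A^\dagger)$ is exactly what legitimizes forming $A^{\dagger\dagger}$ there.
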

\begin{proposition}\label{prrpt}
Let $A:\D(A)\subseteq V_{\quat}^R\longrightarrow V_{\quat}^R$ be a right linear operator. If $A$ is closed and satisfies the condition that there exists $C>0$ such that
\be
\|A\phi\|\geq C\|\phi\|,
\label{rpt}
\en
for all $\phi\in D(A)$, then $\text{ran}(A)$ is closed.
\end{proposition}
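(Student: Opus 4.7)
The plan is to show that $\text{ran}(A)$ contains all its limit points, using the bound \eqref{rpt} to transfer the Cauchy property from a sequence in the range back to a Cauchy sequence of preimages in $\D(A)$, and then invoking closedness via Proposition \ref{copcr}.

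Concretely, I would start with an arbitrary sequence $\{\psi_n\} \subset \text{ran}(A)$ converging to some $\psi \in V_\quat^R$, and pick preimages $\phi_n \in \D(A)$ with $A\phi_n = \psi_n$. The goal is to show $\psi \in \text{ran}(A)$. Applying the hypothesis to the difference $\phi_n - \phi_m \in \D(A)$ (which is legitimate since $\D(A)$ is a right linear subspace) gives
\[
\|\phi_n - \phi_m\| \leq \frac{1}{C}\|A(\phi_n - \phi_m)\| = \frac{1}{C}\|\psi_n - \psi_m\|,
\]
so $\{\phi_n\}$ is Cauchy in $V_\quat^R$ because $\{\psi_n\}$ is Cauchy (being convergent). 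By completeness, $\phi_n \to \phi$ for some $\phi \in V_\quat^R$.

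At this point I have $\phi_n \to \phi$ and $A\phi_n = \psi_n \to \psi$, so by Proposition \ref{copcr} (characterization of closed operators), closedness of $A$ forces $\phi \in \D(A)$ and $A\phi = \psi$. Hence $\psi \in \text{ran}(A)$, which proves $\text{ran}(A)$ is closed.

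The argument is essentially the standard one from complex Hilbert space theory, and I do not expect any genuine obstacle: the only point worth checking is that the closedness characterization in Proposition \ref{copcr} and the lower bound \eqref{rpt} both pass to the quaternionic setting without any asymmetry coming from the non-commutativity of $\quat$, but since the norm $\|\cdot\|$ and the right linear structure on $\D(A)$ behave exactly as in the complex case, the proof goes through verbatim.
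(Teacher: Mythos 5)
Your proof is correct and follows essentially the same route as the paper: take a convergent sequence in $\text{ran}(A)$, use the lower bound \eqref{rpt} to show the preimages form a Cauchy sequence, pass to the limit by completeness, and conclude with the closedness characterization of Proposition \ref{copcr}. Nothing to add.
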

\begin{proof}
Let $\psi\in\overline{\text{ran}(A)}$, then there exists  a sequence $\{\phi_{n}\}$  in $\D(A)$ such that $A\phi_{n}\longrightarrow \psi$. Then by (\ref{rpt}), we know that
$\{\phi_{n}\}$ is a Cauchy sequence in $V_{\quat}^{R}$ as $\{A\phi_{n}\}$ is Cauchy. Therefore $\phi_{n}\longrightarrow \phi$ for some $\phi\in V_{\quat}^{R}$. From the Proposition (\ref{copcr}), we have $A\phi=\psi$. This completes the proof.
\end{proof}
\begin{proposition}\label{N-S sym}
The right linear operator $A:\D(A)\subseteq V_{\quat}^R\longrightarrow V_{\quat}^R$ is symmetric if and only if $\langle A\phi\mid\phi\rangle\in\mathbb{R}$, for all $\phi\in\D(A)$.
\end{proposition}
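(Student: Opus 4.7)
The plan is to derive both implications directly from the inner product axioms, using a quaternionic polarization identity in the non-trivial direction.

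For the forward implication, suppose $A$ is symmetric, so $A\subset A^{\dagger}$ and $\langle A\phi\mid\phi\rangle=\langle\phi\mid A\phi\rangle$ for every $\phi\in\D(A)$. Property (i) of the inner product gives $\langle\phi\mid A\phi\rangle=\overline{\langle A\phi\mid\phi\rangle}$, so $\langle A\phi\mid\phi\rangle$ equals its own quaternionic conjugate and is therefore real.

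For the converse, fix arbitrary $\phi,\psi\in\D(A)$ and $\bfrakq\in\quat$. Since $\D(A)$ is a right linear subspace and $A$ is right linear, $\phi+\psi\bfrakq\in\D(A)$ with $A(\phi+\psi\bfrakq)=A\phi+(A\psi)\bfrakq$. Expanding through properties (iii)--(v),
\begin{align*}
\langle A(\phi+\psi\bfrakq)\mid\phi+\psi\bfrakq\rangle
=\langle A\phi\mid\phi\rangle+\langle A\phi\mid\psi\rangle\bfrakq+\overline{\bfrakq}\langle A\psi\mid\phi\rangle+\overline{\bfrakq}\langle A\psi\mid\psi\rangle\bfrakq.
\end{align*}
By hypothesis, the left-hand side and the terms $\langle A\phi\mid\phi\rangle$, $\langle A\psi\mid\psi\rangle$ are all real, and $\overline{\bfrakq}r\bfrakq=|\bfrakq|^{2}r\in\R$ whenever $r\in\R$. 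Hence
$$\langle A\phi\mid\psi\rangle\bfrakq+\overline{\bfrakq}\langle A\psi\mid\phi\rangle\in\R\qquad\text{for every }\bfrakq\in\quat.$$
Setting $b:=\langle A\phi\mid\psi\rangle$ and $c:=\langle A\psi\mid\phi\rangle$ and evaluating this relation at $\bfrakq\in\{1,\bi,\bj,\bk\}$ produces four $\R$-linear conditions on the eight real components of $b$ and $c$ that together force $b_{0}=c_{0}$ and $b_{s}=-c_{s}$ for $s=1,2,3$, that is, $c=\overline{b}$. Property (i) then yields $\langle A\psi\mid\phi\rangle=\overline{b}=\langle\psi\mid A\phi\rangle$ for all $\phi,\psi\in\D(A)$, which is precisely the statement $A\subset A^{\dagger}$.

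The only genuinely new ingredient compared with the complex setting is the noncommutativity of $\quat$: properties (iv)--(v) place $\bfrakq$ on the right and $\overline{\bfrakq}$ on the left of the mixed terms, so the usual two-parameter polarization ($\bfrakq=1$ and $\bfrakq=i$) is insufficient on its own. Testing with all four standard basis units decouples the real components of $b$ and $c$; this short linear-algebra step is the main obstacle, but it is entirely elementary once the expansion above is in hand.
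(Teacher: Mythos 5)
Your proof is correct, and the forward implication coincides with the paper's. For the converse, the paper simply cites the norm polarization identity (\ref{polar}) and asserts that symmetry follows ``immediately''; strictly speaking, that identity recovers $\langle\phi\mid\psi\rangle$ from norms of the inner product, not the off-diagonal values of the form $(\phi,\psi)\mapsto\langle A\phi\mid\psi\rangle$ from its diagonal values, so the paper leaves the essential step implicit. You carry out exactly that step: expanding $\langle A(\phi+\psi\bfrakq)\mid\phi+\psi\bfrakq\rangle$ and testing $\bfrakq\in\{1,\bi,\bj,\bk\}$ is the polarization argument adapted to the sesquilinear form $\langle A\cdot\mid\cdot\rangle$, and your component computation is right (in fact $\bfrakq=1$ already gives $b_s=-c_s$ for $s=1,2,3$ and $\bfrakq=\bi$ gives $b_0=c_0$, so two of the four tests suffice), yielding $\langle A\psi\mid\phi\rangle=\overline{\langle A\phi\mid\psi\rangle}=\langle\psi\mid A\phi\rangle$, i.e.\ $A\subset A^{\dagger}$. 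So the two arguments are the same in spirit --- quaternionic polarization over the four units --- but yours is the complete version of what the paper only sketches. One small point common to both: the notion of symmetry requires $\D(A)$ to be dense so that $A^{\dagger}$ exists; this hypothesis is implicit in the statement and should be carried along.
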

\begin{proof}
If $A$ is symmetric, then the statement has been proved in \cite{ghimorper}, Proposition 2.17 (b), but since the proof is short, we repeat it for completeness: for any $\phi\in\D(A)$, $\langle A\phi\mid\phi\rangle=\langle \phi\mid A\phi\rangle=\overline{\langle A\phi\mid\phi\rangle}$. That is, $\langle A\phi\mid\phi\rangle\in\mathbb{R}$, for all $\phi\in\D(A)$. To show the converse, suppose that $\langle A\phi\mid\phi\rangle\in\mathbb{R}$, for all $\phi\in\D(A)$. The polarization identity (see, for example, \cite{ghimorper}) is given by the formula
\begin{equation}\label{polar}
\langle\phi\mid\psi\rangle=\frac{1}{4}(\|\phi+\psi\|-\|\phi-\psi\|)+\frac{1}{4}\sum_{\btau=\bi,\bj,\bk}(\|\phi\btau+\psi\|-\|\phi\btau-\psi\|)\btau,
\end{equation}
where $\phi,\psi\in V_{\quat}^R$. Now using this identity, we can immediately see that
\begin{center}
$\langle A\phi\mid\phi\rangle=\langle \phi\mid A\phi\rangle$, for all $\phi\in\D(A),$
\end{center} that is, $A$ is symmetric. Hence the result follows.
\end{proof}
\subsection{Left Scalar Multiplications on $V_{\quat}^{R}$.}
We shall extract the definition and some properties of left scalar multiples of vectors on $V_{\quat}^R$ from \cite{ghimorper} as needed for the development of the manuscript. The left scalar multiple of vectors on a right quaternionic Hilbert space is an extremely non-canonical operation associated with a choice of preferred Hilbert basis. From the Proposition (\ref{P2}), $V_{\quat}^{R}$ has a Hilbert basis
\begin{equation}\label{b1}
\mathcal{O}=\{\varphi_{k}\,\mid\,k\in N\},
\end{equation}
where $N$ is a countable index set.
The left scalar multiplication on $V_{\quat}^{R}$ induced by $\mathcal{O}$ is defined as the map $\quat\times V_{\quat}^{R}\ni(\bfrakq,\phi)\longmapsto \bfrakq\phi\in V_{\quat}^{R}$ given by
\begin{equation}\label{LPro}
\bfrakq\phi:=\sum_{k\in N}\varphi_{k}\bfrakq\langle \varphi_{k}\mid \phi\rangle,
\end{equation}
for all $(\bfrakq,\phi)\in\quat\times V_{\quat}^{R}$.
\begin{proposition}\cite{ghimorper}\label{lft_mul}
The left product defined in (\ref{LPro}) satisfies the following properties. For every $\phi,\psi\in V_{\quat}^{R}$ and $\bfrakp,\bfrakq\in\quat$,
\begin{itemize}
\item[(a)] $\bfrakq(\phi+\psi)=\bfrakq\phi+\bfrakq\psi$ and $\bfrakq(\phi\bfrakp)=(\bfrakq\phi)\bfrakp$.
\item[(b)] $\|\bfrakq\phi\|=|\bfrakq|\|\phi\|$.
\item[(c)] $\bfrakq(\bfrakp\phi)=(\bfrakq\bfrakp)\phi$.
\item[(d)] $\langle\overline{\bfrakq}\phi\mid\psi\rangle
=\langle\phi\mid\bfrakq\psi\rangle$.
\item[(e)] $r\phi=\phi r$, for all $r\in \mathbb{R}$.
\item[(f)] $\bfrakq\varphi_{k}=\varphi_{k}\bfrakq$, for all $k\in N$.
\end{itemize}
\end{proposition}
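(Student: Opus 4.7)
The plan is to verify each of the six properties by direct computation from the definition (\ref{LPro}), leveraging the orthonormality of the Hilbert basis $\mathcal{O}=\{\varphi_k\}$ and the axioms (i)–(v) of the inner product, together with the expansion formula in Proposition~\ref{P2} and the Parseval-type identities in Proposition~\ref{P1}. I would begin by establishing a single computational lemma that will serve several parts: namely that, for every $k\in N$ and every $\phi\in V_\quat^R$,
\begin{equation*}
\langle\varphi_k\mid\bfrakq\phi\rangle=\bfrakq\langle\varphi_k\mid\phi\rangle.
\end{equation*}
This follows by taking the inner product of $\varphi_k$ with the series defining $\bfrakq\phi$ and using orthonormality of $\mathcal O$; the interchange of $\langle\varphi_k\mid\cdot\rangle$ with the (absolutely convergent) sum is justified by continuity of the inner product in its second slot.

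Once this is in hand, the verifications proceed as follows. For (a), additivity in $\phi$ is inherited from linearity of $\langle\varphi_k\mid\cdot\rangle$ in its second argument, and compatibility with the right action $\phi\mapsto\phi\bfrakp$ comes from axiom (iv), which pulls $\bfrakp$ out of $\langle\varphi_k\mid\phi\bfrakp\rangle$ on the right. For (b), apply Proposition~\ref{P1}(c) to $\bfrakq\phi$, use the computational lemma, and factor $|\bfrakq|$ out of each summand: $\|\bfrakq\phi\|^2=\sum_k|\bfrakq\langle\varphi_k\mid\phi\rangle|^2=|\bfrakq|^2\sum_k|\langle\varphi_k\mid\phi\rangle|^2=|\bfrakq|^2\|\phi\|^2$. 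For (c), write both sides out via the definition and use the lemma again on the inner expansion $\langle\varphi_k\mid\bfrakp\phi\rangle=\bfrakp\langle\varphi_k\mid\phi\rangle$, then invoke associativity of quaternionic multiplication term-by-term.

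For (d), expand $\langle\overline{\bfrakq}\phi\mid\psi\rangle$ using Proposition~\ref{P1}(b) with respect to $\mathcal O$:
\begin{equation*}
\langle\overline{\bfrakq}\phi\mid\psi\rangle=\sum_k\langle\overline{\bfrakq}\phi\mid\varphi_k\rangle\langle\varphi_k\mid\psi\rangle=\sum_k\overline{\langle\varphi_k\mid\overline{\bfrakq}\phi\rangle}\langle\varphi_k\mid\psi\rangle,
\end{equation*}
then use the lemma to rewrite $\langle\varphi_k\mid\overline{\bfrakq}\phi\rangle=\overline{\bfrakq}\langle\varphi_k\mid\phi\rangle$ and apply $\overline{\overline{\bfrakq}\,\alpha}=\overline{\alpha}\,\bfrakq$; the result matches the analogous expansion of $\langle\phi\mid\bfrakq\psi\rangle$. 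Property (e) is immediate from the fact that real scalars commute with quaternions, so that $\varphi_k r\langle\varphi_k\mid\phi\rangle=\varphi_k\langle\varphi_k\mid\phi\rangle r$; summing yields $\phi r$ by Proposition~\ref{P2}. Finally, (f) is the cleanest: in the sum defining $\bfrakq\varphi_k$, orthonormality $\langle\varphi_j\mid\varphi_k\rangle=\delta_{jk}$ collapses the series to the single term $\varphi_k\bfrakq$.

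I anticipate no deep difficulty, but the one delicate point is the justification of the interchange of inner products with the defining series — i.e., that the series $\sum_k\varphi_k\bfrakq\langle\varphi_k\mid\phi\rangle$ actually converges absolutely in $V_\quat^R$ (so that continuity of $\langle\cdot\mid\cdot\rangle$ applies). This is the same argument as in Proposition~\ref{P2}: the partial sums form a Cauchy sequence because $\|\varphi_k\bfrakq\langle\varphi_k\mid\phi\rangle\|=|\bfrakq|\,|\langle\varphi_k\mid\phi\rangle|$ and $\sum_k|\langle\varphi_k\mid\phi\rangle|^2<\infty$ by Proposition~\ref{P1}(c); orthogonality of the summands then upgrades this to convergence of the series itself. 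Once this convergence is recorded, all six identities follow by the straightforward manipulations above.
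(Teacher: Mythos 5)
Your verification is correct and is the standard direct computation from the definition \eqref{LPro}; the paper itself gives no proof of this proposition, deferring entirely to the cited reference \cite{ghimorper}, and your argument (the key identity $\langle\varphi_k\mid\bfrakq\phi\rangle=\bfrakq\langle\varphi_k\mid\phi\rangle$ plus orthonormality and the inner-product axioms) is exactly what that verification amounts to. The only cosmetic point is that the defining series is unconditionally convergent via square-summability of the coefficients and Pythagoras on the partial sums, not absolutely convergent in the $\sum_k\|\cdot\|<\infty$ sense, but you in fact give the correct Cauchy-sequence argument, so nothing is missing.
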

\begin{remark}
(1) The meaning of writing $\bfrakp\phi$ is $\bfrakp\cdot\phi$, because the notation from (\ref{LPro}) may be confusing, when $V_{\quat}^{R}=\quat$. However, regarding the field $\quat$ itself as a right $\quat$-Hilbert space, an orthonormal basis $\mathcal{O}$ should consist only of a singleton, say $\{\varphi_{0}\}$, with $\mid\varphi_{0}\mid=1$, because we clearly have $\theta=\varphi_{0}\langle\varphi_{0}\mid\theta\rangle$, for all $\theta\in\quat$. The equality from (f) of Proposition \ref{lft_mul} can be written as $\bfrakp\varphi_{0}=\varphi_{0}\bfrakp$, for all $\bfrakp\in\quat$. In fact, the left hand may be confusing and it should be understood as $\bfrakp\cdot\varphi_{0}$, because the true equality $\bfrakp\varphi_{0}=\varphi_{0}\bfrakp$ would imply that $\varphi_{0}=\pm 1$. For the simplicity, we are writing $\bfrakp\phi$ instead of writing $\bfrakp\cdot\phi$.\\
(2) Also one can trivially see that $(\bfrakp+\bfrakq)\phi=\bfrakp\phi+\bfrakq\phi$, for all $\bfrakp,\bfrakq\in\quat$ and $\phi\in V_{\quat}^{R}$.
\end{remark}
Furthermore, the quaternionic left scalar multiplication of linear operators is also defined in \cite{Fab1}, \cite{ghimorper}. For any fixed $\bfrakq\in\quat$ and a given right linear operator $A:\D(A)\longrightarrow V_{\quat}^R$, the left scalar multiplication of $A$ is defined as a map $\bfrakq A:\D(A)\longrightarrow V_{\quat}^R$ by the setting
\begin{equation}\label{lft_mul-op}
(\bfrakq A)\phi:=\bfrakq (A\phi)=\sum_{k\in N}\varphi_{k}\bfrakq\langle \varphi_{k}\mid A\phi\rangle,
\end{equation}
for all $\phi\in D(A)$. It is straightforward that $\bfrakq A$ is a right linear operator. If $\bfrakq\phi\in \D(A)$, for all $\phi\in \D(A)$, one can define right scalar multiplication of the right linear operator $A:\D(A)\longrightarrow V_{\quat}^R$ as a map $ A\bfrakq:\D(A)\longrightarrow V_{\quat}^R$ by the setting
\begin{equation}\label{rgt_mul-op}
(A\bfrakq )\phi:=A(\bfrakq \phi),
\end{equation}
for all $\phi\in D(A)$. It is also right linear operator. One can easily obtain that, if $\bfrakq\phi\in \D(A)$, for all $\phi\in \D(A)$ and $\D(A)$ is dense in $V_{\quat}^R$, then
\begin{equation}\label{sc_mul_aj-op}
(\bfrakq A)^{\dagger}=A^{\dagger}\overline{\bfrakq}~\mbox{~and~}~
(A\bfrakq)^{\dagger}=\overline{\bfrakq}A^{\dagger}.
\end{equation}
\begin{proposition}\cite{BT}\label{preqn}
Let $A:\D(A)\subseteq V_{\quat}^R\longrightarrow V_{\quat}^R$ be a densely defined right linear symmetric operator with the property that  $\bi\phi,\bj\phi,\bk\phi\in \D(A)$, for all $\phi\in \D(A)$ and $\bfrakq = q_0 + \bi q_1 + \bj q_2 + \bk q_3\in\quat$.
\begin{itemize}
\item [(a)] If $\bi A$, $\bj A$ and $\bk A$ are anti-symmetric, then $(\bfrakq A)^{\dagger}=\overline{\bfrakq}A$, $\bfrakq A=A\bfrakq$ and
	$$\|(A-\bfrakq\Iop)\phi\|^{2}=\|(A-q_0\Iop)\phi\|^{2}+
	(q_1^{2} +  q_2^{2} + q_3^{2})\|\phi\|^{2},$$ for all $\phi\in\D(A)$.
\item [(b)] If $\bfrakq A$ is anti-symmetric, then
$$\|(A- \overline{\bfrakq}\Iop)\phi\|^{2}=\|A\phi\|^{2}+|\bfrakq|^{2}\|\phi\|^{2}$$
for all $\phi\in\D(A)$.
\item [(c)] If $\overline{\bfrakq} A$ is anti-symmetric, then
$$\|(A-\bfrakq\Iop)\phi\|^{2}=\|A\phi\|^{2}+|\bfrakq|^{2}\|\phi\|^{2}$$
for all $\phi\in\D(A)$.
\end{itemize}
\end{proposition}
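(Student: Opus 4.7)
The plan is to reduce each identity to the analysis of the cross term in the standard expansion
$$\|(A-\bfrakq\Iop)\phi\|^{2}=\|A\phi\|^{2}-2\,\mathrm{Re}\langle A\phi\mid\bfrakq\phi\rangle+|\bfrakq|^{2}\|\phi\|^{2},$$
which follows from writing out the inner product and applying Proposition \ref{lft_mul}(b). Writing $\bfrakq=q_{0}+\bi q_{1}+\bj q_{2}+\bk q_{3}$, and using right linearity of the inner product together with Proposition \ref{lft_mul}(d),(e), one gets
$$\langle A\phi\mid\bfrakq\phi\rangle=q_{0}\langle A\phi\mid\phi\rangle-q_{1}\langle\bi A\phi\mid\phi\rangle-q_{2}\langle\bj A\phi\mid\phi\rangle-q_{3}\langle\bk A\phi\mid\phi\rangle.$$
All the work then consists in identifying which of these four scalar quantities contribute to the real part under the various hypotheses.

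For part (a), I would first establish the commutation $A\bfrakq=\bfrakq A$ on $\D(A)$. Pick $\psi\in\D(A)$; by hypothesis $\bi\psi\in\D(A)$, so symmetry of $A$ gives $\langle A\phi\mid\bi\psi\rangle=\langle\phi\mid A\bi\psi\rangle$, while Proposition \ref{lft_mul}(d) rewrites the left-hand side as $-\langle\bi A\phi\mid\psi\rangle$. On the other hand, anti-symmetry of $\bi A$ forces $-\langle\bi A\phi\mid\psi\rangle=\langle\phi\mid\bi A\psi\rangle$. Comparing and invoking density of $\D(A)$ yields $A\bi\psi=\bi A\psi$; the same argument for $\bj,\bk$ together with the trivial commutation with $q_{0}$ proves $A\bfrakq=\bfrakq A$. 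The adjoint identity on $\D(A)$ is then immediate from (\ref{sc_mul_aj-op}): $(\bfrakq A)^{\dagger}\phi=A^{\dagger}(\overline{\bfrakq}\phi)$, and since $\overline{\bfrakq}\phi=q_{0}\phi-q_{1}\bi\phi-q_{2}\bj\phi-q_{3}\bk\phi\in\D(A)$, the right-hand side equals $A(\overline{\bfrakq}\phi)=\overline{\bfrakq}A\phi$ by the commutation just proved. Turning to the norm identity, Proposition \ref{N-S sym} makes $\langle A\phi\mid\phi\rangle$ real, while anti-symmetry of $\bi A,\bj A,\bk A$ makes each of the three quantities $\langle\bi A\phi\mid\phi\rangle,\langle\bj A\phi\mid\phi\rangle,\langle\bk A\phi\mid\phi\rangle$ purely imaginary (a direct consequence of $\langle T\phi\mid\phi\rangle=-\overline{\langle T\phi\mid\phi\rangle}$ for anti-symmetric $T$). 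Hence $\mathrm{Re}\langle A\phi\mid\bfrakq\phi\rangle=q_{0}\langle A\phi\mid\phi\rangle$, and substituting into the expansion reproduces exactly $\|(A-q_{0}\Iop)\phi\|^{2}+(q_{1}^{2}+q_{2}^{2}+q_{3}^{2})\|\phi\|^{2}$.

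For parts (b) and (c) the strategy is even shorter: a single application of Proposition \ref{lft_mul}(d) gives
$$\langle A\phi\mid\overline{\bfrakq}\phi\rangle=\langle\bfrakq A\phi\mid\phi\rangle,\qquad\langle A\phi\mid\bfrakq\phi\rangle=\langle\overline{\bfrakq}A\phi\mid\phi\rangle,$$
so under the hypothesis that $\bfrakq A$ (resp.\ $\overline{\bfrakq}A$) is anti-symmetric, the corresponding cross term in the expansion of $\|(A-\overline{\bfrakq}\Iop)\phi\|^{2}$ (resp.\ $\|(A-\bfrakq\Iop)\phi\|^{2}$) has zero real part, leaving exactly $\|A\phi\|^{2}+|\bfrakq|^{2}\|\phi\|^{2}$. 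The main subtlety of the whole proof is thus the commutation step in (a), where symmetry of $A$, anti-symmetry of $\bi A$, and property (d) of Proposition \ref{lft_mul} must be combined in the right order; the remaining manipulations reduce to bookkeeping with real coefficients, which pass freely through the inner product and across left/right multiplications by Proposition \ref{lft_mul}(e).
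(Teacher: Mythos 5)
The paper itself gives no proof of this proposition --- it is imported verbatim from \cite{BT} --- so there is nothing internal to compare against; judged on its own, your argument is correct and complete. The reduction to the cross term $-2\,\mathrm{Re}\langle A\phi\mid\bfrakq\phi\rangle$, the observation that symmetry makes $\langle A\phi\mid\phi\rangle$ real while anti-symmetry of $\btau A$ makes $\langle \btau A\phi\mid\phi\rangle$ purely imaginary, and the commutation argument combining symmetry of $A$, anti-symmetry of $\bi A$ and Proposition \ref{lft_mul}(d) with density of $\D(A)$ are all sound, as is the one-line treatment of (b) and (c). The only point worth flagging is the adjoint identity: as you note, \eqref{sc_mul_aj-op} gives $(\bfrakq A)^{\dagger}=A^{\dagger}\overline{\bfrakq}$, whose domain may strictly contain $\D(A)$ when $A$ is merely symmetric, so $(\bfrakq A)^{\dagger}=\overline{\bfrakq}A$ should really be read as an identity on $\D(A)$ (equivalently $\overline{\bfrakq}A\subseteq(\bfrakq A)^{\dagger}$); your phrasing ``on $\D(A)$'' already acknowledges this, and the imprecision lies in the statement as quoted rather than in your proof.
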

\begin{proposition}\cite{BT}\label{csadj-gen}
Let $A:\D(A)\subseteq V_{\quat}^R\longrightarrow V_{\quat}^R$ be a densely defined symmetric right linear operator  with the property that $\bi\phi,\bj\phi,\bk\phi\in \D(A)$, for all $\phi\in \D(A)$. If the operators $\bi A$, $\bj A$ and $\bk A$ are anti-symmetric, then for any $\bfrakq = q_0 + \bi q_1 + \bj q_2 + \bk q_3\in\quat$ with $q_1^{2} +  q_2^{2} + q_3^{2}\neq 0$, the following statements are equivalent:
\begin{itemize}
\item[(a)] $A$ is self-adjoint.
\item[(b)] $A$ is closed and $\ker(A^{\dagger}-\bfrakq\Iop)=\{0\}$ and $\ker(A^{\dagger}-\overline{\bfrakq}\Iop)=\{0\}$.
\item[(c)] $\text{ran}(A-\bfrakq\Iop)=\text{ran}(A-\overline{\bfrakq}\Iop)=V_{\quat}^{R}$.
\end{itemize}
\end{proposition}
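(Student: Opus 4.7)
The plan is to follow the classical cyclic pattern (a) $\Rightarrow$ (c) $\Rightarrow$ (b) $\Rightarrow$ (a), since all three implications reduce to the interplay between a coercive norm estimate and an adjoint identity. The only genuinely quaternionic input is Proposition \ref{preqn}(a), which under the standing hypotheses gives
$$\|(A-\bfrakq\Iop)\phi\|^{2}=\|(A-q_0\Iop)\phi\|^{2}+(q_1^{2}+q_2^{2}+q_3^{2})\|\phi\|^{2},$$
and hence $\|(A-\bfrakq\Iop)\phi\|\ge c\|\phi\|$ with $c=\sqrt{q_1^{2}+q_2^{2}+q_3^{2}}>0$, as well as the analogous bound for $\overline{\bfrakq}$. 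I will also repeatedly use the adjoint identity $(A-\bfrakq\Iop)^{\dagger}=A^{\dagger}-\overline{\bfrakq}\Iop$, which follows from the symmetry of $A$ together with Proposition \ref{lft_mul}(d) applied to $\bfrakq\Iop$.

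For (a) $\Rightarrow$ (c), self-adjointness forces $A$ to be closed (Proposition \ref{cldop}(a)), so the coercive bound and Proposition \ref{prrpt} give that $\text{ran}(A-\bfrakq\Iop)$ is closed. By Proposition \ref{cldop}(c) its orthogonal complement equals $\ker(A^{\dagger}-\overline{\bfrakq}\Iop)=\ker(A-\overline{\bfrakq}\Iop)$, which vanishes by the coercive bound for $\overline{\bfrakq}$; therefore $\text{ran}(A-\bfrakq\Iop)=V_{\quat}^{R}$, and interchanging $\bfrakq\leftrightarrow\overline{\bfrakq}$ yields the second range equality.

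For (c) $\Rightarrow$ (b), Proposition \ref{cldop}(c) immediately gives $\ker(A^{\dagger}-\overline{\bfrakq}\Iop)=\text{ran}(A-\bfrakq\Iop)^{\bot}=\{0\}$ and symmetrically at $\bfrakq$. For closedness of $A$, the bound makes $A-\bfrakq\Iop$ injective, and (c) together with $\|(A-\bfrakq\Iop)^{-1}\psi\|\le c^{-1}\|\psi\|$ produces an everywhere-defined bounded inverse $(A-\bfrakq\Iop)^{-1}:V_{\quat}^{R}\longrightarrow\D(A)$; an everywhere-defined bounded operator is automatically closed, hence so are $A-\bfrakq\Iop$ and $A$.

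For (b) $\Rightarrow$ (a), I first upgrade the kernel hypotheses into range equalities: $A$ is closed and satisfies the coercive bound, so $\text{ran}(A-\bfrakq\Iop)$ is closed by Proposition \ref{prrpt}, while its orthogonal complement $\ker(A^{\dagger}-\overline{\bfrakq}\Iop)$ is $\{0\}$ by hypothesis, yielding $\text{ran}(A-\bfrakq\Iop)=V_{\quat}^{R}$ (and analogously for $\overline{\bfrakq}$). Then the standard closure argument works: given $\psi\in\D(A^{\dagger})$, pick $\phi\in\D(A)$ with $(A-\bfrakq\Iop)\phi=(A^{\dagger}-\bfrakq\Iop)\psi$; since $A\subset A^{\dagger}$ we get $(A^{\dagger}-\bfrakq\Iop)(\psi-\phi)=0$, and the kernel hypothesis forces $\psi=\phi\in\D(A)$ with $A^{\dagger}\psi=A\psi$, proving $A^{\dagger}\subset A$. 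The main delicacy throughout is verifying that the left multiplication by quaternions — a non-canonical operation tied to a chosen Hilbert basis — interacts with the adjoint exactly as in the complex case; this is precisely what Propositions \ref{lft_mul} and \ref{preqn} guarantee, and without them the identity $(A-\bfrakq\Iop)^{\dagger}=A^{\dagger}-\overline{\bfrakq}\Iop$ (and hence the whole scheme) would fail.
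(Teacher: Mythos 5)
Your proof is correct; note that the paper itself does not prove this proposition (it is imported from \cite{BT}), so there is no in-text proof to compare against, but your cyclic argument uses exactly the ingredients the paper relies on elsewhere --- the coercive identity of Proposition \ref{preqn}(a), closedness of the range via Proposition \ref{prrpt}, the kernel--range duality of Proposition \ref{cldop}(c), and the final domain argument $\psi=\phi$ --- and it mirrors the paper's own proof of the closely analogous Proposition \ref{pr01}. Your attention to the adjoint identity $(A-\bfrakq\Iop)^{\dagger}=A^{\dagger}-\overline{\bfrakq}\Iop$, which is the one genuinely quaternionic point of delicacy, is well placed and correctly justified by Proposition \ref{lft_mul}(d) together with the boundedness of $\bfrakq\Iop$.
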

\subsection{Regular Points and Defect Numbers of quaternionic linear Operators}
\begin{definition}
Let $A:\D(A)\subseteq V_{\quat}^R\longrightarrow V_{\quat}^R$ be a right linear operator. A quaternion $\bfrakq$ is called a \textit{regular point} for $A$ if there exists a number $c_{\bfrakq}>0$ such that
\be\label{reg_pt}
\|(A-\bfrakq\Iop)\phi\|\geq c_{\bfrakq}\|\phi\|,
\en
for all $\phi\in\D(A)$. The set of regular points of $A$ is the \textit{regular domain} of $A$ and it is denoted by $\varPi(A)$.
\end{definition}
\begin{definition}
Let $A:\D(A)\subseteq V_{\quat}^R\longrightarrow V_{\quat}^R$ be a right linear operator. For $\bfrakq\in\varPi(A)$, we call the linear subspace $\text{ran}(A-\bfrakq\Iop)^{\bot}$ of $V_\quat^R$ the \textit{deficiency subspace} of $A$ at $\bfrakq$ and its dimension $d_{\bfrakq}(A)=\dim \text{ran}(A-\bfrakq\Iop)^{\bot}$ is the \textit{defect number} of $A$ at $\bfrakq$.
\end{definition}
The following Proposition discusses some basic properties of the above definitions:
\begin{proposition}\label{pre_set}
Let $A:\D(A)\subseteq V_{\quat}^R\longrightarrow V_{\quat}^R$ be a densely defined right linear operator and $\bfrakq \in\quat$.
\begin{itemize}
\item [(a)] If $\bfrakq_0 \in\varPi(A)$ such that $\mid \bfrakq-\bfrakq_0 \mid<c_{\bfrakq_0}$, where $c_{\bfrakq_0}$ is a constant satisfying (\ref{reg_pt}), then $\bfrakq\in\varPi(A)$. That is, $\varPi(A)$ is an open subset of $\quat$.
\item [(b)] If $A$ is closable, then $\varPi(\overline{A})=\varPi(A)$, $d_{\bfrakq}(\overline{A})=d_{\bfrakq}(A)$ and $$\text{ran}(\overline{A}-\bfrakq\Iop)=\overline{\text{ran}(A-\bfrakq\Iop)},$$ for all $\bfrakq \in\varPi(A)$.
\item [(c)] If $A$ is closed and $\bfrakq \in\varPi(A)$, then $\text{ran}(A-\bfrakq\Iop)$ is a closed linear subspace of $V_{\quat}^R$.
\end{itemize}
\end{proposition}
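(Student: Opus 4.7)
The three parts can be handled in the order (a), (c), (b), since (c) will be a crucial input to (b).

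For (a), the plan is the standard reverse triangle inequality argument, relying on the norm identity $\|\bfrakq\phi\|=|\bfrakq|\|\phi\|$ from Proposition~\ref{lft_mul}(b). For any $\phi\in\D(A)$, I would write
\[
(A-\bfrakq\Iop)\phi=(A-\bfrakq_0\Iop)\phi-(\bfrakq-\bfrakq_0)\phi,
\]
and estimate $\|(A-\bfrakq\Iop)\phi\|\ge c_{\bfrakq_0}\|\phi\|-|\bfrakq-\bfrakq_0|\,\|\phi\|=(c_{\bfrakq_0}-|\bfrakq-\bfrakq_0|)\|\phi\|$. Under the hypothesis $|\bfrakq-\bfrakq_0|<c_{\bfrakq_0}$ this gives a positive constant $c_\bfrakq$, so $\bfrakq\in\varPi(A)$ and $\varPi(A)$ is open. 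The only place where the quaternionic setting enters is the identity $\|(\bfrakq-\bfrakq_0)\phi\|=|\bfrakq-\bfrakq_0|\|\phi\|$, which holds precisely because the left multiplication is fixed via a Hilbert basis in the sense of \eqref{LPro}.

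For (c), I would first check that $A-\bfrakq\Iop$ is closed on $\D(A)$: if $\phi_n\to\phi$ and $(A-\bfrakq\Iop)\phi_n\to\psi$, then by continuity of left multiplication $\bfrakq\phi_n\to\bfrakq\phi$, hence $A\phi_n\to\psi+\bfrakq\phi$; since $A$ is closed, $\phi\in\D(A)$ and $(A-\bfrakq\Iop)\phi=\psi$. The hypothesis $\bfrakq\in\varPi(A)$ gives the lower bound $\|(A-\bfrakq\Iop)\phi\|\ge c_\bfrakq\|\phi\|$, so Proposition~\ref{prrpt} applied to the closed operator $A-\bfrakq\Iop$ yields that $\text{ran}(A-\bfrakq\Iop)$ is closed.

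For (b), I split into the three assertions. For $\varPi(\overline{A})=\varPi(A)$: the inclusion $\varPi(\overline{A})\subseteq\varPi(A)$ is immediate from $A\subset\overline{A}$ by restricting the regularity inequality to $\D(A)$; for the reverse inclusion, given $\phi\in\D(\overline{A})$ I would pick a sequence $\phi_n\in\D(A)$ with $\phi_n\to\phi$ and $A\phi_n\to\overline{A}\phi$, apply \eqref{reg_pt} for $A$ at each $\phi_n$, and pass to the limit (again using $\|\bfrakq(\phi_n-\phi)\|\to 0$) to obtain the same constant $c_\bfrakq$ for $\overline{A}$. For the range identity, the inclusion $\text{ran}(A-\bfrakq\Iop)\subset\text{ran}(\overline{A}-\bfrakq\Iop)$ is trivial; part (c) applied to $\overline{A}$ shows the right-hand side is closed, giving $\overline{\text{ran}(A-\bfrakq\Iop)}\subseteq\text{ran}(\overline{A}-\bfrakq\Iop)$; for the reverse direction, write any $(\overline{A}-\bfrakq\Iop)\phi$ as the limit of $(A-\bfrakq\Iop)\phi_n$ using the same approximating sequence. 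The defect number equality then follows at once from the general fact $M^\bot=\overline{M}^{\bot}$, since the orthogonal complement only sees the closure.

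The main obstacle is less computational than structural: making sure the left multiplication tied to the fixed Hilbert basis interacts with the norm, with limits, and with the adjoint exactly as the scalar field does in the complex case. Once Proposition~\ref{lft_mul}(a)--(b) is invoked to certify linearity and the isometry $\|\bfrakq\phi\|=|\bfrakq|\|\phi\|$, the three arguments above are close adaptations of their classical counterparts, and Proposition~\ref{prrpt} neatly packages the closed-range conclusion that drives (b) and (c).
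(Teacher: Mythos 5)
Your proof is correct, and for parts (a) the argument is identical to the paper's. For (b) and (c) you take a genuinely different (reversed) dependency structure: you prove (c) first, as a direct application of Proposition~\ref{prrpt} to the closed operator $A-\bfrakq\Iop$ (after checking that subtracting $\bfrakq\Iop$ preserves closedness), and then feed (c), applied to $\overline{A}$, into (b) to obtain the inclusion $\overline{\text{ran}(A-\bfrakq\Iop)}\subseteq\text{ran}(\overline{A}-\bfrakq\Iop)$. The paper instead proves the range identity in (b) self-containedly, by the Cauchy-sequence argument (the lower bound \eqref{reg_pt} makes $\{\phi_n\}$ Cauchy whenever $\{(A-\bfrakq\Iop)\phi_n\}$ is, and closability identifies the limit), and only afterwards deduces (c) as the special case $A=\overline{A}$. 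The two routes contain the same analytic content --- indeed the proof of Proposition~\ref{prrpt} is exactly the Cauchy-sequence argument the paper inlines --- but your organization buys a cleaner reuse of an already-proved lemma, at the cost of having to establish $\varPi(A)\subseteq\varPi(\overline{A})$ before you can invoke (c) for $\overline{A}$; you do handle that ordering correctly by proving $\varPi(\overline{A})=\varPi(A)$ first via the limiting argument. Your appeal to $M^{\bot}=\overline{M}^{\bot}$ for the defect numbers is a slicker packaging of the paper's explicit sequence computation and is perfectly adequate.
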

\begin{proof}
(a) Suppose that $\bfrakq_0 \in\varPi(A)$ such that $\mid \bfrakq-\bfrakq_0 \mid<c_{\bfrakq_0}$ where $c_{\bfrakq_0}$ is a constant satisfying (\ref{reg_pt}). Let $\phi\in\D(A)$, then we have
\begin{eqnarray*}
\|(A-\bfrakq\Iop)\phi\|=\|(A-\bfrakq_0\Iop)\phi-(\bfrakq-\bfrakq_0)\phi\|&\geq&\|(A-\bfrakq_0\Iop)\phi\|-\mid\bfrakq-\bfrakq_0\mid\|\phi\|\\
&\geq&(c_{\bfrakq_{0}}-\mid\bfrakq-\bfrakq_0\mid)\|\phi\|.
\end{eqnarray*}
Thus $\bfrakq\in\varPi(A)$ as $\mid \bfrakq-\bfrakq_0 \mid<c_{\bfrakq_0}$. From this, we can say that $\varPi(A)\subseteq\varPi(A)^{\circ}$-interior set of $\varPi(A)$. Therefore (a) follows.\\
(b) Let $\psi\in\overline{\text{ran}(A-\bfrakq\Iop)}$. Then there exists a sequence $\{\psi_n\}\subset\text{ran}(A-\bfrakq\Iop)$ such that $\psi_n\longrightarrow\psi$ as $n\longrightarrow\infty$. Now for each $n\in\mathbb{N}$, there exists $\phi_n\in\D(A)$ such that
$\psi_n=(A-\bfrakq\Iop)\phi_n$. By (\ref{reg_pt}), we have, for any $m,n\in\mathbb{N}$
$$\|\phi_m-\phi_n\|\leq c_{\bfrakq}^{-1}\|(A-\bfrakq\Iop)(\phi_m-\phi_n)\|=c_{\bfrakq}^{-1}\|\psi_m-\psi_n\|.$$
From this, $\{\phi_n\}$ is a Cauchy sequence, because $\{\psi_n\}$ is a Cauchy sequence. Then $\{\phi_n\}$ is convergent and take $\phi=\lim_n \phi_n $. Since $\lim_n A\phi_n=\lim_n(\psi_n+\bfrakq\phi_n)=\psi+\bfrakq\phi$ and $A$ is closable, we have $\phi\in\D(\overline{A})$ and $\overline{A}\phi=\psi+\bfrakq\phi$. Thus $\psi=(\overline{A}-\bfrakq\Iop )\phi\in\text{ran}(\overline{A}-\bfrakq\Iop)$. This proves that $\overline{\text{ran}(A-\bfrakq\Iop)}\subseteq\text{ran}(\overline{A}-\bfrakq\Iop)$. On the other hand, take $\psi\in\text{ran}(\overline{A}-\bfrakq\Iop)$, then there exists $\phi\in\D(\overline{A})$ such that $\psi=(\overline{A}-\bfrakq\Iop)\phi$. By the definition of $\overline{A}$, we have for all $\{\phi_n\}\subset\D(A)$ with $\phi_n\longrightarrow\phi$, $A\phi_n\longrightarrow \overline{A}\phi$. From this, the converse inclusion follows immediately. Therefore, $$\text{ran}(\overline{A}-\bfrakq\Iop)=\overline{\text{ran}(A-\bfrakq\Iop)}.$$
Let $\bfrakq\in\varPi(A)$, then there exists a number $c_{\bfrakq}>0$ such that
$$\|(A-\bfrakq\Iop)\phi\|\geq c_{\bfrakq}\|\phi\|,$$
for all $\phi\in\D(A)$. Now take $\phi\in\D(\overline{A})$, then
$A\phi_n\longrightarrow \overline{A}\phi$, for all $\{\phi_n\}\subset\D(A)$ with $\phi_n\longrightarrow\phi$, and for each $n\in\mathbb{N}$,
$$\|(A-\bfrakq\Iop)\phi_n\|\geq c_{\bfrakq}\|\phi_n\|.$$
Taking limit $n\longrightarrow\infty$ both side, we get $$\|(\overline{A}-\bfrakq\Iop)\phi\|\geq c_{\bfrakq}\|\phi\|,$$
because $\bfrakq\Iop\phi_n=\sum_{k\in N}\varphi_{k}\bfrakq\langle \varphi_{k}\mid \phi_{n}\rangle\longrightarrow\sum_{k\in N}\varphi_{k}\bfrakq\langle \varphi_{k}\mid \phi\rangle=\bfrakq\Iop\phi$.
This proves that $\varPi(A)\subseteq\varPi(\overline{A})$ and the converse inclusion immediately follows from the fact that $\D(A)\subseteq\D(\overline{A})$.
Thus $\varPi(A)=\varPi(\overline{A})$.\\
Since $\text{ran}(A-\bfrakq\Iop)\subseteq\text{ran}(\overline{A}-\bfrakq\Iop)$, it follows that $\text{ran}(\overline{A}-\bfrakq\Iop)^{\bot}\subseteq\text{ran}(A-\bfrakq\Iop)^{\bot}.$  Let us now take $\psi\in\text{ran}(A-\bfrakq\Iop)^{\bot}$. Let $\xi\in\text{ran}(\overline{A}-\bfrakq\Iop)$, then there exists $\phi\in\D(\overline{A})$ such that $\xi=(\overline{A}-\bfrakq\Iop)\phi$ and
$A\phi_n\longrightarrow \overline{A}\phi$, for all $\{\phi_n\}\subset\D(A)$ with $\phi_n\longrightarrow\phi$. If we take $\xi_n=(A-\bfrakq\Iop)\phi_n$, for all $n\in\mathbb{N}$, then $\langle\psi\mid\xi_n\rangle=0$, for all $n\in\mathbb{N}$. But $\langle\psi\mid\xi_n\rangle\longrightarrow\langle\psi\mid\xi\rangle$. Thus $\langle\psi\mid\xi\rangle=0$. That is, $\psi\in\text{ran}(\overline{A}-\bfrakq\Iop)^{\bot}$. Therefore $\text{ran}(\overline{A}-\bfrakq\Iop)^{\bot}=\text{ran}(A-\bfrakq\Iop)^{\bot}$. Hence $d_{\bfrakq}(\overline{A})=d_{\bfrakq}(A)$ and point (b) follows.\\
(c) Since $A$ is closed, we have $A=\overline{A}$. Thus by the result (b), it immediately follows that $\text{ran}(A-\bfrakq\Iop)$ is a closed subspace. Hence the result (c) follows.
\end{proof}

\subsection{\textsf{S}-Spectrum of Unbounded quaternionic linear Operators}
%\hfill\break \newline
For a given right linear operator $A:\D(A)\subseteq V_{\quat}^R\longrightarrow V_{\quat}^R$ and $\bfrakq\in\quat$, we define the operator $Q_{\bfrakq}(A):\D(A^{2})\longrightarrow\quat$ by  $$Q_{\bfrakq}(A)=A^{2}-2\text{Re}(\bfrakq)A+|\bfrakq|^{2}\Iop,$$
where $\bfrakq=q_{0}+\bi q_1 + \bj q_2 + \bk q_3$ is a quaternion, $\text{Re}(\bfrakq)=q_{0}$  and $|\bfrakq|^{2}=q_{0}^{2}+q_{1}^{2}+q_{2}^{2}+q_{3}^{2}.$\\
In the literature, the operator $Q_{\bfrakq}(A)$ is sometimes also denoted by $R_{\bfrakq}(A)$ and it is called pseudo-resolvent since it is not the resolvent operator of $A$ but it is the one related to the notion of spectrum as we shall see in the next definition. The notion of $S$-spectrum has been introduced by one of the authors and her collaborators. For more information, the reader may consult e.g. \cite{Fab, Fab1, NFC}, and  \cite{ghimorper}.
\begin{definition}
Let $A:\D(A)\subseteq V_{\quat}^R\longrightarrow V_{\quat}^R$ be a right linear operator. The {\em $S$-resolvent set} (also called \textit{spherical resolvent} set) of $A$ is the set $\rho_{S}(A)\,(\subset\quat)$ such that the three following conditions hold true:
\begin{itemize}
\item[(a)] $\ker(Q_{\bfrakq}(A))=\{0\}$.
\item[(b)] $\text{ran}(Q_{\bfrakq}(A))$ is dense in $V_{\quat}^{R}$.
\item[(c)] $Q_{\bfrakq}(A)^{-1}:\text{ran}(Q_{\bfrakq}(A))\longrightarrow\D(A^{2})$ is bounded.
\end{itemize}
The \textit{$S$-spectrum} (also called \textit{spherical spectrum}) $\sigma_{S}(A)$ of $A$ is defined by setting $\sigma_{S}(A):=\quat\smallsetminus\rho_{S}(A)$. It decomposes into three disjoint subsets as follows:
\begin{itemize}
\item[(i)] the \textit{spherical point spectrum} of $A$: $$\sigma_{pS}(A):=\{\bfrakq\in\quat~\mid~\ker(Q_{\bfrakq}(A))\ne\{0\}\}.$$
\item[(ii)] the \textit{spherical residual spectrum} of $A$: $$\sigma_{rS}(A):=\{\bfrakq\in\quat~\mid~\ker(Q_{\bfrakq}(A))=\{0\},\overline{\text{ran}(Q_{\bfrakq}(A))}\ne V_{\quat}^{R}~\}.$$
\item[(iii)] the \textit{spherical continuous spectrum} of $A$: $$\sigma_{cS}(A):=\{\bfrakq\in\quat~\mid~\ker(Q_{\bfrakq}(A))=\{0\},\overline{\text{ran}(Q_{\bfrakq}(A))}= V_{\quat}^{R}, Q_{\bfrakq}(A)^{-1}\notin\B(V_{\quat}^{R}) ~\}.$$
\end{itemize}
If $A\phi=\phi\bfrakq$ for some $\bfrakq\in\quat$ and $\phi\in V_{\quat}^{R}\smallsetminus\{0\}$, then $\phi$ is called an \textit{eigenvector of $A$ with right eigenvalue} $\bfrakq$. The set of right eigenvalues coincides with the point $S$-spectrum, see \cite{ghimorper}, Proposition 4.5.
\end{definition}
\begin{proposition}
\label{Pr2}\cite{AC, ghimorper} Let $A\in\mathcal{L}(V_H^R)$ and $A$ be self-adjoint, then $\sigma_S(A)\subset\mathbb{R}$.
\end{proposition}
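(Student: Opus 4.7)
The plan is to establish the contrapositive: every quaternion $\bfrakq \in \quat \setminus \mathbb{R}$ lies in the S-resolvent set $\rho_S(A)$. Write $\bfrakq = q_0 + \bi q_1 + \bj q_2 + \bk q_3$ and set $\lambda := q_1^2 + q_2^2 + q_3^2 > 0$, so that $|\bfrakq|^2 - q_0^2 = \lambda$. All three conditions defining $\rho_S(A)$ will be read off from a single coercivity estimate on the pseudo-resolvent $Q_{\bfrakq}(A) = A^2 - 2q_0 A + |\bfrakq|^2\Iop$ acting on $\D(A^2)$.

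The first main step is to compute $\langle Q_{\bfrakq}(A)\phi \mid \phi\rangle$ for $\phi \in \D(A^2)$. Since $A\phi \in \D(A)$ and $A = A^\dagger$, we get $\langle A^2\phi \mid \phi\rangle = \langle A\phi \mid A\phi\rangle = \|A\phi\|^2$, and by Proposition \ref{N-S sym} the quantity $s := \langle A\phi \mid \phi\rangle$ is real. Using $q_0\phi = \phi q_0$ from Proposition \ref{lft_mul}(e), one expands $\|A\phi - q_0\phi\|^2 = \|A\phi\|^2 - 2q_0 s + q_0^2\|\phi\|^2$, and subtracting yields the key identity
\[
\langle Q_{\bfrakq}(A)\phi \mid \phi\rangle = \|A\phi - q_0\phi\|^2 + \lambda\|\phi\|^2 \geq \lambda\|\phi\|^2.
\]
Since $Q_{\bfrakq}(A)$ is symmetric (a real polynomial in the self-adjoint operator $A$), this scalar is real, so Cauchy--Schwarz forces $\|Q_{\bfrakq}(A)\phi\|\,\|\phi\| \geq \lambda\|\phi\|^2$ and hence $\|Q_{\bfrakq}(A)\phi\| \geq \lambda\|\phi\|$ for every $\phi \in \D(A^2)$. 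This coercivity immediately yields condition (a), $\ker(Q_{\bfrakq}(A)) = \{0\}$, and condition (c), boundedness of $Q_{\bfrakq}(A)^{-1}$ on $\text{ran}(Q_{\bfrakq}(A))$ with norm at most $\lambda^{-1}$.

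It remains to establish (b), density of $\text{ran}(Q_{\bfrakq}(A))$ in $V_\quat^R$. By Proposition \ref{cldop}(c), this is equivalent to $\ker(Q_{\bfrakq}(A)^\dagger) = \{0\}$. The natural route is to show that $Q_{\bfrakq}(A)$ is in fact self-adjoint on $\D(A^2)$---equivalently, that $A^2$ is self-adjoint whenever $A$ is---because then $\ker(Q_{\bfrakq}(A)^\dagger) = \ker(Q_{\bfrakq}(A))$, which we have just shown is trivial.

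The main obstacle is precisely this self-adjointness of $A^2$ in the quaternionic framework. In the complex setting it is a standard consequence of the spectral theorem, but without that tool one must argue at the level of domains: given $\psi \in \D((A^2)^\dagger)$, one has to use the defining identity $\langle \psi \mid A^2 \phi\rangle = \langle (A^2)^\dagger\psi \mid \phi\rangle$ together with $A = A^\dagger$ to place $\psi$ successively in $\D(A)$ and then in $\D(A^2)$, with $(A^2)^\dagger\psi = A^2\psi$. One technical subtlety is that the calculation must respect the quaternionic conjugation rules on the inner product, and that $\D(A^2)$ be large enough (a core for $A$, in a suitable sense) to recover the adjoint relation. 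Once this structural fact is in place, the three spectral conditions are verified for every $\bfrakq \in \quat \setminus \mathbb{R}$, and we conclude $\sigma_S(A) \subset \mathbb{R}$.
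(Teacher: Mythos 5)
First, a remark on the comparison: the paper does not actually prove this proposition --- it is imported from \cite{AC,ghimorper} with no internal argument --- so your attempt can only be judged on its own merits. The first half of your proof is sound and closely parallels the computation the authors do carry out in Proposition \ref{pr00}: for $\phi\in\D(A^2)$ the identity
$\langle Q_{\bfrakq}(A)\phi\mid\phi\rangle=\|A\phi-\phi q_0\|^{2}+(q_1^{2}+q_2^{2}+q_3^{2})\|\phi\|^{2}$
is correct (using $\langle A^{2}\phi\mid\phi\rangle=\|A\phi\|^{2}$, the reality of $\langle A\phi\mid\phi\rangle$ from Proposition \ref{N-S sym}, and $r\phi=\phi r$ for real $r$), and Cauchy--Schwarz then yields $\|Q_{\bfrakq}(A)\phi\|\geq(q_1^{2}+q_2^{2}+q_3^{2})\|\phi\|$, which settles conditions (a) and (c) in the definition of $\rho_S(A)$.

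The genuine gap is condition (b), and you have correctly located but not closed it. You reduce density of $\text{ran}(Q_{\bfrakq}(A))$ to self-adjointness of $Q_{\bfrakq}(A)$ (equivalently of $A^{2}$) on $\D(A^{2})$, but this is precisely the hard point: in the complex case it is a corollary of the spectral theorem, which cannot be invoked here since the present proposition is itself an ingredient of the quaternionic spectral theorem in \cite{AC}. Concretely, two things are missing. First, for $Q_{\bfrakq}(A)^{\dagger}$ to exist at all one needs $\overline{\D(A^{2})}=V_{\quat}^{R}$; this is nontrivial for unbounded self-adjoint $A$, and the paper itself dodges it by taking it as a \emph{hypothesis} in Proposition \ref{pr00}. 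Second, symmetry only gives the inclusion $Q_{\bfrakq}(A)\subseteq Q_{\bfrakq}(A)^{\dagger}$, hence $\ker(Q_{\bfrakq}(A))\subseteq\ker(Q_{\bfrakq}(A)^{\dagger})$, which is the wrong direction; to get $\ker(Q_{\bfrakq}(A)^{\dagger})=\{0\}$ via Proposition \ref{cldop}(c) you must prove the reverse domain inclusion, and your sketch (``place $\psi$ successively in $\D(A)$ and then in $\D(A^{2})$'') does not go through as stated, because $\langle\psi\mid A^{2}\phi\rangle=\langle\eta\mid\phi\rangle$ for all $\phi\in\D(A^{2})$ does not by itself put $\psi$ in $\D(A^{\dagger})$. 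The standard repair is to factor $A^{2}-2q_0A+|\bfrakq|^{2}\Iop$ through $(A-\bfrakq\Iop)$ and $(A-\overline{\bfrakq}\Iop)$ and use surjectivity of these factors (as in Proposition \ref{csadj-gen}(c)), but in this paper that requires the additional structural hypotheses ($\bi\phi,\bj\phi,\bk\phi\in\D(A)$ and anti-symmetry of $\bi A,\bj A,\bk A$) which the bare statement of the proposition does not assume. As written, your argument proves (a) and (c) but leaves (b) unproved.
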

\begin{proposition}\label{pr00}
Let $A:\D(A)\subseteq V_{\quat}^R\longrightarrow V_{\quat}^R$ be right linear operator with the property that $\bi\phi,\bj\phi,\bk\phi\in \D(A)$, for all $\phi\in \D(A)$, and $\bfrakq
\in\quat$. Then the pseudo-resolvent operator $Q_{\bfrakq}(A)$ of $A$ can be written as follows:
\begin{equation}\label{resol}
Q_{\bfrakq}(A)=\frac{1}{2}\left[ (A-\bfrakq\Iop)(A-\overline{\bfrakq}\Iop)+(A-\overline{\bfrakq}\Iop)(A-\bfrakq\Iop)\right].
\end{equation}
Furthermore, if $A$ is densely defined, closed and symmetric with $\overline{\D(A^2)}=V_{\quat}^{R}$, then $\varPi(A)\subseteq\rho_S (A)$.
\end{proposition}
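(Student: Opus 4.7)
The plan is to treat the two assertions separately.

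For the identity \eqref{resol}, I proceed by direct expansion. The standing hypothesis that $\bi\phi,\bj\phi,\bk\phi\in\D(A)$ for every $\phi\in\D(A)$, combined with additivity of the left multiplication (Prop \ref{lft_mul}(a)), ensures that $\bfrakq\phi\in\D(A)$ for every $\bfrakq\in\quat$, so that for $\phi\in\D(A^2)$ both operator products on the right-hand side of \eqref{resol} are well defined. Expanding each product, using $\bfrakq\overline{\bfrakq}=\overline{\bfrakq}\bfrakq=|\bfrakq|^2$ via Prop \ref{lft_mul}(c) to simplify the zeroth-order term, then summing and using $\bfrakq+\overline{\bfrakq}=2q_0\in\mathbb R$ together with Prop \ref{lft_mul}(e) ($r\phi=\phi r$ for real $r$) and right linearity of $A$ to rewrite the cross terms as multiples of $q_0 A\phi$, the sum reduces to $2A^2\phi-4q_0 A\phi+2|\bfrakq|^2\phi=2Q_{\bfrakq}(A)\phi$. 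Dividing by $2$ yields \eqref{resol}.

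For the inclusion $\varPi(A)\subseteq\rho_S(A)$, fix $\bfrakq=q_0+\bi q_1+\bj q_2+\bk q_3\in\varPi(A)$ and set $c:=q_1^2+q_2^2+q_3^2$. The key preparatory identity, valid for $\phi\in\D(A^2)$, is
$$\langle Q_{\bfrakq}(A)\phi\mid\phi\rangle=\|(A-q_0\Iop)\phi\|^2+c\|\phi\|^2,$$
obtained by expanding $Q_{\bfrakq}(A)=A^2-2q_0 A+|\bfrakq|^2\Iop$, using the symmetry relation $\langle A^2\phi\mid\phi\rangle=\|A\phi\|^2$ and the real-valuedness $\langle A\phi\mid\phi\rangle\in\mathbb R$ from Prop \ref{N-S sym}, then completing the square in the real variable $q_0$. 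Cauchy--Schwarz applied to this identity provides a positive lower bound: when $\bfrakq\notin\mathbb R$ one obtains $\|Q_{\bfrakq}(A)\phi\|\geq c\|\phi\|$, while when $\bfrakq=q_0\in\mathbb R$ one has $Q_{\bfrakq}(A)=(A-q_0\Iop)^2$ and the regular-point bound applied twice yields $\|Q_{\bfrakq}(A)\phi\|\geq c_{\bfrakq}^2\|\phi\|$. This lower bound immediately furnishes conditions (a) ($\ker Q_{\bfrakq}(A)=\{0\}$) and (c) (boundedness of $Q_{\bfrakq}(A)^{-1}$ on its range) in the definition of $\rho_S(A)$.

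The remaining condition, density of $\text{ran}(Q_{\bfrakq}(A))$ in $V_{\quat}^R$, is the principal obstacle. Since $Q_{\bfrakq}(A)$ is a real-coefficient polynomial in the symmetric operator $A$ it is symmetric, and by $\overline{\D(A^2)}=V_{\quat}^R$ it is densely defined, hence closable; its closure inherits the lower bound, so that Prop \ref{prrpt} gives $\overline{\text{ran}(Q_{\bfrakq}(A))}=\text{ran}(\overline{Q_{\bfrakq}(A)})$ closed. Density is therefore equivalent to showing $\ker(Q_{\bfrakq}(A)^\dagger)=\{0\}$. For $\psi$ in this kernel the orthogonality $\langle\psi\mid Q_{\bfrakq}(A)\phi\rangle=0$ holds for all $\phi\in\D(A^2)$, and approximating $\psi$ by $\phi_n\in\D(A^2)$ (using $\overline{\D(A^2)}=V_{\quat}^R$) combined with the positive-definite identity above yields
$$c\|\phi_n\|^2\leq\langle Q_{\bfrakq}(A)\phi_n\mid\phi_n-\psi\rangle\leq\|Q_{\bfrakq}(A)\phi_n\|\cdot\|\phi_n-\psi\|,$$
which is to force $\psi=0$ in the limit. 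The main technical difficulty is controlling the growth of $\|Q_{\bfrakq}(A)\phi_n\|$ along the approximating sequence; the closedness of $A$, together with the dense-domain hypothesis $\overline{\D(A^2)}=V_{\quat}^R$, are both essential for carrying out this limiting argument rigorously.
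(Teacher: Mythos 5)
Your expansion proof of \eqref{resol} is correct and in fact more self-contained than the paper, which simply cites Ghiloni--Recupero for the identity; the cancellation of the cross terms via $\bfrakq+\overline{\bfrakq}=2q_0$, Proposition \ref{lft_mul}(e) and right linearity is the right computation. Your lower bound is also sound, though it differs from the paper's: the paper applies Cauchy--Schwarz directly to \eqref{resol} to get $\|Q_{\bfrakq}(A)\phi\|\,\|\phi\|\geq\tfrac12\|(A-\bfrakq\Iop)\phi\|^{2}\geq\tfrac{c_{\bfrakq}^{2}}{2}\|\phi\|^{2}$, which treats real and non-real $\bfrakq\in\varPi(A)$ uniformly, whereas you complete the square and split into cases. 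Either way, conditions (a) and (c) of the $S$-resolvent set follow.

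The genuine gap is the density of $\text{ran}(Q_{\bfrakq}(A))$, and you have in effect conceded it yourself. The inequality $c\|\phi_n\|^{2}\leq\|Q_{\bfrakq}(A)\phi_n\|\cdot\|\phi_n-\psi\|$ does not force $\psi=0$: for an arbitrary approximating sequence $\phi_n\to\psi$ there is no control on $\|Q_{\bfrakq}(A)\phi_n\|$, and precisely in the situation you must rule out --- $\psi\in\ker(Q_{\bfrakq}(A)^{\dagger})$ with $\psi\notin\D(\overline{Q_{\bfrakq}(A)})$ --- the quantity $\|Q_{\bfrakq}(A)\phi_n\|$ can grow like $\|\phi_n-\psi\|^{-1}$, leaving the right-hand side bounded away from zero. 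Note that your lower bound only gives $\ker(\overline{Q_{\bfrakq}(A)})=\{0\}$, which is strictly weaker than the needed $\ker(Q_{\bfrakq}(A)^{\dagger})=\{0\}$; the difference between these two kernels is exactly the kind of self-adjointness question the whole paper is about, and for a merely symmetric non-self-adjoint $A$ the ranges $\text{ran}(A-\bfrakq\Iop)$ themselves need not be dense, so nothing here is automatic. Declaring that closedness of $A$ and density of $\D(A^{2})$ are ``essential for carrying out this limiting argument rigorously'' is not a proof. The paper closes this step with an actual (if delicate) argument: it takes $\psi\perp\text{ran}(Q_{\bfrakq}(A))$, identifies $\psi\in\ker(Q_{\bfrakq}(A)^{\dagger})$ via Proposition \ref{cldop}(c), picks $\psi_n\in\D(A^{2})$ with $\psi_n\to\psi$, argues $Q_{\bfrakq}(A)\psi_n=Q_{\bfrakq}(A)^{\dagger}\psi_n\to Q_{\bfrakq}(A)^{\dagger}\psi=0$, and then uses the boundedness of $Q_{\bfrakq}(A)^{-1}$ already established to conclude $\psi_n\to0$, hence $\psi=0$. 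You need to supply a step of this kind; as written, the key condition (b) of the $S$-resolvent set is not established.
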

\begin{proof} Formula \eqref{resol} has been proved in \cite{ghirec}, Proposition 5.9. To prove the second part of the statement, suppose that $A$ is a densely defined closed symmetric operator then,  using Cauchy-Schwarz inequality, we have for any $\phi\in\D(A^2)$,
\begin{eqnarray*}
\|Q_{\bfrakq}(A)\phi\|\|\phi\|
&\geq&\langle Q_{\bfrakq}(A)\phi\mid\phi\rangle\\
&=&\frac{1}{2}\langle (A-\bfrakq\Iop)(A-\overline{\bfrakq}\Iop)\phi\mid\phi\rangle+\frac{1}{2}\langle (A-\overline{\bfrakq}\Iop)(A-\bfrakq\Iop)\phi\mid\phi\rangle\\
&=& \frac{1}{2}\|(A-\overline{\bfrakq}\Iop)\phi\|^{2}+
\frac{1}{2}\|(A-\bfrakq\Iop)\phi\|^{2}.
\end{eqnarray*}
That is, for any $\bfrakq\in\quat$ and $\phi\in\D(A^2)$,
\begin{equation}\label{neq1}
\|Q_{\bfrakq}(A)\phi\|\|\phi\|
\geq \frac{1}{2}\|(A-\overline{\bfrakq}\Iop)\phi\|^{2}+
\frac{1}{2}\|(A-\bfrakq\Iop)\phi\|^{2}.
\end{equation}
Thus, if $\bfrakq\in\varPi(A)$, then there exists $c_{\bfrakq}>0$ such that (\ref{reg_pt}) holds. Let $\phi\in\D(A^2)$, then
$$\|Q_{\bfrakq}(A)\phi\|\|\phi\|
\geq\frac{1}{2}\|(A-\bfrakq\Iop)\phi\|^{2}\geq\frac{c_{\bfrakq}^{2}}{2}\|\phi\|^2.$$
This is equivalent to say that for every $\phi\in\D(A^2)$,
\begin{equation}\label{neq2}
\|Q_{\bfrakq}(A)\phi\|\geq\frac{c_{\bfrakq}^{2}}{2}\|\phi\|.
\end{equation}
This inequality (\ref{neq2}) suffices to say that $\ker(Q_{\bfrakq}(A))=\{0\}$ and
$Q_{\bfrakq}(A)^{-1}:\text{ran}(Q_{\bfrakq}(A))\longrightarrow\D(A^{2})$ exists, and is bounded.
Assume that there exists $\psi\in V_{\quat}^{R}\smallsetminus\{0\}$ such that $\psi\bot\,\text{ran}(Q_{\bfrakq}(A))$.
Since $\overline{\D(A^2)}=V_{\quat}^{R}$ and by (c) in Proposition \ref{cldop} we have ${Q_{\bfrakq}(A)}^{\dagger}\psi=0$. Then there exists a sequence $\{\psi_n\}$ in $\D(A^2)$ such that $\psi_n\longrightarrow\psi$. It follows that ${Q_{\bfrakq}(A)}\psi_n={Q_{\bfrakq}(A)}^{\dagger}\psi_n\longrightarrow{Q_{\bfrakq}(A)}^{\dagger}\psi=0$ from ${Q_{\bfrakq}(A)}^{\dagger}$ is closed. That is, ${Q_{\bfrakq}(A)}\psi_n\longrightarrow0$. Since $Q_{\bfrakq}(A)^{-1}$ is continuous, we have
$$\psi_n=Q_{\bfrakq}(A)^{-1}{Q_{\bfrakq}(A)}\psi_n\longrightarrow Q_{\bfrakq}(A)^{-1}0=0.$$
Thus $\psi=0$. This contradicts to the fact that $\psi\ne0$. Therefore $\text{ran}(Q_{\bfrakq}(A))^{\bot}=\{0\}$ and so $\text{ran}(Q_{\bfrakq}(A))$ is dense in $V_{\quat}^{R}$. Hence  $\bfrakq\in\rho_S (A)$, this completes the proof.
\end{proof}
\begin{lemma}\label{dim}\cite{BT}
If $E$ and $F$ are closed linear subspaces of $V_{\quat}^R$ such that $\dim E<\dim F$, then there exists $\psi\in F\cap E^{\bot}$ with $\psi\neq0$.
\end{lemma}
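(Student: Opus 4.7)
The plan is to consider the orthogonal projection onto $E$, restricted to $F$, and reach a contradiction via a dimension count. Since $E$ is a closed right $\quat$-subspace of $V_{\quat}^R$, we have the orthogonal decomposition $V_{\quat}^R=E\oplus E^{\bot}$, and from properties (i) and (iv) of the inner product one checks that $E^{\bot}$ is also a right $\quat$-subspace. Consequently the orthogonal projection $P_E:V_{\quat}^R\longrightarrow E$ is a bounded right $\quat$-linear operator.

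Next I consider the restriction $T:=P_E|_F:F\longrightarrow E$, which is again right $\quat$-linear. A vector $\phi\in F$ belongs to $\ker T$ if and only if $P_E\phi=0$, that is $\phi\in E^{\bot}$, so $\ker T=F\cap E^{\bot}$. The lemma is therefore equivalent to the statement $\ker T\neq\{0\}$.

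Suppose, for a contradiction, that $T$ is injective. Because $V_{\quat}^R$ is assumed separable, the Hilbert dimension of every closed subspace lies in $\{0,1,2,\dots\}\cup\{\aleph_0\}$; combined with the strict inequality $\dim E<\dim F$ this forces $\dim E=n<\infty$, for otherwise both sides would equal $\aleph_0$ and coincide. Since $\dim F>n$, one can select $n+1$ right linearly independent vectors $\phi_1,\dots,\phi_{n+1}\in F$. Their images $T\phi_1,\dots,T\phi_{n+1}$ lie in the $n$-dimensional right $\quat$-module $E$, hence admit a nontrivial relation $\sum_{j=1}^{n+1}T\phi_j\bfraka_j=0$ with $\bfraka_j\in\quat$ not all zero. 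Right linearity and injectivity of $T$ immediately yield $\sum_{j=1}^{n+1}\phi_j\bfraka_j=0$, contradicting the independence of the $\phi_j$. Hence $\ker T\neq\{0\}$, and any nonzero $\psi\in\ker T$ furnishes the required element of $F\cap E^{\bot}$.

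The only delicate point is the reduction to the finite-dimensional situation via separability; once this reduction is in place the argument is standard right $\quat$-module linear algebra, relying only on the fact that any $n+1$ vectors in an $n$-dimensional right $\quat$-module must be right linearly dependent.
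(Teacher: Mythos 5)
The paper itself gives no proof of this lemma; it is quoted from \cite{BT}, so there is nothing internal to compare against. Your argument is correct and self-contained: identifying $F\cap E^{\bot}$ with the kernel of the restricted projection $P_E|_F$, using separability to force $\dim E=n<\infty$ under the strict inequality $\dim E<\dim F$, and then running the dimension count over the division ring $\quat$ (any $n+1$ vectors in a rank-$n$ free right $\quat$-module are right linearly dependent, and an injective right linear map pulls a nontrivial relation back) is exactly the standard route, and the quaternionic points you need --- that $E^{\bot}$ is a closed right submodule and that $P_E$ is right $\quat$-linear --- do hold. The only cosmetic slip is that right-linearity of $P_E$ rests on properties (iv) and (v) of the inner product (invariance of $E$ and $E^{\bot}$ under right scalar multiplication), not on (i) and (iv).
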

\begin{theorem}\label{def_con}
Let $A:\D(A)\subseteq V_{\quat}^R\longrightarrow V_{\quat}^R$ be a densely defined right linear closable operator and $\bfrakq \in\quat$. Then the defect number $d_{\bfrakq}(A)$ is a constant on each connected component of the open set $\varPi(A)$.
\end{theorem}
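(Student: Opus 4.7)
The plan is to show that the integer-valued function $\bfrakq\mapsto d_{\bfrakq}(A)$ is locally constant on $\varPi(A)$; since $\varPi(A)$ is open by Proposition \ref{pre_set}(a), local constancy forces constancy on each connected component. Using Proposition \ref{pre_set}(b), we may replace $A$ by its closure $\overline{A}$ without changing $\varPi(A)$ or any defect number, so from the start I will assume $A$ is closed. Then by Proposition \ref{pre_set}(c) the range $\text{ran}(A-\bfrakq\Iop)$ is a closed subspace of $V_{\quat}^R$ for every $\bfrakq\in\varPi(A)$, and consequently
$$\bigl(\text{ran}(A-\bfrakq\Iop)^{\bot}\bigr)^{\bot}=\text{ran}(A-\bfrakq\Iop).$$

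Fix $\bfrakq_{0}\in\varPi(A)$ and set $E=\text{ran}(A-\bfrakq_{0}\Iop)^{\bot}$ and $F=\text{ran}(A-\bfrakq\Iop)^{\bot}$. I will prove that if $|\bfrakq-\bfrakq_{0}|<c_{\bfrakq_{0}}$, then $\dim F\leq \dim E$. Arguing by contradiction and invoking Lemma \ref{dim}, suppose $\dim E<\dim F$; then there exists $\psi\in F\cap E^{\bot}$ with $\psi\neq 0$. Since $E^{\bot}=\text{ran}(A-\bfrakq_{0}\Iop)$, there is $\phi\in\D(A)$ with $\psi=(A-\bfrakq_{0}\Iop)\phi$. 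Writing $(A-\bfrakq\Iop)\phi=\psi+(\bfrakq_{0}-\bfrakq)\phi$ and using $\psi\bot\,(A-\bfrakq\Iop)\phi$ yields
$$\|\psi\|^{2}=-\langle\psi\mid(\bfrakq_{0}-\bfrakq)\phi\rangle.$$
Here I must use the quaternionic left multiplication carefully: by Proposition \ref{lft_mul}(b) and the Cauchy--Schwarz inequality,
$$|\langle\psi\mid(\bfrakq_{0}-\bfrakq)\phi\rangle|\leq \|\psi\|\,\|(\bfrakq_{0}-\bfrakq)\phi\|=|\bfrakq_{0}-\bfrakq|\,\|\psi\|\,\|\phi\|.$$
Combining with the regularity bound $\|\psi\|\geq c_{\bfrakq_{0}}\|\phi\|$ gives $\|\psi\|^{2}\leq \dfrac{|\bfrakq_{0}-\bfrakq|}{c_{\bfrakq_{0}}}\|\psi\|^{2}$, forcing $|\bfrakq_{0}-\bfrakq|\geq c_{\bfrakq_{0}}$, a contradiction. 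Hence $d_{\bfrakq}(A)\leq d_{\bfrakq_{0}}(A)$ on the ball of radius $c_{\bfrakq_{0}}$.

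To get the reverse inequality, I apply the same argument with the roles of $\bfrakq$ and $\bfrakq_{0}$ interchanged, which demands $|\bfrakq-\bfrakq_{0}|<c_{\bfrakq}$. The estimate in the proof of Proposition \ref{pre_set}(a) shows $c_{\bfrakq}\geq c_{\bfrakq_{0}}-|\bfrakq-\bfrakq_{0}|$, so restricting to $|\bfrakq-\bfrakq_{0}|<c_{\bfrakq_{0}}/2$ ensures $c_{\bfrakq}>|\bfrakq-\bfrakq_{0}|$ and the symmetric argument yields $d_{\bfrakq_{0}}(A)\leq d_{\bfrakq}(A)$. Consequently $d_{\bfrakq}(A)=d_{\bfrakq_{0}}(A)$ on an open neighborhood of $\bfrakq_{0}$ in $\varPi(A)$, so $\bfrakq\mapsto d_{\bfrakq}(A)$ is locally constant and hence constant on each connected component of $\varPi(A)$.

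The only genuine obstacle beyond the classical complex argument is the non-commutativity: the term $(\bfrakq_{0}-\bfrakq)\phi$ is a \emph{left} scalar multiplication defined through the fixed Hilbert basis, and care is needed to bound $\|(\bfrakq_{0}-\bfrakq)\phi\|$ via Proposition \ref{lft_mul}(b) rather than trying to pull the scalar past the inner product. Once this subtlety is handled, the rest of the proof is parallel to the complex Hilbert space case.
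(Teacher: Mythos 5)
Your proof is correct and follows essentially the same route as the paper: reduce to the closed case via Proposition \ref{pre_set}, use the closedness of $\text{ran}(A-\bfrakq\Iop)$ and Lemma \ref{dim} to produce a nonzero $\psi$ in one deficiency subspace lying in the range of the other shifted operator, and then combine orthogonality, the Cauchy--Schwarz inequality, Proposition \ref{lft_mul}(b) and the regularity bound to reach the contradiction $|\bfrakq-\bfrakq_0|\geq c_{\bfrakq_0}$. The only (harmless) divergence is that you obtain the reverse inequality by swapping the roles of $\bfrakq$ and $\bfrakq_0$ on the half-radius ball, whereas the paper argues both inequalities directly on the full ball $|\bfrakq-\bfrakq_0|<c_{\bfrakq_0}$ and then concludes by a polygonal-path compactness argument rather than by appealing to local constancy.
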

\begin{proof}
By the Proposition (\ref{pre_set}), one can assume without loss of generality that $A$ is closed. Then $\text{ran}(A-\bfrakq\Iop)$ is closed by the Proposition (\ref{pre_set}).
%Now using Proposition (\ref{cldop}), we get $$\text{ran}(A-\overline{\bfrakq}\Iop)^{\bot}=\ker(A^{\dagger}-\bfrakq\Iop)~~\mbox{~~and~~}~~ \text{ran}(A-\bfrakq\Iop)^{\bot}=\ker(A^{\dagger}-\overline{\bfrakq}\Iop).$$
Let $\bfrakq_0 \in\varPi(A)$ such that $\mid \bfrakq-\bfrakq_0 \mid<c_{\bfrakq_0}$ where $c_{\bfrakq_0}$ is a constant satisfying (\ref{reg_pt}), then, by (a) of Proposition (\ref{pre_set})$, \bfrakq \in\varPi(A)$.
Assume that $$d_{\bfrakq}(A)=\dim\text{ran}(A-\bfrakq\Iop)^{\bot}<\dim\text{ran}(A-\bfrakq_0\Iop)^{\bot}=d_{\bfrakq_0}(A).$$ Hence by the Lemma (\ref{dim}), we have
there exists $\psi\in \text{ran}(A-\bfrakq_0\Iop)^{\bot}\cap \text{ran}(A-\bfrakq\Iop)$ with $\psi\neq0$. So $\psi=(A-\bfrakq\Iop)\phi$ for some $\phi\in\D(A)$. But, since $\psi\in\text{ran}(A-\bfrakq_0\Iop)^{\bot}$\,, we have
\be\label{ore}
\langle(A-\bfrakq\Iop)\phi\mid(A-\bfrakq_{0}\Iop)\phi\rangle=0
\en
as $(A-\bfrakq_{0}\Iop)\phi\in\text{ran}(A-\bfrakq_0\Iop)$. The same equation (\ref{ore}) can be obtained even when  the inequality $\dim\text{ran}(A-\bfrakq_0\Iop)^{\bot}<\dim\text{ran}(A-\bfrakq\Iop)^{\bot}$ is considered. Now without loss of generality assume that $(A-\bfrakq_0\Iop)\phi\neq0$ and using (\ref{ore}), we can derive, using (b) in the Proposition (\ref{lft_mul}),
\begin{eqnarray*}
\|(A-\bfrakq_0\Iop)\phi\|^{2}
&=&\langle(\bfrakq-\bfrakq_{0})\phi+(A-\bfrakq\Iop)\phi\mid(A-\bfrakq_0\Iop)\phi\rangle\\
&\leq&\mid\bfrakq-\bfrakq_{0}\mid\|\phi\|\|(A-\bfrakq_0\Iop)\phi\|.
\end{eqnarray*}
Thus $\|(A-\bfrakq_0\Iop)\phi\|\leq\mid\bfrakq-\bfrakq_{0}\mid\|\phi\|$. Since $\phi\neq0$ and $\mid \bfrakq-\bfrakq_0 \mid<c_{\bfrakq_0}$, we have
$$\mid\bfrakq-\bfrakq_{0}\mid\|\phi\|<c_{\bfrakq_0}\|\phi\|\leq\|(A-\bfrakq_0\Iop)\phi\|\leq\mid\bfrakq-\bfrakq_{0}\mid\|\phi\|,$$ which is a contradiction. Therefore, for any $\bfrakq_0 \in\varPi(A)$ with $\mid \bfrakq-\bfrakq_0 \mid<c_{\bfrakq_0}$,
$$\dim\ker(A^{\dagger}-\bfrakq_{0}\Iop)=\dim\ker(A^{\dagger}-\bfrakq\Iop)=d_{\bfrakq}(A).$$ Finally let $\bfrakx,\bfraky\in\varPi(A)$ and $\mathscr{P}(\bfrakx,\bfraky)$ be a polygonal path from $\bfrakx$ to $\bfraky$. Now $\{B(\bfrakq,c_{\bfrakq_0})\,:\,\bfrakq\in\mathscr{P}(\bfrakx,\bfraky)\}$ forms a open cover of the compact set $\mathscr{P}(\bfrakx,\bfraky)$, so there exists a finite sub-cover $\{B(\bfrakq_{\tau},c_{\bfrakq_0})\,:\,\tau=1,2,\cdots,s\}$. Since $d_{\bfrakq}(A)$ is constant on each open ball $B(\bfrakq_{\tau},\delta)$, we conclude that $d_{\bfrakx}(A)=d_{\bfraky}(A)$. Hence the theorem follows.
\end{proof}
\begin{proposition}\label{pr01}
Let $A:\D(A)\subseteq V_{\quat}^R\longrightarrow V_{\quat}^R$ be a densely defined right linear closed symmetric operator and $\bfrakq,\overline{\bfrakq}\in\varPi(A)$. Then
$d_{\bfrakq}(A)=d_{\overline{\bfrakq}}(A)=0$ if and only if $A$ is self-adjoint.
\end{proposition}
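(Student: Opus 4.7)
The plan is to leverage the identification of range-orthogonal complements with kernels of adjoints, as supplied by Proposition \ref{cldop}(c), together with the closed range property of Proposition \ref{pre_set}(c). The pivotal preliminary observation is that since $\bfrakq\Iop$ is bounded and everywhere defined, and since $(\bfrakq\Iop)^{\dagger}=\overline{\bfrakq}\Iop$ by Proposition \ref{lft_mul}(d) (equivalently, a special case of \eqref{sc_mul_aj-op}), one has $(A-\bfrakq\Iop)^{\dagger}=A^{\dagger}-\overline{\bfrakq}\Iop$ with domain $\D(A^{\dagger})$. Combined with Proposition \ref{cldop}(c) this gives
$$
\text{ran}(A-\bfrakq\Iop)^{\bot}=\ker(A^{\dagger}-\overline{\bfrakq}\Iop),\qquad \text{ran}(A-\overline{\bfrakq}\Iop)^{\bot}=\ker(A^{\dagger}-\bfrakq\Iop),
$$
so that $d_{\bfrakq}(A)=\dim\ker(A^{\dagger}-\overline{\bfrakq}\Iop)$ and $d_{\overline{\bfrakq}}(A)=\dim\ker(A^{\dagger}-\bfrakq\Iop)$. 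This translation will do all the work.

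For the forward implication, assume $A=A^{\dagger}$. Then $d_{\bfrakq}(A)=\dim\ker(A-\overline{\bfrakq}\Iop)$. If $\phi\in\ker(A-\overline{\bfrakq}\Iop)$, then $0=\|(A-\overline{\bfrakq}\Iop)\phi\|\geq c_{\overline{\bfrakq}}\|\phi\|$ by the regular-point estimate \eqref{reg_pt}, since $\overline{\bfrakq}\in\varPi(A)$ by hypothesis; hence $\phi=0$ and $d_{\bfrakq}(A)=0$. The same argument with $\bfrakq$ and $\overline{\bfrakq}$ interchanged, using $\bfrakq\in\varPi(A)$, yields $d_{\overline{\bfrakq}}(A)=0$.

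For the converse, assume $d_{\bfrakq}(A)=d_{\overline{\bfrakq}}(A)=0$. Since $A$ is closed and $\overline{\bfrakq}\in\varPi(A)$, Proposition \ref{pre_set}(c) ensures that $\text{ran}(A-\overline{\bfrakq}\Iop)$ is closed; combined with $d_{\overline{\bfrakq}}(A)=0$, which says $\text{ran}(A-\overline{\bfrakq}\Iop)^{\bot}=\{0\}$, this yields $\text{ran}(A-\overline{\bfrakq}\Iop)=V_{\quat}^{R}$. At the same time, $d_{\bfrakq}(A)=0$ forces $\ker(A^{\dagger}-\overline{\bfrakq}\Iop)=\{0\}$. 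Since $A$ is symmetric we already have $A\subseteq A^{\dagger}$, so it suffices to prove $\D(A^{\dagger})\subseteq\D(A)$. Given $\psi\in\D(A^{\dagger})$, surjectivity supplies $\phi\in\D(A)$ with $(A-\overline{\bfrakq}\Iop)\phi=(A^{\dagger}-\overline{\bfrakq}\Iop)\psi$; since $\phi\in\D(A)\subseteq\D(A^{\dagger})$ and $A\phi=A^{\dagger}\phi$, we deduce $\psi-\phi\in\ker(A^{\dagger}-\overline{\bfrakq}\Iop)=\{0\}$, whence $\psi=\phi\in\D(A)$.

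The only real technical pitfall is to justify $(A-\bfrakq\Iop)^{\dagger}=A^{\dagger}-\overline{\bfrakq}\Iop$ together with the equality of domains; this uses that $\bfrakq\Iop$ is bounded and everywhere defined and that $(\bfrakq\Iop)^{\dagger}=\overline{\bfrakq}\Iop$, both of which are provided by the quaternionic left-multiplication machinery set up earlier. Once this is in place the proof is essentially a transcription of the classical complex-Hilbert-space argument, with the regular-point inequality \eqref{reg_pt} playing the role that $\Imm(\bfrakq)\neq 0$ plays in the complex case.
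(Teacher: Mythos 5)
Your proof is correct and follows essentially the same route as the paper's: both directions rest on the identification $\text{ran}(A-\bfrakq\Iop)^{\bot}=\ker(A^{\dagger}-\overline{\bfrakq}\Iop)$ via Proposition \ref{cldop}(c), the closed-range property of Proposition \ref{pre_set}(c), and the regular-point estimate \eqref{reg_pt}, with the same surjectivity-plus-trivial-kernel argument for $\D(A^{\dagger})\subseteq\D(A)$. The only difference is that you make explicit the adjoint identity $(A-\bfrakq\Iop)^{\dagger}=A^{\dagger}-\overline{\bfrakq}\Iop$, which the paper uses tacitly.
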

\begin{proof}
Suppose that $d_{\bfrakq}(A)=d_{\overline{\bfrakq}}(A)=0$. Then by Proposition (\ref{cldop}) we have \begin{equation}\label{eqr2}
\ker(A^{\dagger}-\overline{\bfrakq}\Iop)\times\ker(A^{\dagger}-\bfrakq\Iop)=\text{ran}(A-\bfrakq\Iop)^{\bot}\times\text{ran}(A-\overline{\bfrakq}\Iop)^{\bot}=\{0\}\times\{0\}.
\end{equation}
From (c) in Proposition \ref{pre_set}, one can equivalently write for (\ref{eqr2}) that
\begin{equation}\label{eqr3}
\text{ran}(A-\bfrakq\Iop)\times\text{ran}(A-\overline{\bfrakq}\Iop)={V_{\quat}^{R}}\times{V_{\quat}^{R}}.
\end{equation}
%Then $(A-\bfrakq\Iop)[\text{ran}(A-\overline{\bfrakq}\Iop)]=\text{ran}(A-\bfrakq\Iop)$ and  $(A-\overline{\bfrakq}\Iop)[\text{ran}(A-\bfrakq\Iop)]=\text{ran}(A-\overline{\bfrakq}\Iop)$.
%That is, $$\text{ran}[(A-\overline{\bfrakq}\Iop)(A-\bfrakq\Iop)]=\text{ran}[(A-\bfrakq\Iop)(A-\overline{\bfrakq}\Iop)]=V_{\quat}^{R}.$$
Let $\phi\in\D(A^{\dagger})$, then there exists $\psi\in\D(A)$ such that $(A^{\dagger}-\bfrakq\Iop)\phi=(A-\bfrakq\Iop)\psi$ as $(A^{\dagger}-\bfrakq\Iop)\phi\in V_{\quat}^{R}=\text{ran}(A-\bfrakq\Iop)$. Thus $(A^{\dagger}-\bfrakq\Iop)(\phi-\psi)=0$ which implies that $(\phi-\psi)\in\ker(A^{\dagger}-\bfrakq\Iop)=\{0\}$, and so $\phi=\psi\in\D(A)$. This suffices to say that $A$ is self-adjoint, since $A$ is a symmetric operator.\\
On the other hand, suppose that $A$ is self-adjoint. Since $\bfrakq,\overline{\bfrakq}\in\varPi(A)$, we have there exist $c_{\bfrakq},c_{\overline{\bfrakq}}>0$ such that
\begin{equation}\label{eqr1}
\|(A-\bfrakq\Iop)\phi\|\geq c_{\bfrakq}\|\phi\|\mbox{~~and~~}\|(A-\overline{\bfrakq}\Iop)\phi\|\geq  c_{\overline{\bfrakq}}\|\phi\|,
\end{equation}
for all $\phi\in\D(A)$. From this, we have
$$\text{ran}(A-\bfrakq\Iop)^{\bot}\times\text{ran}(A-\overline{\bfrakq}\Iop)^{\bot}=\ker(A-\overline{\bfrakq}\Iop)\times\ker(A-\bfrakq\Iop)=\{0\}\times\{0\}$$ as $A$ is self-adjoint (i.e. $A=A^\dagger)$. Hence $d_{\bfrakq}(A)=d_{\overline{\bfrakq}}(A)=0$ and this completes the proof.
\end{proof}
\begin{proposition}
Let $A:\D(A)\subseteq V_{\quat}^R\longrightarrow V_{\quat}^R$ be a densely defined right linear closed symmetric operator. Then
$$\{\bfrakq\in\quat~|~\bfrakq,\overline{\bfrakq}\in\varPi(A)\mbox{~~and~~}d_{\bfrakq}(A)=d_{\overline{\bfrakq}}(A)=0\}\subseteq\rho_S (A). $$
\end{proposition}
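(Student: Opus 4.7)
The plan is to reduce the inclusion to Proposition \ref{pr00}. Given $\bfrakq,\overline{\bfrakq}\in\varPi(A)$ with $d_{\bfrakq}(A)=d_{\overline{\bfrakq}}(A)=0$, Proposition \ref{pr01} immediately yields that $A$ is self-adjoint. Combined with part (c) of Proposition \ref{pre_set} and the vanishing defects, both $A-\bfrakq\Iop$ and $A-\overline{\bfrakq}\Iop$ are bijections $\D(A)\to V_{\quat}^{R}$ whose inverses $R_{\bfrakq}:=(A-\bfrakq\Iop)^{-1}$ and $R_{\overline{\bfrakq}}:=(A-\overline{\bfrakq}\Iop)^{-1}$ are bounded everywhere-defined operators, thanks to the regularity inequality (\ref{reg_pt}). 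What remains is the hypothesis $\overline{\D(A^{2})}=V_{\quat}^{R}$ that is missing for a direct application of Proposition \ref{pr00}.

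For this key step, note first that (\ref{sc_mul_aj-op}) gives $(\bfrakq\Iop)^{\dagger}=\overline{\bfrakq}\Iop$, so from $A=A^{\dagger}$ one obtains $(A-\bfrakq\Iop)^{\dagger}=A-\overline{\bfrakq}\Iop$, and hence $R_{\bfrakq}^{\dagger}=R_{\overline{\bfrakq}}$. Next, $R_{\bfrakq}$ sends $\D(A)$ into $\D(A^{2})$: for $\eta\in\D(A)$, the element $\phi:=R_{\bfrakq}\eta$ satisfies $A\phi=\eta+\bfrakq\phi$, and since left multiplication by $\bfrakq$ preserves $\D(A)$ (the standing framework assumption that already underlies Proposition \ref{pr00} and formula (\ref{resol})), we conclude $A\phi\in\D(A)$, i.e., $\phi\in\D(A^{2})$. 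Define the bounded operator $T:=R_{\bfrakq}R_{\overline{\bfrakq}}\in\B(V_{\quat}^{R})$. Then $\text{ran}(T)\subseteq\D(A^{2})$, $T$ is injective as a composition of injections, and $T$ is self-adjoint since $T^{\dagger}=R_{\overline{\bfrakq}}^{\dagger}R_{\bfrakq}^{\dagger}=R_{\bfrakq}R_{\overline{\bfrakq}}=T$. Combining these facts gives $\overline{\D(A^{2})}\supseteq\overline{\text{ran}(T)}=(\ker T)^{\bot}=V_{\quat}^{R}$, which is the required density.

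With $A$ densely defined, closed, symmetric, and $\overline{\D(A^{2})}=V_{\quat}^{R}$ in hand, Proposition \ref{pr00} yields $\varPi(A)\subseteq\rho_{S}(A)$; in particular $\bfrakq\in\rho_{S}(A)$, which proves the stated inclusion. The genuine obstacle is the density step for $\D(A^{2})$: one must manufacture a bounded, injective, self-adjoint operator whose range already lives in $\D(A^{2})$, and the success of $T=R_{\bfrakq}R_{\overline{\bfrakq}}$ hinges on the quaternionic adjoint identities (\ref{sc_mul_aj-op}) together with the invariance of $\D(A)$ under left multiplication by quaternions—everything else in the proof is a bookkeeping reduction to previously established results.
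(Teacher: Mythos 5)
Your proposal is correct and follows the same overall route as the paper---Proposition \ref{pr01} gives self-adjointness and Proposition \ref{pr00} then gives $\varPi(A)\subseteq\rho_S(A)$---but it is substantially more complete than the paper's one-line proof, which simply cites Propositions \ref{pr00} and \ref{pr01} without noting that the second part of Proposition \ref{pr00} carries the extra hypothesis $\overline{\D(A^2)}=V_{\quat}^{R}$. Your von Neumann-style argument with $T=R_{\bfrakq}R_{\overline{\bfrakq}}$ (bounded, injective, self-adjoint, with range in $\D(A^2)$) supplies exactly this missing verification, and each step checks out: surjectivity of $A-\bfrakq\Iop$ and $A-\overline{\bfrakq}\Iop$ follows from closedness of the ranges (Proposition \ref{pre_set}(c)) together with the vanishing defect numbers, $R_{\bfrakq}^{\dagger}=R_{\overline{\bfrakq}}$ follows from $A=A^{\dagger}$ and Proposition \ref{lft_mul}(d), and $\overline{\text{ran}(T)}=\ker(T^{\dagger})^{\bot}$ is Proposition \ref{cldop}(c). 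One caveat applies equally to your proof and to the paper's: both require $\D(A)$ to be invariant under the left scalar multiplications by $\bi,\bj,\bk$ --- you need it to conclude $\bfrakq R_{\bfrakq}\eta\in\D(A)$ and hence $\text{ran}(T)\subseteq\D(A^{2})$, and the paper needs it because that invariance is a stated hypothesis of Proposition \ref{pr00} (without it the factorization \eqref{resol} is not even defined on $\D(A^{2})$) --- yet this hypothesis does not appear in the statement of the proposition being proved. You at least flag it explicitly as a standing framework assumption, whereas the paper leaves it tacit; strictly speaking the statement should be read with that hypothesis added, under which your argument is a complete and correct proof.
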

\begin{proof}
It follows from Propositions \ref{pr00} and \ref{pr01}.
\end{proof}
\section{The quaternionic Cayley Transform}
\begin{proposition}\label{Gen-Von-neq}
Let $A:\D(A)\subseteq V_{\quat}^R\longrightarrow V_{\quat}^R$ be a densely defined right linear symmetric operator with the property that $\bi\phi,\bj\phi,\bk\phi\in \D(A)$, for all $\phi\in \D(A)$. If the operators $\bi A$, $\bj A$ and $\bk A$ are anti-symmetric, then
$$\quat\smallsetminus\mathbb{R}\subseteq\varPi(A).$$
\end{proposition}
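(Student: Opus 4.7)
The plan is to reduce the claim directly to Proposition \ref{preqn}(a), which has been stated already and which is precisely tailored to this situation. Given any $\bfrakq\in\quat\smallsetminus\mathbb{R}$, I write $\bfrakq = q_0+\bi q_1+\bj q_2+\bk q_3$ and note that because $\bfrakq$ is not real, the quantity $q_1^2+q_2^2+q_3^2$ is strictly positive. My goal is to exhibit a constant $c_{\bfrakq}>0$ such that $\|(A-\bfrakq\Iop)\phi\|\geq c_{\bfrakq}\|\phi\|$ for every $\phi\in\D(A)$; this is the definition of $\bfrakq$ being a regular point.

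The hypotheses of Proposition \ref{preqn}(a) are exactly our hypotheses: $A$ is densely defined and symmetric, $\D(A)$ is stable under multiplication by the imaginary units $\bi,\bj,\bk$, and $\bi A$, $\bj A$, $\bk A$ are anti-symmetric. Consequently, for every $\phi\in\D(A)$ we have the identity
\[
\|(A-\bfrakq\Iop)\phi\|^{2}=\|(A-q_0\Iop)\phi\|^{2}+(q_1^{2}+q_2^{2}+q_3^{2})\|\phi\|^{2}.
\]
Since the first summand on the right is non-negative, I simply drop it to obtain the lower bound
\[
\|(A-\bfrakq\Iop)\phi\|^{2}\geq (q_1^{2}+q_2^{2}+q_3^{2})\|\phi\|^{2}.
\]

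Taking square roots and setting $c_{\bfrakq}:=\sqrt{q_1^{2}+q_2^{2}+q_3^{2}}>0$, I conclude that $\|(A-\bfrakq\Iop)\phi\|\geq c_{\bfrakq}\|\phi\|$ holds for all $\phi\in\D(A)$, which is precisely the condition for $\bfrakq$ to belong to $\varPi(A)$. Since $\bfrakq\in\quat\smallsetminus\mathbb{R}$ was arbitrary, the inclusion $\quat\smallsetminus\mathbb{R}\subseteq\varPi(A)$ follows. There is no real obstacle here: the whole argument is a one-line application of the pre-existing identity, so the only thing to be careful about is invoking the right case (part (a)) of Proposition \ref{preqn} and noting that strict positivity of $q_1^2+q_2^2+q_3^2$ is equivalent to $\bfrakq\notin\mathbb{R}$, which provides the positivity of the constant $c_{\bfrakq}$.
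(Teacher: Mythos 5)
Your proof is correct and follows exactly the paper's own argument: both invoke Proposition \ref{preqn}(a), drop the non-negative term $\|(A-q_0\Iop)\phi\|^{2}$, and take $c_{\bfrakq}=\sqrt{q_1^{2}+q_2^{2}+q_3^{2}}>0$, which is positive precisely because $\bfrakq\notin\mathbb{R}$. You have simply spelled out the details the paper leaves implicit.
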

\begin{proof}
From (a) in the Proposition \ref{preqn}, we have, for every $\bfrakq\in\quat\smallsetminus\mathbb{R}$, $$\|(A-\bfrakq\Iop)\phi\|\geq
\sqrt{(q_1^{2} +  q_2^{2} + q_3^{2})}\|\phi\|,$$ for all $\phi\in\D(A)$. This suffices to conclude the proof.
\end{proof}
Let $\blam=\lambda_0+\lambda_1\bi+\lambda_2\bj+\lambda_3\bk\in\quat$ be a quaternion such that $\lambda_t>0:\,t=1,2,3$.
\begin{definition}
Let $A:\D(A)\subseteq V_{\quat}^R\longrightarrow V_{\quat}^R$ be a densely defined closed symmetric right linear operator with the property that $\bi\phi,\bj\phi,\bk\phi\in \D(A)$, for all $\phi\in \D(A)$. If the operators $\bi A$, $\bj A$ and $\bk A$ are anti-symmetric then
\begin{equation}\label{def_-}
n(A):=d_{\blam}(A)=\dim\text{ran}(A-\blam\Iop)^{\bot}.
\end{equation}
The number $n(A)$ is called  deficiency index of $A$.
\end{definition}
We note that, in principle, one could have defined two deficiency indices
$n_+(A):=d_{\overline{\blam}}(A)=\dim\text{ran}(A-\overline{\blam}\Iop)^{\bot}$,
$n_-(A):=d_{\blam}(A)=\dim\text{ran}(A-\blam\Iop)^{\bot}$, however Theorem \ref{def_con} implies that $d_{\blam}(A)$ is constant on each connected component of $\varPi(A)$ and thus $n_+(A)=n_-(A)=n(A)$.
\\
Now, we shall verify some elementary facts about linear isometric operators.
\begin{definition}
A right linear operator $U:\D(U)\subseteq V_{\quat}^R\longrightarrow V_{\quat}^R$ is said to be an \textit{isometric operator}, if $\|U\phi\|=\|\phi\|$, for all $\phi\in \D(U)$.
\end{definition}
The following proposition collects some basic aspects of the right linear isometric operators.
\begin{proposition}\label{iso} Let $U:\D(U)\subseteq V_{\quat}^R\longrightarrow V_{\quat}^R$ be  a right linear isometric operator. Then:
\begin{itemize}
\item [(a)] For each $\phi,\psi\in \D (U)$, we have $\langle U\phi\mid U\psi\rangle=\langle\phi\mid\psi\rangle$.
\item [(b)] $U$ is invertible and its inverse $U^{-1}$ is also isometric.
\item [(c)] $U$ is closed if and only if $\D(U)$ is a closed subspace of $V_{\quat}^R$.
\item [(d)] $\quat\smallsetminus\mathbb{S}\subseteq\varPi(U)$.
\end{itemize}
\end{proposition}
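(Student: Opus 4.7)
The four claims are largely independent, so I would dispatch them one at a time.

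For (a), I would substitute $(U\phi,U\psi)$ into the polarization identity \eqref{polar} and use right linearity together with the isometry hypothesis: $\|U\phi\pm U\psi\|=\|U(\phi\pm\psi)\|=\|\phi\pm\psi\|$ and, for $\btau\in\{\bi,\bj,\bk\}$, $\|(U\phi)\btau\pm U\psi\|=\|U(\phi\btau\pm\psi)\|=\|\phi\btau\pm\psi\|$. These equalities turn the right-hand side of \eqref{polar} evaluated at $(U\phi,U\psi)$ into the right-hand side of \eqref{polar} evaluated at $(\phi,\psi)$ term by term, hence $\langle U\phi\mid U\psi\rangle=\langle\phi\mid\psi\rangle$.

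For (b), injectivity is automatic: $U\phi=0$ forces $\|\phi\|=\|U\phi\|=0$. Thus the right linear inverse $U^{-1}\colon\mathrm{ran}(U)\to\D(U)$ is well-defined, and writing $\psi=U\phi$ gives $\|U^{-1}\psi\|=\|\phi\|=\|U\phi\|=\|\psi\|$, showing $U^{-1}$ is isometric. For (c), I would invoke the sequential characterization of Proposition \ref{copcr}. If $U$ is closed and $\{\phi_n\}\subset\D(U)$ converges to $\phi$, then $\{U\phi_n\}$ is Cauchy by isometry, so $U\phi_n\to\psi$ for some $\psi$; closedness of $U$ then places $\phi\in\D(U)$, so $\D(U)$ is closed. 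Conversely, when $\D(U)$ is closed, $U$ is norm-one bounded (hence continuous) on its domain: any pair $\phi_n\to\phi$, $U\phi_n\to\psi$ with $\{\phi_n\}\subset\D(U)$ then satisfies $\phi\in\D(U)$ (domain closedness) and $\psi=U\phi$ (continuity), so $U$ is closed by Proposition \ref{copcr}.

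For (d), the decisive estimate is the reverse triangle inequality combined with Proposition \ref{lft_mul}(b) (which supplies $\|\bfrakq\phi\|=|\bfrakq|\|\phi\|$): for every $\bfrakq\in\quat$ and every $\phi\in\D(U)$,
\[
\|(U-\bfrakq\Iop)\phi\|\ \geq\ \big|\|U\phi\|-\|\bfrakq\phi\|\big|\ =\ \big|1-|\bfrakq|\big|\|\phi\|.
\]
Consequently any $\bfrakq$ with $|\bfrakq|\neq 1$ is regular with constant $c_\bfrakq=|1-|\bfrakq||>0$, which yields the inclusion $\quat\smallsetminus\mathbb{S}\subseteq\varPi(U)$ asserted in the statement. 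The only delicate point, and the one I expect to be the main thing to watch, is ensuring that the left scalar multiplication of vectors interacts with the quaternionic norm as in Proposition \ref{lft_mul}(b); once this is in hand, the estimate above is morally identical to the one used in the complex setting for isometric operators.
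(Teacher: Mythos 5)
Your proof is correct and follows essentially the same route as the paper's: polarization for (a), the elementary observations for (b) and (c) (which the paper dismisses as trivial), and the reverse triangle inequality together with $\|\bfrakq\phi\|=|\bfrakq|\,\|\phi\|$ for (d). Note only that, as in the paper's own proof, the estimate in (d) yields regularity exactly for $|\bfrakq|\neq 1$, so the stated inclusion $\quat\smallsetminus\mathbb{S}\subseteq\varPi(U)$ is being read with $\mathbb{S}$ as the unit sphere of $\quat$ rather than the $2$-sphere of imaginary units defined in Section 2 --- a notational point inherited from the paper, not a gap in your argument.
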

\begin{proof}
(a): Since $U$ is linear and isometric, using the polarization identity (\ref{polar}), we can easily obtain the desired relation $\langle U\phi\mid U\psi\rangle=\langle\phi\mid\psi\rangle$, for all $\phi,\psi\in \D(U)$.

(b) and (c): These statements trivially follow.

(d): Since $U$ is an isometry, one can see that
\begin{equation}\label{ine-def}
\|(U-\bmu\Iop)\phi\|\geq\mid\|U\phi\|-\mid\bmu\mid\|\phi\|\mid=\mid1-\mid\bmu\mid\mid\|\phi\|,
\end{equation}
for all $\phi\in \D(U)$ and $\bmu\in\quat\smallsetminus\mathbb S$. This inequality proves (d).
\end{proof}
Theorem \ref{def_con} implies that the defect numbers of the right linear isometric $U:\D(U)\subseteq V_{\quat}^R\longrightarrow V_{\quat}^R$ are constants on the interior of $\mathbb{S}$ and on the exterior of $\mathbb{S}$. The cardinal numbers
\begin{equation}\label{def_int}
d^i(U):=d_{\bmu}(U)=\dim\text{ran}(U-\bmu\Iop)^{\bot}~\mbox{if~}\mid\bmu\mid<1,
\end{equation}
\begin{equation}\label{def_ext}
d^e(U):=d_{\bmu}(U)=\dim\text{ran}(U-\bmu\Iop)^{\bot}~\mbox{if~}\mid\bmu\mid>1
\end{equation}
are called the \textit{deficiency indices} of the isometric operator $U$. The following Lemma gives an interesting result about these numbers.
\begin{lemma} \label{d^i,e}
If $U:\D(U)\subseteq V_{\quat}^R\longrightarrow V_{\quat}^R$ is a right linear isometric operator, then
$d^i(U)=\dim\text{ran}(U)^{\bot}$ and $d^e(U)=\dim\D(U)^{\bot}$.
\end{lemma}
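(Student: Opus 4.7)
The plan is to handle the two equalities by choosing a distinguished regular point in each of the two connected components of $\varPi(U)$ and, for the exterior component, by reducing to the interior case via the isometric operator $U^{-1}$.

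For the first equality $d^i(U) = \dim\text{ran}(U)^\perp$, I would evaluate the defect number at $\bmu = 0$. The inequality \eqref{ine-def} specialized to $\bmu = 0$ reads $\|U\phi\| \ge \|\phi\|$, which holds with constant $1$ because $U$ is isometric; hence $0 \in \varPi(U)$ and lies in the connected component $\{|\bmu|<1\}$ on which $d_\bmu(U)$ is constant by Theorem~\ref{def_con}. Since $\text{ran}(U - 0\cdot\Iop) = \text{ran}(U)$, I conclude $d^i(U) = d_0(U) = \dim\text{ran}(U)^\perp$.

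For the second equality $d^e(U) = \dim\D(U)^\perp$, I would replace $U$ by its inverse $U^{-1}$, which is again isometric by Proposition~\ref{iso}(b) and satisfies $\D(U^{-1}) = \text{ran}(U)$ and $\text{ran}(U^{-1}) = \D(U)$. Applying the first part to $U^{-1}$ gives $d^i(U^{-1}) = \dim\text{ran}(U^{-1})^\perp = \dim\D(U)^\perp$, so it suffices to identify $d^i(U^{-1})$ with $d^e(U)$. I would pick a real $\mu$ with $0 < \mu < 1$, so that $\mu^{-1} > 1$ and all scalar manipulations are free of non-commutativity issues. For any $\phi \in \D(U)$ one computes
\[
(U^{-1} - \mu\Iop)(U\phi) \;=\; \phi - \mu U\phi \;=\; -\mu\,(U - \mu^{-1}\Iop)\phi,
\]
so, letting $U\phi$ range over $\text{ran}(U) = \D(U^{-1})$, one obtains $\text{ran}(U^{-1} - \mu\Iop) = -\mu \cdot \text{ran}(U - \mu^{-1}\Iop)$. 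Since $-\mu$ is a nonzero real scalar, both sets have the same orthogonal complement, and therefore $d^i(U^{-1}) = d_\mu(U^{-1}) = d_{\mu^{-1}}(U) = d^e(U)$, which combined with the previous display yields $d^e(U) = \dim\D(U)^\perp$.

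The main subtlety to control is the algebraic manipulation in the right-quaternionic setting, where one must track the position of scalars carefully. Working with a real parameter $\mu$ circumvents this entirely: $\mu\Iop$ coincides with the map $\phi \mapsto \mu\phi$ both as a left and as a right scalar multiplication and commutes with $U$, so the displayed identity is transparent. The rest of the argument is organized around the constancy of defect numbers on connected components (Theorem~\ref{def_con}), which I apply once inside the unit three-sphere for $U$ itself, and once, transported via $U^{-1}$, to deal with the exterior region.
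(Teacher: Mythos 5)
Your proposal is correct and follows essentially the same route as the paper: evaluate the defect number at $\bmu=0$ for the interior component, and use the identity $(U^{-1}-\bmu\Iop)U\phi=-\bmu(U-\bmu^{-1}\Iop)\phi$ to transfer the exterior case to $d^i(U^{-1})=\dim\text{ran}(U^{-1})^{\bot}=\dim\D(U)^{\bot}$. The only (harmless) difference is that you restrict to a real parameter $\mu$, whereas the paper allows any $\bmu$ with $0<\mid\bmu\mid<1$, relying on Proposition~\ref{lft_mul} to justify the scalar manipulations.
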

\begin{proof}
If we choose $\bmu=0$ in (\ref{def_int}), then it follows directly that $$d^i(U)=d_0(U)=\dim\text{ran}(U)^{\bot}.$$ Now for any fixed $\bmu\in\quat$ with $0<\mid\bmu\mid<1$, we have
$$(U^{-1}-\bmu\Iop)U\phi=(\Iop-\bmu U)\phi=-\bmu(U-\bmu^{-1}\Iop)\phi,$$ for all $\phi\in \D (U)$. From this, we can obtain $\text{ran}(U^{-1}-\bmu\Iop)=\text{ran}(U-\bmu^{-1}\Iop)$. Therefore by (\ref{def_ext}),
$$d^e(U)=\dim\text{ran}(U-\bmu^{-1}\Iop)^{\bot}=\dim\text{ran}(U^{-1}-\bmu\Iop)^{\bot}.$$
That is $$d^e(U)=d^i(U^{-1})=\dim\text{ran}(U^{-1})^{\bot}=\dim\D(U)^{\bot}$$ follows.
\end{proof}
The following lemma shows an important result.
\begin{lemma}\label{I-U}
If $U:\D(U)\subseteq V_{\quat}^R\longrightarrow V_{\quat}^R$ is a right linear isometric operator and $\text{ran}(\Iop-U)$ is dense in $V_{\quat}^R$, then $\ker(\Iop-U)=\{0\}$.
\end{lemma}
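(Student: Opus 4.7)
The plan is to prove the contrapositive-style statement directly: assume $\phi \in \ker(\Iop - U)$ and deduce that $\phi$ is orthogonal to every element of $\text{ran}(\Iop - U)$; since that range is dense, $\phi$ must vanish. So the key relation I need to establish is
$$\langle \phi \mid (\Iop - U)\psi \rangle = 0 \quad \text{for all } \psi \in \D(U).$$

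First I would note that $\phi \in \ker(\Iop - U)$ forces $\phi \in \D(U)$ (since $(\Iop - U)\phi$ must be defined) and $U\phi = \phi$. This is the crucial point that lets me apply the isometry identity to $\phi$ itself.

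Next, for an arbitrary $\psi \in \D(U)$, I would expand
$$\langle \phi \mid (\Iop - U)\psi \rangle = \langle \phi \mid \psi \rangle - \langle \phi \mid U\psi \rangle,$$
and then use part (a) of Proposition \ref{iso} (polarization of isometry) applied to the pair $(\phi, \psi) \in \D(U) \times \D(U)$ to write
$$\langle \phi \mid \psi \rangle = \langle U\phi \mid U\psi \rangle = \langle \phi \mid U\psi \rangle,$$
where the second equality just substitutes $U\phi = \phi$. Subtracting yields the desired vanishing.

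Having shown $\phi \perp \text{ran}(\Iop - U)$, the density hypothesis immediately gives $\phi \perp V_{\quat}^R$, hence $\phi = 0$. There is no real obstacle here: the only subtlety to flag is making sure $\phi$ genuinely lies in $\D(U)$ so that Proposition \ref{iso}(a) applies, and that the inner product is continuous in its second argument so that orthogonality to a dense set forces $\phi = 0$ (both standard in the right quaternionic Hilbert space setting).
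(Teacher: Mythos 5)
Your argument is correct and is essentially identical to the paper's own proof: both use $U\phi=\phi$ together with Proposition \ref{iso}(a) to show $\phi$ is orthogonal to $\mathrm{ran}(\Iop-U)$, then invoke density. The only cosmetic difference is the order of the arguments in the inner product, which is immaterial.
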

\begin{proof}
If $\phi\in\ker(\Iop-U)$, then $(\Iop-U)\phi=0$ and so $U\phi=\phi$. Thus for any $\psi\in\D(U)$,
$$\langle(\Iop-U)\psi\mid\phi\rangle=\langle\psi\mid\phi\rangle-\langle U\psi,\phi\rangle=\langle\psi\mid\phi\rangle-\langle U\psi,U\phi\rangle=\langle\psi\mid\phi\rangle-\langle \psi,\phi\rangle=0.$$ This is enough to say that $\phi=0$ as $\text{ran}(\Iop-U)$ is dense in $V_{\quat}^R$.
\end{proof}
Let $A:\D(A)\subseteq V_{\quat}^R\longrightarrow V_{\quat}^R$ be a densely defined right linear symmetric operator with the property that $\bi\phi,\bj\phi,\bk\phi\in \D(A)$, for all $\phi\in \D(A)$ and the operators $\bi A$, $\bj A$ and $\bk A$ are anti-symmetric. Since $\lambda_t>0:\,t=1,2,3$ (i.e. $\lambda_t\neq0:\,t=1,2,3$), we have $\overline{\blam}\in\varPi(A)$ by Proposition \ref{Gen-Von-neq}. Hence, $A-\overline{\blam}\Iop$ is invertible. Using these facts, the $\quat$-Cayley transform is defined as follows:
\begin{definition}
Let $A:\D(A)\subseteq V_{\quat}^R\longrightarrow V_{\quat}^R$ be a densely defined right linear symmetric operator with the property that $\bi\phi,\bj\phi,\bk\phi\in \D(A)$, for all $\phi\in \D(A)$ and the operators $\bi A$, $\bj A$ and $\bk A$ are anti-symmetric. The operator $U_A:\D(U_A)\subseteq V_{\quat}^R\longrightarrow V_{\quat}^R$ defined by
\begin{equation}\label{Cay1}
U_A=(A-\blam\Iop)(A-\overline{\blam}\Iop)^{-1}, \mbox{~~with~~} \D(U_A)=\text{ran}(A-\overline{\blam}\Iop),
\end{equation} is said to be the $\quat$-\textit{Cayley transform} of $A$. That is, $U_A$ is defined by
\begin{equation}\label{Cay2}
U_A(A-\overline{\blam}\Iop)\phi=(A-\blam\Iop)\phi, \mbox{~~for all~~} \phi\in\D(A).
\end{equation}
\end{definition}
From now on we use the terminology \textquoteleft Cayley transform\textquoteright \, for simplicity rather than saying $\quat$-Cayley transform. Some useful properties of the Cayley transform are summarized in the following Proposition:
\begin{proposition}\label{Cay_Prn}
Let $A:\D(A)\subseteq V_{\quat}^R\longrightarrow V_{\quat}^R$ be a densely defined right linear symmetric operator with the property that $\bi\phi,\bj\phi,\bk\phi\in \D(A)$, for all $\phi\in \D(A)$ and the operators $\bi A$, $\bj A$ and $\bk A$ are anti-symmetric. If $U_A$ is the Cayley transform of $A$, then the following statements hold:
\begin{itemize}
\item [(a)] The Cayley transform $U_A$ is an isometric operator on $V_{\quat}^R$ with domain $\D(U_A)=\text{ran}(A-\overline{\blam}\Iop)$ and range $\text{ran}(U_A)=\text{ran}(A-\blam\Iop)$.
\item [(b)] $\text{ran}(\Iop-U_A)=\D(A)$ and $A=(\blam\Iop-\overline{\blam}U_A)(\Iop-U_A)^{-1}$.
\item [(c)] $A$ is closed if and only if $U_A$ is closed.
\item [(d)] If $B$ is another densely defined right linear symmetric operator with the property that $\bi\phi,\bj\phi,\bk\phi\in \D(B)$, for all $\phi\in \D(B)$ and the operators $\bi B$, $\bj B$ and $\bk B$ be anti-symmetric, then $A\subseteq B$ if and only if $U_A\subseteq U_B$.
\item [(e)] $d^i(U_A)=d^e(U_A)=n(A)$.
\end{itemize}
\end{proposition}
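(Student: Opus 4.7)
The plan is to prove the five items in order, since (b)--(e) all lean on the isometry and domain/range description from (a) together with the commutation $\bfrakq A = A\bfrakq$ (Proposition \ref{preqn}(a)). For (a), I would apply Proposition \ref{preqn}(a) to both $\bfrakq = \blam$ and $\bfrakq = \overline{\blam}$, giving
$$\|(A - \blam\Iop)\phi\|^2 = \|(A - \lambda_0\Iop)\phi\|^2 + (\lambda_1^2 + \lambda_2^2 + \lambda_3^2)\|\phi\|^2 = \|(A - \overline{\blam}\Iop)\phi\|^2,$$
so setting $\psi = (A - \overline{\blam}\Iop)\phi$ yields $\|U_A\psi\| = \|\psi\|$; the description of $\D(U_A)$ and $\text{ran}(U_A)$ is immediate from the definition \eqref{Cay1}.

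For (b), I would compute $(\Iop - U_A)\psi = (A - \overline{\blam}\Iop)\phi - (A - \blam\Iop)\phi = (\blam - \overline{\blam})\phi$, where $\blam - \overline{\blam} = 2(\lambda_1\bi + \lambda_2\bj + \lambda_3\bk)$ is a nonzero, hence invertible, quaternion. The hypothesis $\bi\phi,\bj\phi,\bk\phi \in \D(A)$ combined with $\D(A)$ being a right $\quat$-subspace shows that left multiplication by any quaternion stabilizes $\D(A)$, so $(\blam - \overline{\blam})$ acts bijectively on $\D(A)$; this gives $\text{ran}(\Iop - U_A) = \D(A)$, and density then allows Lemma \ref{I-U} to furnish injectivity of $\Iop - U_A$. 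The inversion formula follows from
$$(\blam\Iop - \overline{\blam}U_A)\psi = \blam(A - \overline{\blam}\Iop)\phi - \overline{\blam}(A - \blam\Iop)\phi = A(\blam - \overline{\blam})\phi = A(\Iop - U_A)\psi,$$
where I use $\blam A = A\blam$ and $\overline{\blam}A = A\overline{\blam}$ from Proposition \ref{preqn}(a).

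Items (c) and (d) both run on the same subtraction trick. For (c), if $A$ is closed and $\psi_n \to \psi$, $U_A\psi_n \to \chi$ with $\psi_n = (A - \overline{\blam}\Iop)\phi_n$, subtracting yields $(\blam - \overline{\blam})\phi_n \to \psi - \chi$, so $\phi_n$ converges to some $\phi$ and then $A\phi_n = \psi_n + \overline{\blam}\phi_n$ converges; closedness of $A$ places $\phi$ in $\D(A)$ and produces $\psi \in \D(U_A)$ with $U_A\psi = \chi$. The converse runs symmetrically through the inversion formula of (b). For (d), $A \subseteq B$ directly gives $U_A \subseteq U_B$ from \eqref{Cay2}; conversely, for $\phi \in \D(A)$, the inclusion $\D(U_A) \subseteq \D(U_B)$ and agreement of $U_A$ and $U_B$ on $\psi = (A - \overline{\blam}\Iop)\phi$ produce $\phi' \in \D(B)$ with $(B - \overline{\blam}\Iop)\phi' = \psi$ and $(B - \blam\Iop)\phi' = (A - \blam\Iop)\phi$, and subtracting forces $\phi = \phi'$, hence $A\phi = B\phi$.

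For (e), Lemma \ref{d^i,e} gives $d^i(U_A) = \dim\text{ran}(U_A)^{\bot} = d_{\blam}(A) = n(A)$ and $d^e(U_A) = \dim\D(U_A)^{\bot} = d_{\overline{\blam}}(A)$. The equality $d_{\overline{\blam}}(A) = d_{\blam}(A)$ follows because $\blam,\overline{\blam} \in \quat \setminus \mathbb{R}$, which Proposition \ref{Gen-Von-neq} embeds in $\varPi(A)$ and which is connected (a copy of $\mathbb{R}$ has codimension three in $\mathbb{R}^4$), so Theorem \ref{def_con} applies. The main obstacle throughout is the careful bookkeeping with left scalar multiplication: the commutation $\bfrakq A = A\bfrakq$ is essential for the inversion identity in (b), and without the invariance hypothesis $\bi\phi,\bj\phi,\bk\phi \in \D(A)$ one cannot assert that $(\blam - \overline{\blam})\phi$ still lies in $\D(A)$, a step on which the identifications $\text{ran}(\Iop - U_A) = \D(A)$ and the bijection between $A$ and $U_A$ critically rely.
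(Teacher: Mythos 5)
Your proposal is correct and follows essentially the same route as the paper: Proposition \ref{preqn}(a) for the isometry in (a), the identities $(\Iop-U_A)\psi=(\blam-\overline{\blam})\phi$ and $(\blam\Iop-\overline{\blam}U_A)\psi=(\blam-\overline{\blam})A\phi$ together with Lemma \ref{I-U} and the commutation $\bfrakq A=A\bfrakq$ for (b), and Lemma \ref{d^i,e} plus Theorem \ref{def_con} for (e). The only cosmetic differences are that you prove the forward direction of (c) by a direct sequence argument where the paper invokes closedness of $\text{ran}(A-\blam\Iop)$ via Proposition \ref{pre_set}(c), and your chain of equalities in (e) (passing through $\dim\text{ran}(U_A)^{\bot}$ and $\dim\D(U_A)^{\bot}$) is if anything stated more carefully than the paper's.
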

\begin{proof}
(a): Let $\phi\in\D(A)$. By (a) in the Proposition (\ref{preqn}), we have
$$\|(A-\blam\Iop)\phi\|^{2}=\|(A-\lambda_0\Iop)\phi\|^{2}+
(\lambda_1^{2} +  \lambda_2^{2} + \lambda_3^{2})\|\phi\|^{2}$$
and one can notice that $\|(A-\blam\Iop)\phi\|=\|(A-\overline{\blam}\Iop)\phi\|$. Take $\psi=(A-\overline{\blam}\Iop)\phi$, then $$\|U_A\psi\|=\|(A-\blam\Iop)\phi\|=\|(A-\overline{\blam}\Iop)\phi\|=\|\psi\|.$$ Therefore the Cayley transform $U_A$ is isometric. Now, $\D(U_A)=\text{ran}(A-\overline{\blam}\Iop)$ follows from (\ref{Cay1}) and $\text{ran}(U_A)=\text{ran}(A-\blam\Iop)$ follows from (\ref{Cay2}). \\
(b): Let $\phi\in\D(A)$ and  $\psi=(A-\overline{\blam}\Iop)\phi$. Then one can see that \begin{equation}\label{eq I-U}
(\Iop-U_A)\psi=(\blam-\overline{\blam})\phi,
\end{equation}
and this gives $\text{ran}(\Iop-U_A)=\D(A)$ by recalling that $(\blam-\overline{\blam})\neq0$. Since $\D(A)$ is dense in $V_{\quat}^R$ and by the Lemma \ref{I-U}, we have $\ker(\Iop-U)=\{0\}$. (Observe that this implication could be obtained directly using the equation (\ref{eq I-U}) as well.) Moreover a direct calculation shows that
\begin{equation}\label{eq l_I-U}
(\blam\Iop-\overline{\blam}U_A)\psi=(\blam-\overline{\blam})A\phi,
\end{equation}
where $\psi=(A-\overline{\blam}\Iop)\phi$ and $\phi\in\D(A)$. From (\ref{eq I-U}) and (\ref{eq l_I-U}), we get, for any $\phi\in\D(A)$,
\begin{equation}\label{Aphi}
(\blam\Iop-\overline{\blam}U_A)(\Iop-U_A)^{-1}(\blam-\overline{\blam})\phi=(\blam-\overline{\blam})A\phi.
\end{equation}
Statement (a) in Proposition \ref{preqn} gives $\bfrakq A=A\bfrakq$, and using (\ref{Aphi}) we obtain that
\begin{equation}\label{Aphi1}
(\blam\Iop-\overline{\blam}U_A)(\Iop-U_A)^{-1}(\blam-\overline{\blam})\phi=A((\blam-\overline{\blam})\phi),
\end{equation}
for all $\phi\in\D(A)$. For any $\phi\in\D(A)$, one can choose that $\xi=(\blam-\overline{\blam})^{-1}\phi$ since $(\blam-\overline{\blam})\neq0$. Since $\xi\in\D(A)$, it follows that $$(\blam\Iop-\overline{\blam}U_A)(\Iop-U_A)^{-1}(\blam-\overline{\blam})\xi=A(\blam-\overline{\blam})\xi$$ from equation (\ref{Aphi1}). Thereby, using (c) in Proposition \ref{lft_mul}, we obtain that $$(\blam\Iop-\overline{\blam}U_A)(\Iop-U_A)^{-1}\phi=A\phi$$ and $\D(A)\subseteq\D((\blam\Iop-\overline{\blam}U_A)(\Iop-U_A)^{-1})$. Now if $\phi\in\D((\blam\Iop-\overline{\blam}U_A)(\Iop-U_A)^{-1})$ then $\phi\in\D((\Iop-U_A)^{-1})=\text{ran}(\Iop-U_A)=\D(A)$. Thus $$\D((\blam\Iop-\overline{\blam}U_A)(\Iop-U_A)^{-1})=\D(A).$$ Hence  we have proved that $A=(\blam\Iop-\overline{\blam}U_A)(\Iop-U_A)^{-1}$.\\
(c):  On the one hand, since $\blam,\overline{\blam}\in\varPi(A)$, by (c) in Proposition \ref{pre_set}, it follows that, if $A$ is closed, then $\text{ran}(A-\blam\Iop)$ and $\text{ran}(A-\overline{\blam}\Iop)$ are closed subspaces of $V_{\quat}^{R}$. Thus $\D(U_A)\oplus\text{ran}(U_A)$ is a closed subspace of $V_{\quat}^{R}\times V_{\quat}^{R}$ by above statement (a), and so $U_A$ is closed. On the other hand, if $U_A$ is closed, assume that for a sequence $\{\phi_n\}\subseteq\D(A)$, $\phi_n\longrightarrow\phi$ with $A\phi_n\longrightarrow\phi$. Since $A\phi_n -\overline{\blam}\phi_n\longrightarrow\psi-\overline{\blam}\phi$ and $U_A$ is closed, we have $U_A(A\phi_n -\overline{\blam}\phi_n)\longrightarrow U_A(\psi-\overline{\blam}\phi).$ Using the fact that $U_A$ is a linear isometry, we obtain
$$\|U_A(A\phi_n -\overline{\blam}\phi_n)- U_A(\psi-\overline{\blam}\phi)\|=\|(A\phi_n -\overline{\blam}\phi_n)-(\psi-\overline{\blam}\phi)\|.$$ That is, $(A\phi_n -\overline{\blam}\phi_n)\longrightarrow(\psi-\overline{\blam}\phi)$. Hence $A\phi_n\longrightarrow\psi$ because $\overline{\blam}\phi_n\longrightarrow\overline{\blam}\phi$. Therefore, $A$ is closed.\\
(d): This follows from formula (\ref{Cay2}).\\
(e): By Theorem \ref{def_con}, Lemma \ref{d^i,e} and the above statement (a), one obtains that
 \begin{equation*}
 d^i(U_A)=d_{\blam}(U_A)=\dim\text{ran}(U_A-\blam\Iop)^{\bot}=n(A),
 \end{equation*}
 \begin{equation*}
 d^e(U_A)=d_{\overline{\blam}}(U_A)=\dim\text{ran}(U_A-\overline{\blam}\Iop)^{\bot}=d_{\blam}(U_A)=n(A).
 \end{equation*}
 Hence the statement follows.
\end{proof}
Now we assume that $U:\D(U)\subseteq V_{\quat}^R\longrightarrow V_{\quat}^R$ is a right linear isometric operator and $\text{ran}(\Iop-U)$ is dense in $V_{\quat}^R$. Then by Lemma \ref{I-U}, $\ker(\Iop-U)=\{0\}$ and the operator $\Iop-U$ is invertible. The operator
\begin{equation}\label{Cay_inv1}
A_U=(\blam\Iop-\overline{\blam}U)(\Iop-U)^{-1}\mbox{~~with domain~~}\D(A_U)=\text{ran}(\Iop-U)
\end{equation}
is called the \textit{inverse Cayley transform} of $U$. From this definition we have
\begin{equation}\label{Cay_inv2}
A_U(\Iop-U)\psi=(\blam\Iop-\overline{\blam}U)\psi\mbox{~~for all~~}\psi\in\D(U).
\end{equation}
Next Proposition is a consequence of the above discussion on the inverse Cayley transform.
\begin{proposition}\label{Cay_Prn1}
Let $U:\D(U)\subseteq V_{\quat}^R\longrightarrow V_{\quat}^R$ be a densely defined right linear symmetric and isometric operator with the property that $\bi\psi,\bj\psi,\bk\psi\in \D(U)$, for all $\psi\in \D(U)$ and the operators $\bi U$, $\bj U$ and $\bk U$ are anti-symmetric. Suppose that $\text{ran}(\Iop-U)$ is dense in $V_{\quat}^R$. Then the operator
\begin{equation*}
A_U=(\blam\Iop-\overline{\blam}U)(\Iop-U)^{-1}\mbox{~~with domain~~}\D(A_U)=\text{ran}(\Iop-U)
\end{equation*}
is a densely defined right linear symmetric operator which has Cayley transform $U$.
\end{proposition}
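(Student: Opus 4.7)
The plan is to verify in order: (i) $A_U$ is a well-defined, densely defined right linear operator with $\D(A_U)$ invariant under left multiplication by $\bi,\bj,\bk$; (ii) $A_U$ is symmetric and $\bi A_U$, $\bj A_U$, $\bk A_U$ are anti-symmetric, so that the Cayley transform of $A_U$ is defined; and (iii) this Cayley transform equals $U$. Throughout, the essential tool is the commutation $\bfrakq U=U\bfrakq$ for every $\bfrakq\in\quat$, which follows from Proposition \ref{preqn}(a) applied to $U$; combined with isometry this identity lets us move quaternionic scalars freely across $U$, at the cost of invoking Proposition \ref{lft_mul}(d).

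For step (i), Lemma \ref{I-U} applied to the isometric $U$ with $\text{ran}(\Iop-U)$ dense gives $\ker(\Iop-U)=\{0\}$, so $(\Iop-U)^{-1}$ exists on $\text{ran}(\Iop-U)$ and $A_U$ is well-defined with dense domain. For $\phi=(\Iop-U)\psi\in\D(A_U)$, the identity $\bi(\Iop-U)\psi=(\Iop-U)(\bi\psi)$, which comes directly from $\bi U=U\bi$, shows $\bi\phi\in\text{ran}(\Iop-U)=\D(A_U)$, and similarly for $\bj,\bk$.

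For step (ii), I would apply Proposition \ref{N-S sym}. Writing $\phi=(\Iop-U)\psi$ and expanding $\langle A_U\phi\mid\phi\rangle=\langle(\blam\Iop-\overline{\blam}U)\psi\mid(\Iop-U)\psi\rangle$ into four inner products, isometry of $U$ combined with $\bfrakq U=U\bfrakq$ converts $\langle\overline{\blam}U\psi\mid U\psi\rangle$ into $\langle\overline{\blam}\psi\mid\psi\rangle=\overline{\langle\blam\psi\mid\psi\rangle}$; pairing the diagonal contributions yields $2\,\text{Re}\langle\blam\psi\mid\psi\rangle$, while the off-diagonal ones combine to $-2\,\text{Re}\langle\blam\psi\mid U\psi\rangle$, so the total is real and $A_U$ is symmetric. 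For anti-symmetry of $\bi A_U$, the analogous expansion of $\langle\bi A_U\phi\mid\phi\rangle$ collapses to $2\lambda_0\langle\bi\psi\mid\phi\rangle$; this quantity is pure imaginary because $\langle\bi\psi\mid\psi\rangle$ is pure imaginary (since $\bi$ is a pure imaginary quaternion) and $\langle\bi\psi\mid U\psi\rangle$ is also pure imaginary (using symmetry of $U$ together with $\bi U=U\bi$ to get $\langle\bi\psi\mid U\psi\rangle=-\overline{\langle\bi\psi\mid U\psi\rangle}$). The same computation handles $\bj A_U$ and $\bk A_U$.

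Finally, for step (iii), set $\bxi:=\blam-\overline{\blam}\neq 0$. A direct expansion gives $(A_U-\overline{\blam}\Iop)\phi=\bxi\psi$ and $(A_U-\blam\Iop)\phi=\bxi U\psi=U(\bxi\psi)$ for $\phi=(\Iop-U)\psi$, so the defining relation (\ref{Cay2}) for the Cayley transform of $A_U$ becomes $U_{A_U}[\bxi\psi]=U(\bxi\psi)$. Since $\D(U)$ is invariant under left multiplication by every quaternion (by the hypothesis on $\bi,\bj,\bk$ together with real-linearity) and $\bxi$ is invertible in $\quat$, the map $\psi\mapsto\bxi\psi$ is a bijection of $\D(U)$ onto itself; hence $\D(U_{A_U})=\D(U)$ and $U_{A_U}\phi=U\phi$ for every $\phi\in\D(U)$, giving $U_{A_U}=U$. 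The hardest part of the plan is step (ii): the quaternionic non-commutativity obstructs the classical complex-case cancellations, and the symmetry and anti-symmetry identities survive only by virtue of the strong commutation $\bfrakq U=U\bfrakq$ together with careful bookkeeping via Proposition \ref{lft_mul}(d).
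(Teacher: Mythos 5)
Your proposal is correct and follows essentially the same route as the paper: the symmetry of $A_U$ via the four-term expansion of $\langle A_U\phi\mid\phi\rangle$ using isometry, the commutation $\bfrakq U=U\bfrakq$ from Proposition \ref{preqn}(a) and Proposition \ref{lft_mul}(d), and the identification $U_{A_U}=U$ from the identities $(A_U-\overline{\blam}\Iop)(\Iop-U)\psi=(\blam-\overline{\blam})\psi$ and $(A_U-\blam\Iop)(\Iop-U)\psi=(\blam-\overline{\blam})U\psi$ followed by cancelling the invertible factor $\blam-\overline{\blam}$. The only difference is that you additionally check that $\D(A_U)$ is invariant under left multiplication by $\bi,\bj,\bk$ and that $\bi A_U,\bj A_U,\bk A_U$ are anti-symmetric, conditions needed for the Cayley transform of $A_U$ to be defined in the sense of the paper but left unverified in the paper's own proof.
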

\begin{proof}
 It is immediate that $A_U$ is a right linear operator. Take $\phi\in\D(A_U)$, then $\phi=(\Iop-U)\psi$ for some $\psi\in\D(U)$. Since $U$ is isometric, we  have
\begin{eqnarray*}
\langle A_U\phi\mid\phi\rangle&=&\langle A_U(\Iop-U)\psi\mid(\Iop-U)\psi\rangle\\
&=&\langle (\blam\Iop-\overline{\blam}U)\psi\mid(\Iop-U)\psi\rangle\\
&=& \langle\blam\Iop\psi\mid\Iop\psi\rangle+\langle\overline{\blam}U\psi\mid U\psi\rangle-\langle\blam\Iop\psi\mid U\psi\rangle-\langle\overline{\blam}U\psi\mid\Iop\psi\rangle.
\end{eqnarray*}
By (a) in Proposition (\ref{preqn}), we get
$$\langle A_U\phi\mid\phi\rangle=\langle\blam\Iop\psi\mid\Iop\psi\rangle+\langle U\overline{\blam}\psi\mid U\psi\rangle-\langle\blam\Iop\psi\mid U\psi\rangle-\langle\overline{\blam}U\psi\mid\Iop\psi\rangle.$$
Since $U$ is isometric and using (d) in Proposition \ref{lft_mul}, we have
\begin{eqnarray*}
\langle A_U\phi\mid\phi\rangle&=&\langle\blam\psi\mid\psi\rangle+\langle \psi\mid \blam\psi\rangle-\langle\blam\psi\mid U\psi\rangle-\langle U\psi\mid\blam\psi\rangle\\
&=&2\mbox{\textbf{Re}}[\langle\blam\psi\mid\psi\rangle-\langle\blam\psi\mid U\psi\rangle].
\end{eqnarray*}
Thus $\langle A_U\phi\mid\phi\rangle\in\mathbb{R}$, for all $\phi\in\D(A_U)$. Hence by (\ref{N-S sym}), $A_U$ is symmetric. Since $\D(A_U)=\text{ran}(\Iop-U)$ is dense in $V_{\quat}^R$, $A_U$ is densely defined. Furthermore, for any $\psi\in\D(U)$, using (\ref{Cay_inv2}), we derive the equalities:
$$(A_U-\overline{\blam}\Iop)(\Iop-U)\psi=(\blam-\overline{\blam})\psi\mbox{\,~~and~~\,}(A_U-\blam\Iop)(\Iop-U)\psi=(\blam-\overline{\blam})U\psi.$$
Using point (a) in Proposition (\ref{preqn}), these two equalities imply that
\begin{equation}
U(\blam-\overline{\blam})\psi=(A_U-\blam\Iop)(A_U-\overline{\blam}\Iop)^{-1}(\blam-\overline{\blam})\psi.
\end{equation}
For any $\psi\in\D(A)$, one can choose $\vartheta=(\blam-\overline{\blam})^{-1}\psi$ since $(\blam-\overline{\blam})\neq0$. Since $\vartheta\in\D(A)$, it follows that
$$U(\blam-\overline{\blam})\vartheta=(A_U-\blam\Iop)(A_U-\overline{\blam}\Iop)^{-1}(\blam-\overline{\blam})\vartheta.$$
Thereby, using (c) in Proposition (\ref{lft_mul}), we get the desired formula
$$U\psi=(A_U-\blam\Iop)(A_U-\overline{\blam}\Iop)^{-1}\psi$$
and $\D(U)\subseteq\D((A_U-\blam\Iop)(A_U-\overline{\blam}\Iop)^{-1})$. Since
$$\D((A_U-\blam\Iop)(A_U-\overline{\blam}\Iop)^{-1})\subseteq\D((A_U-\overline{\blam}\Iop)^{-1})=\text{ran}(A_U-\overline{\blam}\Iop)=\D(U)$$ we conclude that $U=(A_U-\blam\Iop)(A_U-\overline{\blam}\Iop)^{-1}$. That is, $U$ is the Cayley transform of $A_U$. Hence the result follows.
\end{proof}
In the complex case, it is enough to assume that $U$ is a linear isometry and that $\text{ran}(\Iop-U)$ is a dense subspace. But in the quaternionic case, we need some more conditions on $U$. Specifically $U$ should be a densely defined right linear symmetric operator with the property that $\bi\psi,\bj\psi,\bk\psi\in \D(U)$, for all $\psi\in \D(U)$ and the operators $\bi U$, $\bj U$ and $\bk U$ should be anti-symmetric.  The following theorem collects the main facts contained in the above two propositions.  In order to state the theorem, we need to introduce the following sets of  right quaternionic linear operators:
$$\mathfrak{X}=\{A~|~\bi\phi,\bj\phi,\bk\phi\in \D(A),~\forall\,\phi\in \D(A)\mbox{~~and~~}(\btau A)^\dagger=-\btau A~\forall\,\btau=\bi,\bj,\bk\},$$
$$\mathfrak{Y}=\{A~|~A\in\mathfrak{X} \mbox{~~is densely defined and symmetric~~}   \}$$
and
$$\mathfrak{Z}=\{U~|~U \mbox{~~is isometric~~}  \mbox{~~and~~} \overline{\text{ran}(\Iop-U)}=V_{\quat}^R \}.$$
\begin{theorem}\label{ess_Cay}
The Cayley transform $\mathscr{C}:\mathfrak{Y}\longrightarrow\mathfrak{Z}$ defined by $$ \mathscr{C}(A)=U_A=(A-\blam\Iop)(A-\overline{\blam}\Iop)^{-1},$$ for all $A\in\mathfrak{X}$, is a injective mapping. Its inverse is the inverse Cayley transform $\mathscr{C}^{-1}:\mathfrak{Y}\cap\mathfrak{Z}\longrightarrow\mathfrak{Y}$, defined by
$$\mathscr{C}^{-1}(U)=A_U=(\blam\Iop-\overline{\blam}U)(\Iop-U)^{-1},$$
for all $U\in\mathfrak{Y}\cap\mathfrak{Z}$.
\end{theorem}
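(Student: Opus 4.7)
The plan is to obtain this theorem as a direct assembly of Propositions \ref{Cay_Prn} and \ref{Cay_Prn1}; no fundamentally new argument is needed, only a careful verification that the domains and images match up. First I would verify that $\mathscr{C}$ is well-defined as a map into $\mathfrak{Z}$: for every $A \in \mathfrak{Y}$, Proposition \ref{Cay_Prn}(a) shows that $U_A$ is isometric, while part (b) identifies $\text{ran}(\Iop-U_A)$ with $\D(A)$, which is dense by the very definition of $\mathfrak{Y}$. Hence $U_A \in \mathfrak{Z}$.

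Next I would establish injectivity. The cleanest route is to invoke Proposition \ref{Cay_Prn}(b), which already tells us that
$$A \;=\; (\blam\Iop - \overline{\blam}\, U_A)(\Iop - U_A)^{-1}$$
for every $A \in \mathfrak{Y}$. Thus if $\mathscr{C}(A)=\mathscr{C}(B)=U$ for some $A,B\in\mathfrak{Y}$, we immediately get $A=B$. The same formula simultaneously exhibits $\mathscr{C}^{-1}$ on the image $\mathscr{C}(\mathfrak{Y})$ as the inverse Cayley transform $U\mapsto A_U$; equivalently, one could invoke Proposition \ref{Cay_Prn}(d) with two-sided inclusion.

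For the converse direction, given $U \in \mathfrak{Y}\cap\mathfrak{Z}$, Proposition \ref{Cay_Prn1} guarantees that $A_U=(\blam\Iop - \overline{\blam}U)(\Iop-U)^{-1}$ is densely defined, right linear, symmetric, and has $\mathscr{C}(A_U)=U$. Combined with the identity $\mathscr{C}^{-1}\circ\mathscr{C} = \mathrm{id}_{\mathfrak{Y}}$ obtained above, this yields the required pair of inverse relations and so completes the description of $\mathscr{C}^{-1}$ on $\mathfrak{Y}\cap\mathfrak{Z}$.

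The main obstacle --- the only step not already wrapped up by the preceding propositions --- is confirming that $A_U$ actually belongs to $\mathfrak{Y}$, i.e., that $A_U$ inherits the full $\mathfrak{X}$-structure from $U$. Stability of $\D(A_U)=\text{ran}(\Iop-U)$ under left multiplication by $\bi,\bj,\bk$ should follow from the commutation $\bfrakq U = U\bfrakq$ of Proposition \ref{preqn}(a) applied to $U$, since then $\bi(\Iop-U)\psi = (\Iop-U)(\bi\psi)$ with $\bi\psi\in \D(U)$. The anti-symmetry of $\bi A_U,\bj A_U,\bk A_U$ is the more delicate point; here I would argue directly from the defining formula for $A_U$ together with the adjoint identities \eqref{sc_mul_aj-op}, exploiting the anti-symmetry of $\bi U,\bj U,\bk U$ and the symmetry of $U$ supplied by $U\in\mathfrak{Y}$ to transfer these properties across the resolvent-type expression $(\blam\Iop-\overline{\blam}U)(\Iop-U)^{-1}$.
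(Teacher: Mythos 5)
Your proposal follows the same route as the paper, which simply combines parts (a) and (b) of Proposition \ref{Cay_Prn} with Proposition \ref{Cay_Prn1}. You are in fact somewhat more careful than the paper: your final paragraph correctly observes that Proposition \ref{Cay_Prn1} only gives that $A_U$ is densely defined, right linear and symmetric, so the full membership $A_U\in\mathfrak{Y}$ (stability of $\D(A_U)=\text{ran}(\Iop-U)$ under left multiplication by $\bi,\bj,\bk$ and anti-symmetry of $\btau A_U$) still requires the verification you sketch via Proposition \ref{preqn}(a) and \eqref{sc_mul_aj-op}, a point the paper leaves implicit.
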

\begin{proof}
The proof can be obtained from the combination (a) and (b) in Proposition \ref{Cay_Prn}, and Proposition \ref{Cay_Prn1}.
\end{proof}
\begin{remark}
To show that the class $\mathfrak{Y}\cap\mathfrak{Z}$ is non-empty and contains non-trivial elements, it is sufficient to consider $2\times 2$ matrices of the form $A_{\theta}=\left[\begin{matrix} \cos\theta & \sin\theta \\ \sin\theta & -\cos\theta\end{matrix}\right]$. They trivially belong to the class $\mathfrak{Y}\cap\mathfrak{Z}$ and they allow to construct examples of matrices of any size still belonging to the same class. In fact it is sufficient to consider matrices of size $2n\times 2n$ of the form ${\rm diag } (A_{\theta_1},\ldots,A_{\theta_n} )$ or matrices of size $2n+1\times 2n+1$ of the form ${\rm diag } (\pm 1, A_{\theta_1},\ldots,A_{\theta_n} )$. One can perform arbitrary permutations of the elements on the diagonal still obtaining matrices in the class $\mathfrak{Y}\cap\mathfrak{Z}$.
\end{remark}
Due to the non-commutativity of the quaternions, $\mathscr{C}$ cannot be bijective despite the fact that the analogous map in the complex case is bijective.\\
 We prove two corollaries of this Theorem.
\begin{corollary}
Let $A\in\mathfrak{Y}$, $U_A$ be its Cayley transform and let $U_A\in\mathfrak{Y}\cap\mathfrak{Z}$. Then $A$ is self-adjoint if and only if  $U_A$ is unitary.
\end{corollary}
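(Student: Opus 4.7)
The plan is to combine two results already established in the paper. Proposition \ref{Cay_Prn}\,(a) identifies $U_A$ as an isometry with $\D(U_A)=\text{ran}(A-\overline{\blam}\Iop)$ and $\text{ran}(U_A)=\text{ran}(A-\blam\Iop)$. Proposition \ref{csadj-gen}, applied with $\bfrakq=\blam$ (admissible because $\lambda_1,\lambda_2,\lambda_3>0$ forces $\lambda_1^2+\lambda_2^2+\lambda_3^2\neq 0$, and $A\in\mathfrak{Y}$ supplies the anti-symmetry of $\bi A,\bj A,\bk A$), yields the equivalence
$$A=A^{\dagger}\iff \text{ran}(A-\blam\Iop)=\text{ran}(A-\overline{\blam}\Iop)=V_{\quat}^{R}.$$
Thus both directions of the corollary reduce to translating the surjectivity of these two ranges into the unitarity of $U_A$ and conversely.

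For the forward direction I would assume $A=A^{\dagger}$, extract from Proposition \ref{csadj-gen} that both ranges equal $V_{\quat}^{R}$, and read off from Proposition \ref{Cay_Prn}\,(a) that $\D(U_A)=\text{ran}(U_A)=V_{\quat}^{R}$. It then remains to argue that an isometric bijection of $V_{\quat}^{R}$ is unitary in the sense of the Definition: the identity $U_A^{\dagger}U_A=\Iop$ follows from the polarization-type computation already employed in Proposition \ref{iso}\,(a), while $U_AU_A^{\dagger}=\Iop$ follows because the isometric estimate together with surjectivity furnishes a bounded two-sided inverse of $U_A$, which must coincide with $U_A^{\dagger}$. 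For the reverse direction I would start from the definition of unitary, which directly yields $\D(U_A)=V_{\quat}^{R}$ and, via $U_AU_A^{\dagger}=\Iop$, surjectivity, so that $\text{ran}(U_A)=V_{\quat}^{R}$ as well; passing back through the identification of $\D(U_A)$ and $\text{ran}(U_A)$ and invoking Proposition \ref{csadj-gen} then produces $A=A^{\dagger}$. The closedness clause appearing inside Proposition \ref{csadj-gen}\,(b) causes no trouble, because what is used is the equivalence between (a) and (c), and (c) is exactly what has just been established.

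The only step requiring any care is the standard but not entirely automatic passage from a surjective linear isometry on the whole quaternionic Hilbert space to unitarity in the precise sense $U_A^{\dagger}U_A=U_AU_A^{\dagger}=\Iop$ used here; the polarization identity \eqref{polar}, the isometry property, and the existence of a bounded inverse combine to yield this, much as in the complex case. Everything else reduces to bookkeeping with the definition of $U_A$ and the two cited Propositions.
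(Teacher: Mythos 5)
Your proposal is correct and follows essentially the same route as the paper: both reduce the statement to the equivalence (a)$\iff$(c) of Proposition \ref{csadj-gen} applied at $\bfrakq=\blam$, combined with the identification $\D(U_A)=\text{ran}(A-\overline{\blam}\Iop)$, $\text{ran}(U_A)=\text{ran}(A-\blam\Iop)$ from Proposition \ref{Cay_Prn}\,(a). The only difference is that you explicitly spell out the passage from ``everywhere-defined surjective isometry'' to unitarity in the sense $U_A^{\dagger}U_A=U_AU_A^{\dagger}=\Iop$, a step the paper's one-line proof leaves implicit.
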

\begin{proof}
From the Proposition (\ref{csadj-gen}), $A$ is self-adjoint if and only if $\text{ran}(A-\blam\Iop)=\text{ran}(A-\overline{\blam}\Iop)=V_{\quat}^{R}$, it is equivalent to say $U_A$ is unitary, because of (a) in Proposition (\ref{Cay_Prn}).
\end{proof}
\begin{corollary}\label{cor1}
A unitary operator $U\in\mathfrak{Y}\cap\mathfrak{Z}$ is the Cayley transform of a self-adjoint operator if and only if $\ker({\Iop-U})=\{0\}$.
\end{corollary}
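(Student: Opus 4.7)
The plan is to obtain both implications as straightforward consequences of Theorem~\ref{ess_Cay}, Proposition~\ref{Cay_Prn}, Lemma~\ref{I-U}, and the preceding corollary (which characterises self-adjointness of $A$ through unitarity of its Cayley transform $U_A$).

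For the forward direction, assume that $U$ is the Cayley transform of a self-adjoint operator $A$, so that $A \in \mathfrak{Y}$ and $U = U_A$. Proposition~\ref{Cay_Prn}(b) tells us that $\text{ran}(\Iop - U) = \D(A)$, and since $A$ is densely defined this range is dense in $V_{\quat}^R$. By Proposition~\ref{Cay_Prn}(a) the operator $U$ is isometric, so Lemma~\ref{I-U} applies and yields $\ker(\Iop - U) = \{0\}$. One can alternatively read this off from the very formula $A = (\blam \Iop - \overline{\blam} U)(\Iop - U)^{-1}$ in Proposition~\ref{Cay_Prn}(b), which presupposes that $\Iop - U$ is invertible on $\D(A) = \text{ran}(\Iop - U)$.

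For the reverse direction, assume that $U \in \mathfrak{Y} \cap \mathfrak{Z}$ is unitary and $\ker(\Iop - U) = \{0\}$, so that $(\Iop - U)^{-1}$ is well defined on $\text{ran}(\Iop - U)$. Theorem~\ref{ess_Cay} then produces
\[
A := \mathscr{C}^{-1}(U) = (\blam \Iop - \overline{\blam} U)(\Iop - U)^{-1} \in \mathfrak{Y},
\]
and the bijection stated there gives $U_A = \mathscr{C}(A) = U$. Because $U = U_A$ is unitary by hypothesis, the preceding corollary applied to $A$ concludes that $A$ is self-adjoint. Hence $U$ is the Cayley transform of the self-adjoint operator $A$.

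I do not expect a substantive obstacle: the argument is essentially bookkeeping with the machinery already assembled in Section~4. The only delicate point is in the reverse direction, where one must verify that every hypothesis required to invoke Theorem~\ref{ess_Cay} and the preceding corollary is present; in fact, since $U \in \mathfrak{Z}$ is isometric with $\overline{\text{ran}(\Iop - U)} = V_{\quat}^R$, Lemma~\ref{I-U} shows that the assumption $\ker(\Iop - U) = \{0\}$ is automatic, and the corollary can be read as simply highlighting this condition as the natural necessary and sufficient criterion.
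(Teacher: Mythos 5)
Your proof is correct and follows essentially the same route as the paper: both directions come down to Theorem \ref{ess_Cay} together with the preceding corollary, with the kernel condition recognised as automatic from the density of $\text{ran}(\Iop-U)$. The only cosmetic difference is that you derive $\ker(\Iop-U)=\{0\}$ from Lemma \ref{I-U}, whereas the paper routes it through Proposition \ref{cldop}(c); your remark that this condition holds for every $U\in\mathfrak{Z}$ is accurate and, if anything, makes the logic more explicit than the paper's rather terse proof.
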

\begin{proof}
By Theorem (\ref{ess_Cay}), there exists $A_U\in\mathfrak{Y}$ such that $\mathscr{C}^{-1}(U)=A_U$. That is, $A_U$ is densely defined. Thus $\overline{\D(A_U)}=\overline{\text{ran}(\Iop-U)}=V_{\quat}^{R}$. From (c) in Proposition (\ref{cldop}), the conclusion follows.
\end{proof}
\section{Partial Invariance of Cayley Transform}
In the previous section we defined the Cayley transform  using a left multiplication defined in terms of a fixed basis of a right quaternionic Hilbert space. However, a natural question arises: whether the defined Cayley transform is invariant under the basis change and we will show, in this section, that the invariance holds partially.

Let $\mathfrak{O}=\{\vartheta_{k}\,\mid\,k\in N\}$ be a Hilbert basis different from the basis in (\ref{b1}) for $V_{\quat}^{R}$. For any given $\bfrakq\in\quat\smallsetminus\{0\}$ define the operators $\mathcal{L}_{\bfrakq}$ and $\mathfrak{L}_{\bfrakq}$  by
\begin{equation}\label{Lcurl}
\mathcal{L}_{\bfrakq}\phi=\bfrakq\cdot\phi=\sum_{k\in N}\varphi_{k}\bfrakq\langle \varphi_{k}\mid \phi\rangle,
 \end{equation}
 and
 \begin{equation}\label{Lfrak}
 \mathfrak{L}_{\bfrakq}\phi=\bfrakq\ast\phi=\sum_{l\in N}\vartheta_{l}\bfrakq\langle \vartheta_{l}\mid \phi\rangle,
 \end{equation}
 for all $\phi\in V_{\quat}^{R}$. We would like to remind to the reader that, up to the end of section 4 what we called $\bfrakq\phi$ is now written as $\bfrakq\cdot\phi$. We wrote it in this new way for individuating it from the other left-scalar-multiplication $\bfrakq\ast\phi$. The following proposition provides some useful results.
\begin{proposition}\label{Pro_lft}
For each $\bfrakq\in\quat$,  $\mathcal{L}_{\bfrakq}=\mathfrak{L}_{\bfrakq}$ if and only if $\langle\varphi\mid\vartheta\rangle\in\mathbb{R}$, for every $(\varphi,\vartheta)\in\mathcal{O}\times\mathfrak{O}$.
\end{proposition}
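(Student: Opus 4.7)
The plan is to reduce the equality of the two left multiplications to an elementwise condition by testing the operators on basis vectors and extracting scalar relations by inner products. Both $\mathcal{L}_{\bfrakq}$ and $\mathfrak{L}_{\bfrakq}$ are bounded (by Proposition \ref{lft_mul}(b) they scale norms by $|\bfrakq|$), so any identity verified on a Hilbert basis extends by right-linearity and continuity to all of $V_{\quat}^R$. This reduction makes both directions of the biconditional manageable through direct computation with the defining series \eqref{Lcurl}--\eqref{Lfrak}.

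For the forward implication, assume $\mathcal{L}_{\bfrakq}=\mathfrak{L}_{\bfrakq}$ for all $\bfrakq\in\quat$. Evaluate both sides on an arbitrary basis vector $\vartheta_m\in\mathfrak{O}$: by orthonormality of $\mathfrak{O}$ the right-hand side collapses to $\mathfrak{L}_{\bfrakq}\vartheta_m=\vartheta_m\bfrakq$, while the left-hand side is $\sum_k \varphi_k\bfrakq\langle\varphi_k\mid\vartheta_m\rangle$. Taking the inner product with $\varphi_n\in\mathcal{O}$ (and using orthonormality of $\mathcal{O}$ together with property (iv) of the inner product), we obtain
\begin{equation*}
\bfrakq\langle\varphi_n\mid\vartheta_m\rangle=\langle\varphi_n\mid\vartheta_m\rangle\bfrakq
\end{equation*}
for every $\bfrakq\in\quat$. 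Since the center of $\quat$ is $\mathbb{R}$, this forces $\langle\varphi_n\mid\vartheta_m\rangle\in\mathbb{R}$ for every pair $(n,m)$.

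For the converse, assume $\langle\varphi_k\mid\vartheta_l\rangle\in\mathbb{R}$ for all $(k,l)$, and test the equality on each $\vartheta_m\in\mathfrak{O}$. Expanding $\vartheta_m=\sum_k \varphi_k\langle\varphi_k\mid\vartheta_m\rangle$ via Proposition \ref{P2} and using that real scalars commute with every quaternion (point (e) of Proposition \ref{lft_mul}), we get
\begin{equation*}
\mathcal{L}_{\bfrakq}\vartheta_m=\sum_{k\in N}\varphi_k\bfrakq\langle\varphi_k\mid\vartheta_m\rangle=\sum_{k\in N}\varphi_k\langle\varphi_k\mid\vartheta_m\rangle\bfrakq=\vartheta_m\bfrakq=\mathfrak{L}_{\bfrakq}\vartheta_m.
\end{equation*}
By right-linearity and the boundedness noted above, both operators coincide on the dense span of $\mathfrak{O}$ and hence on all of $V_{\quat}^R$.

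The only subtle step is the forward direction, where one must handle convergence of the series before extracting the scalar commutation relation; the key is that the inner product is continuous in each entry and right-$\quat$-linear in the second slot, so pulling $\langle\varphi_n\mid\cdot\rangle$ inside the sum is legitimate and reduces the operator identity to a pointwise identity in $\quat$. Once that is established, invoking the trivial fact that $Z(\quat)=\mathbb{R}$ finishes the argument.
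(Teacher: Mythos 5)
Your proof is correct. Note, though, that the paper does not actually prove this proposition: it simply defers to Proposition 3.1 of Ghiloni--Moretti--Perotti \cite{ghimorper}, so you have supplied a self-contained argument where the authors give only a citation. Your argument is essentially the standard one underlying that reference: test both operators on the basis $\mathfrak{O}$, where $\mathfrak{L}_{\bfrakq}\vartheta_m=\vartheta_m\bfrakq$ collapses by orthonormality, pair against $\varphi_n$ to extract $\bfrakq\langle\varphi_n\mid\vartheta_m\rangle=\langle\varphi_n\mid\vartheta_m\rangle\bfrakq$, and invoke $Z(\quat)=\mathbb{R}$; conversely, real transition coefficients commute past $\bfrakq$ and the identity on $\mathfrak{O}$ extends by right-linearity and boundedness. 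All the analytic steps you flag (continuity of the inner product to pass $\langle\varphi_n\mid\cdot\rangle$ through the absolutely convergent series, property (iv) to pull scalars out on the right, continuity of right multiplication to factor $\bfrakq$ out of the sum in the converse) are legitimate in this setting. One small point worth making explicit: the forward direction genuinely requires the hypothesis $\mathcal{L}_{\bfrakq}=\mathfrak{L}_{\bfrakq}$ for \emph{all} $\bfrakq$ (or at least for $\bfrakq=\bi,\bj,\bk$), since for a single real $\bfrakq$ the two operators always coincide; you read the quantifier in the statement this way, which is the only reading under which the proposition is true, but it deserves a remark since the statement as phrased is ambiguous.
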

\begin{proof}
An equivalent form of this statement has been proved in \cite{ghimorper}, Proposition 3.1.
\end{proof}
Next result concludes this section. Note that, in the following Theorem, for each $\btau=\bi,\bj,\bk$, $\btau\phi$ and $\btau A$ denote both $\btau\cdot\phi$ and $\btau\ast\phi$.
\begin{theorem}
Let $A:\D(A)\subseteq V_{\quat}^R\longrightarrow V_{\quat}^R$ be a densely defined right linear symmetric operator with the property that $\bi\phi,\bj\phi,\bk\phi\in \D(A)$, for all $\phi\in \D(A)$ and the operators $\bi A$, $\bj A$ and $\bk A$ are anti-symmetric. Let $U_A$ and $V_A$ be the Cayley transforms of $A$ defined by
$$U_A=(A-\blam\cdot\Iop)(A-\overline{\blam}\cdot\Iop)^{-1}, \mbox{~~with~~} \D(U_A)=\text{ran}(A-\overline{\blam}\cdot\Iop)$$
and
$$V_A=(A-\blam\ast\Iop)(A-\overline{\blam}\ast\Iop)^{-1}, \mbox{~~with~~} \D(V_A)=\text{ran}(A-\overline{\blam}\ast\Iop)$$
respectively. If $\langle\varphi\mid\vartheta\rangle\in\mathbb{R}$, for every $(\varphi,\vartheta)\in\mathcal{O}\times\mathfrak{O}$, then $U_A=V_A$.
\end{theorem}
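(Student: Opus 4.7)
The plan is to reduce the equality $U_A=V_A$ to the equality of the two scalar multiplication operators $\blam\cdot\Iop$ and $\blam\ast\Iop$ (and their conjugate counterparts), and this in turn follows immediately from Proposition \ref{Pro_lft}. The operator $A$ itself is a right $\quat$-linear operator whose definition does not involve any choice of Hilbert basis; only the constant operators $\blam\cdot\Iop=\mathcal{L}_{\blam}$ and $\blam\ast\Iop=\mathfrak{L}_{\blam}$ are basis-dependent. So once we know that these two coincide, the building blocks $A-\blam\cdot\Iop$ and $A-\blam\ast\Iop$ agree, and hence so do their inverses, domains, ranges, and compositions.

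First, I would invoke Proposition \ref{Pro_lft}: the hypothesis $\langle\varphi\mid\vartheta\rangle\in\mathbb{R}$ for every $(\varphi,\vartheta)\in\mathcal{O}\times\mathfrak{O}$ is equivalent to $\mathcal{L}_{\bfrakq}=\mathfrak{L}_{\bfrakq}$ for every $\bfrakq\in\quat$, i.e.\ $\bfrakq\cdot\phi=\bfrakq\ast\phi$ for all $\phi\in V_{\quat}^{R}$. Applying this with $\bfrakq=\blam$ and $\bfrakq=\overline{\blam}$ yields the operator identities
\begin{equation*}
\blam\cdot\Iop=\blam\ast\Iop,\qquad \overline{\blam}\cdot\Iop=\overline{\blam}\ast\Iop
\end{equation*}
on all of $V_{\quat}^{R}$. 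Note that this also shows that the three anti-symmetry hypotheses on $\bi A,\bj A,\bk A$ and the three invariance conditions $\bi\phi,\bj\phi,\bk\phi\in\D(A)$ make sense with respect to either left multiplication, consistently with the notational convention stated just before the theorem.

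Next, from these two operator equalities it follows that the densely defined operators
\begin{equation*}
A-\blam\cdot\Iop = A-\blam\ast\Iop,\qquad A-\overline{\blam}\cdot\Iop=A-\overline{\blam}\ast\Iop
\end{equation*}
coincide as maps from $\D(A)$ into $V_{\quat}^{R}$. Consequently
\begin{equation*}
\D(U_A)=\text{ran}(A-\overline{\blam}\cdot\Iop)=\text{ran}(A-\overline{\blam}\ast\Iop)=\D(V_A),
\end{equation*}
and the inverses $(A-\overline{\blam}\cdot\Iop)^{-1}$ and $(A-\overline{\blam}\ast\Iop)^{-1}$ (which exist since $\overline{\blam}\in\varPi(A)$ by Proposition \ref{Gen-Von-neq}) are equal as well. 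Composing with the identical numerators gives $U_A=V_A$.

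There is really no obstacle in this argument beyond carefully quoting Proposition \ref{Pro_lft}; the theorem says exactly that the \emph{choice of basis used to build the left multiplication} does not matter for the Cayley transform, provided the two bases are related by a real Gram matrix. The delicate (and genuinely non-trivial) point, which explains the word ``partial'' in the section title, is that the conclusion requires precisely that real-inner-product compatibility between $\mathcal{O}$ and $\mathfrak{O}$; for general pairs of bases $\mathcal{L}_{\bfrakq}\neq\mathfrak{L}_{\bfrakq}$ and one cannot expect $U_A=V_A$. Thus the theorem is sharp in that it isolates the exact compatibility condition needed.
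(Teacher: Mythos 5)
Your proposal is correct and follows exactly the route the paper takes: the paper's own proof consists of the single line ``It is straightforward from Proposition \ref{Pro_lft},'' and you have simply (and accurately) filled in the routine details of how $\mathcal{L}_{\blam}=\mathfrak{L}_{\blam}$ and $\mathcal{L}_{\overline{\blam}}=\mathfrak{L}_{\overline{\blam}}$ force the numerators, denominators, domains, and hence the composites $U_A$ and $V_A$ to coincide.
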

\begin{proof}
It is straightforward from the Proposition (\ref{Pro_lft}).
\end{proof}
%%%%%%%%%%%%%%%%%%%%%%%%%%%%%%%%%%%%%%%%%%%%%%%%%


\begin{thebibliography}{XXXX}
\bibitem{Ad} Adler, S.L., {\em Quaternionic Quantum Mechanics and Quantum Fields}, Oxford University Press, New York, 1995.

\bibitem{AC} Alpay, D., Colombo, F., Kimsey, D.P., {\em The spectral theorem for quaternionic unbounded normal operators based on the $S$-spectrum}, J. Math. Phys. {\bf 57} (2016), 023503.

\bibitem{Fab} Colombo, F., Sabadini, I., \textit{On Some Properties of the Quaternionic Functional Calculus}, J. Geom. Anal., {\bf 19} (2009), 601-627.

\bibitem{Fab1} Colombo, F., Sabadini, I., \textit{On the  Formulations of the Quaternionic Functional Calculus}, J. Geom. Phys., {\bf 60} (2010), 1490-1508.

%\bibitem{cspectrum} Colombo, F., Sabadini, I., \textit{On some notions of spectra for quaternionic operators and for n-tuples of operators}, (con F. Colombo), C. R. Acad. Sci. Paris, Ser. I, {\bf 350} (2012), 399--402.


\bibitem{NFC} Colombo, F., Sabadini, I., Struppa, D.C., {\em Noncommutative Functional Calculus}, Birkh\"auser Basel, 2011.

\bibitem{DeV} DeVito, C.L., \textit{Functional analysis and linear operator theory}, Allan M. Wylde, 1990.	

\bibitem{ghimorper} Ghiloni, R., Moretti, W. and Perotti, A., {\em Continuous slice functional calculus in quaternionic Hilbert spaces\/,} Rev. Math. Phys. {\bf 25} (2013), 1350006.

\bibitem{ghirec} Ghiloni, R., Recupero, V., {\em Slice regular semigroups},  arXiv:1605.01645.

\bibitem{BT} Muraleetharan, B., Thirulogasanthar, K., {\em Deficiency Indices of Some Classes of Unbounded $\quat$-Operators},  Complex Anal. Oper. Theory (2017), 1-29.

\bibitem{Mu} Muraleetharan, B, Thirulogasanthar, K., {\em Coherent state quantization of quaternions}, J. Math. Phys., {\bf 56} (2015), 083510.

    \bibitem{Vas0} Vasilescu , F.-H., {\em
Quaternionic Cayley transform},
J. Funct. Anal. {\bf 164} (1999),  134-162.

\bibitem{Vas} Vasilescu , F.-H., {\em Quaternionic Cayley transform revisited}, J. Math. Anal. Appl., {\bf 409} (2014), 790-807.



\bibitem{Vis} Viswanath, K., {\em Normal operators on quaternionic Hilbert spaces}, Trans. Amer. Math. Soc. {\bf 162} (1971), 337-350.

\bibitem{VonN} von Neumann J., {\em Allgemeine Eigenwerttheorie Hermitescher Funktionaloperatoren}, Math. Ann. {\bf 102} (1929-1930), 49-131.

\end{thebibliography}
\end{document}